\newtheorem{result}{Result}
\newtheorem{definition}{Definition}
\newtheorem{theorem}{Theorem}
\newtheorem{lemma}{Lemma}
\newtheorem{example}{Example}
\newcommand{\tr}{\mathrm{Tr}}
\newcommand{\var}{\mathrm{Var}}
\newcommand{\perm}{D_n}
\newcommand{\prob}{\mathrm{prob}_0}
\newcommand{\qn}{Q_n}
\begin{document}


\title{Exponential Error Suppression for Near-Term Quantum Devices}

\author{B\'alint Koczor}
\email{balint.koczor@materials.ox.ac.uk}
\affiliation{Department of Materials, University of Oxford, Parks Road, Oxford OX1 3PH, United Kingdom}


\begin{abstract}
	As quantum computers mature, quantum error correcting codes (QECs) will be adopted in order to suppress errors to any desired level $E$ at a cost in qubit-count $n$ that is merely poly-logarithmic in $1/E$. However in the NISQ era, the complexity and scale required to adopt even the smallest QEC is prohibitive. Instead, error mitigation techniques have been employed; typically these do not require an increase in qubit-count but cannot provide exponential error suppression. Here we show that, for the crucial case of estimating expectation values of observables (key to almost all NISQ algorithms) one can indeed achieve an effective exponential suppression. We introduce the Error Suppression by Derangement (ESD) approach: by increasing the qubit count by a factor of $n\geq 2$, the error is suppressed exponentially as $Q^n$ where $Q<1$ is a suppression factor that depends on the entropy of the errors. The ESD approach takes $n$ independently-prepared circuit outputs and applies a controlled derangement operator to create a state whose symmetries prevent erroneous states from contributing to expected values. The approach is therefore `NISQ-friendly' as it is modular in the main computation and requires only a shallow circuit that bridges the $n$ copies immediately prior to measurement. Imperfections in our derangement circuit do degrade performance and therefore we propose an approach to mitigate this effect to arbitrary precision due to the remarkable properties of derangements. a) they decompose into a linear number of elementary gates -- limiting the impact of noise b) they are highly resilient to noise and the effect of imperfections on them is (almost) trivial. In numerical simulations validating our approach we confirm error suppression below $10^{-6}$ for circuits consisting of several hundred noisy gates (two-qubit gate error $0.5\%$) using no more than $n=4$ circuit copies.
\end{abstract}

\maketitle

\section{Introduction}

The control of errors, also called noise, is fundamental to the successful
exploitation of quantum computers. The powerful and general theory of quantum
fault tolerance, exploiting quantum error correcting codes (QECs), provides a
theoretical blueprint for controlling errors in the era when quantum
devices are  large-scale \cite{nielsenChuang, lidar2013quantum, PhysRevA.57.127,PhysRevA.52.R2493, PhysRevA.54.1098,knill1998resilient,10.1145/258533.258579}.
Encoding qubits into collective states permits the suppression of the error rate
on {\it logical} gates to an arbitrary small level at the cost of increasing the number
of physical qubits.
Below a threshold the error suppression is exponential in the hardware scaling. However, this powerful solution is prohibitive in the current era of noisy, intermediate scale quantum (NISQ) devices for the following reasons \cite{preskill2018quantum}. (a) the qubit-count scale factor is at least $5$
for the simplest codes that protect against comprehensive noise types \cite{PhysRevLett.77.198, PhysRevA.54.3824}.
(b) the extra circuit complexity that is needed in order to monitor the stabilisers, or equivalent measures of code integrity, is very considerable and will boost the effective error rate.
(c) in order to achieve a universal set of quantum operations on code-protected logical qubits, highly-non-trivial additional measures such as magic state purification must be undertaken, greatly increasing the hardware scale. 

Here we present an approach to controlling errors that achieves the key benefit of true QEC in the specific (but pivotal) case of estimating expected values of operators, and does so without the three key drawbacks of QEC mentioned above.
The present idea requires an increased qubit-count (by some integer factor that is at least two),
and therefore it is more hardware-expensive than many NISQ error mitigation schemes~\cite{cerezo2020variationalreview, endo2020hybrid,bharti2021noisy,Li2017,PhysRevX.8.031027, kandala2018extending,PhysRevLett.119.180509, strikis2020learning, czarnik2020error, PhysRevLett.122.180501, rattew2020quantum}, but in return it provides exponential
error suppression -- which other NISQ solutions cannot.
Therefore the approach might be seen as sitting between the established NISQ-era techniques and the
full QEC domain, albeit nearer to the NISQ approaches.
Moreover the present approach is compatible with other NISQ mitigation techniques such as extrapolation, quasi-probability or symmetry
verification~\cite{endo2020hybrid,Li2017,PhysRevX.8.031027, kandala2018extending,PhysRevLett.119.180509, strikis2020learning, czarnik2020error, PhysRevLett.122.180501};
In fact, extrapolation is used in the present analysis to negate the impact of errors in the derangement process.

\subsection{Estimating Expectation Values}

Estimating expectation values on a quantum device is of central
importance and most near-term applications do need to estimate
such expectation values.
Many variants of the so-called variational quantum eigensolver have
been proposed for solving classically intractable problems, such as
simulating quantum systems described by
Hamiltonians $\mathcal{H}$~\cite{cerezo2020variationalreview, endo2020hybrid,bharti2021noisy,farhi2014quantum,peruzzo2014variational,wang2015quantum,PRXH2,PhysRevA.95.020501,mcclean2016theory,
	PhysRevLett.118.100503,Li2017,PhysRevX.8.011021,Santagatieaap9646,kandala2017hardware,kandala2018extending,
	PhysRevX.8.031022,romero2018strategies,higgott2018variational,mcclean2017hybrid,colless2017robust,kokail2019self,sharma2020noise, koczor2019variational, koczor2019quantum, koczor2020quantumAnalytic}.
Expectation values of Hamiltonian operators are typically decomposed
as $\langle\psi_{id} | \mathcal{H} | \psi_{id} \rangle = \sum_k c_k \langle\psi_{id} | P_k | \psi_{id} \rangle$,
where $P_k$ are tensor products of Pauli operators, and we will collectively denote them as
$\sigma \equiv P_k$ in the following.
Various approaches have been proposed for estimating such
expectation values $\langle\psi_{id} | \sigma | \psi_{id} \rangle$ 
using quantum computers \cite{endo2020hybrid,Li2017,XYuan2019,Crawford2019,hadfield2020measurements}.
However, without comprehensive error correction, errors during the state preparation
will contribute a bias as  $\langle\psi_k | \sigma | \psi_k \rangle$ into the result,
where $| \psi_k \rangle$ are erroneous states as shown
in the next section. There exist numerous error mitigation techniques that
potentially reduce the effect of such contributions without increasing the number $N$
of qubits, but at the cost of a significantly increased number of measurements and increased numbers
of circuit variants \cite{endo2020hybrid,Li2017,PhysRevX.8.031027, kandala2018extending,PhysRevLett.119.180509, strikis2020learning, czarnik2020error, PhysRevLett.122.180501}. Note that error mitigation techniques are also limited to
correcting errors in measurements of observables as opposed to QECs.

Here we take a different route and introduce the Error Suppression by Derangement (ESD) approach:
we introduce a high degree of symmetry by preparing $n$ copies of the quantum state
$|\psi \rangle$ and use derangement operators (generalised SWAP operations) to protect collective permutation symmetry.
Most noise events that occur during the imperfect preparation of $|\psi\rangle$ break this permutation symmetry and
they are effectively `filtered out' by the ESD.
We outline a possible construction for such a measurement process in Fig.~\ref{schem}
and thoroughly analyse its properties while supporting our claims with rigorous mathematical proofs.

A crucial element of typical NISQ applications is the accurate estimation of
expectation values of observables. The present approach allows one to exponentially
suppress errors in such estimations and thus enables to push the limits of a vast
number of promising NISQ techniques. Let us name a few potential applications: variants of the variational quantum
eigensolver for, e.g., finding ground states of molecular Hamiltonians in quantum chemistry or
spin model Hamiltonians in materials science; quantum approximate optimisations of graph problems;
quantum machine learning
and beyond. Please refer to the review articles \cite{cerezo2020variationalreview, endo2020hybrid,bharti2021noisy}
and references therein for more examples. Since the present approach is completely general and can be
applied to the estimation of any observable (as discussed above), we will present our results and proofs in complete
generality without explicitly specifying or restricting the observable $\sigma$.

Our construction is certainly
very well suited for NISQ hardware for the following reasons. First, the main computation is modular
as the $n$ copies of the computational state are prepared completely independently.
Second, the derangement circuit that `bridges' the $n$ copies immediately prior to measurements
is sufficiently shallow (as it can be decomposed into a linear
number of primitive gates) and therefore picks up significantly less noise than the 
state-preparation stage. Third, the derangement measurement is highly resilient to noise,
since most error events that occur during the derangement process do not contribute to the result.
Let us now introduce basic concepts and explain the main idea in detail.

\begin{figure}[tb]
	\begin{centering}
		\includegraphics[width=0.45\textwidth]{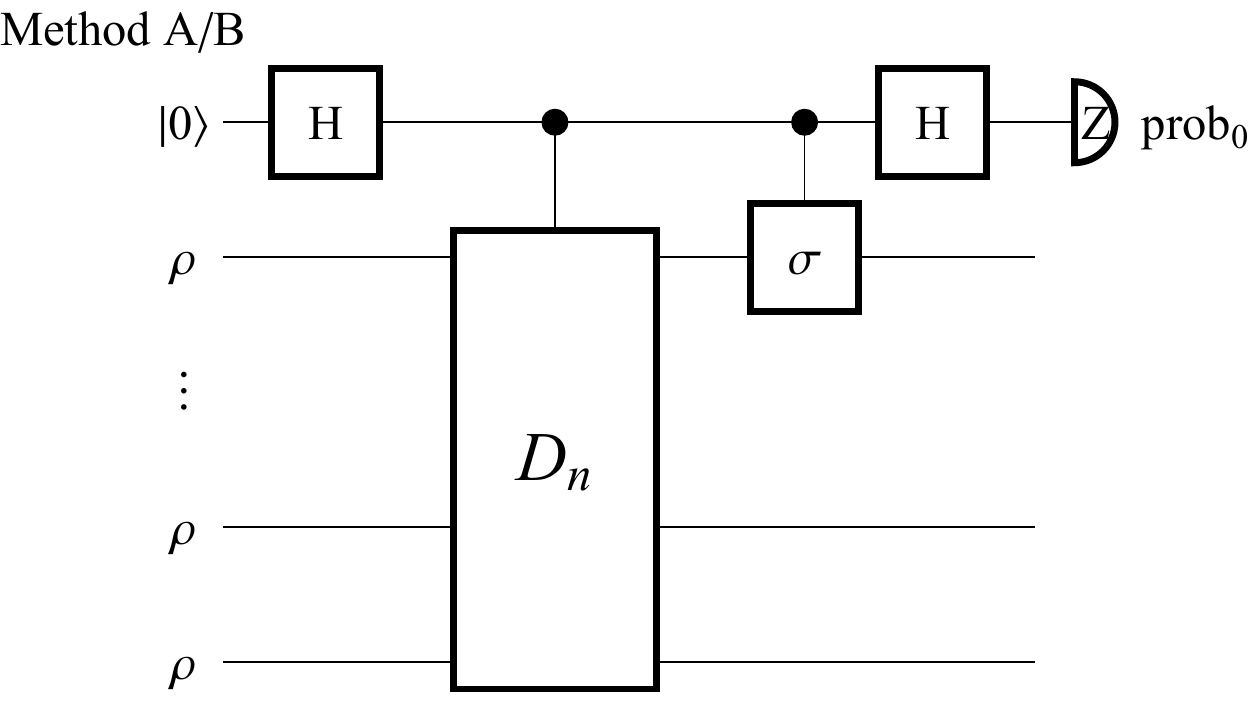}\\[5mm]
		\caption{
			A possible implementation of the ESD approach.
			Our derangement operator $D_n$ is a generalisation of the SWAP operator
			and acts on $n$ (not necessarily identical) copies of the
			quantum state $\rho = \lambda | \psi \rangle \langle \psi  | + (1-\lambda) \rho_{err}$.
			In the above circuit $D_n$ permutes the $n$ registers and allows only
			permutation symmetric combinations, e.g., $| \psi \rangle^{\otimes n}$,
			to contribute to the expectation-value measurement process.
			The probability of measuring the ancilla qubit in the
			$0$ state enables us to approximate the expectation value
			$\langle\psi | \sigma | \psi \rangle$
			and errors are suppressed exponentially in $n$.
			This derangement operator  can be implemented as a shallow circuit using
			a linear number $N(n-1)$ of primitive, controlled, two-qubit SWAP gates. 
			\label{schem}
		}
	\end{centering}
\end{figure}

\section{Preliminaries}

\subsection{Noisy Quantum States and Entropies}

Near-term quantum devices aim to prepare computational
quantum states $| \psi_{id} \rangle$ for, e.g., simulating other quantum systems or beyond.
These quantum devices are, however, imperfect and can only prepare noisy, mixed quantum states
which can be expressed generally via the spectral decomposition of a density matrix	
\begin{equation}\label{rhodef}
\rho = \lambda | \psi \rangle \langle \psi  | + (1-\lambda) \sum_{k=2}^{2^N} p_k | \psi_k \rangle \langle \psi_k  |.
\end{equation}
Here $\lambda \leq 1$ and $\sum_k p_k = 1$ is a probability distribution. 
It is important to recognise that the dominant eigenvector $| \psi \rangle$ above is not necessarily
equivalent to the state that one would obtain from an ideal computation; even 
purely incoherent error models result in a small coherent mismatch in the dominant eigenvector.
A comprehensive analysis of this coherent mismatch is presented in ref.~\cite{koczor2021dominant}
and strong theoretical guarantees are provided that it can be exponentially smaller
than the build-up of the erroneous contributions $ | \psi_k \rangle$.
In the following we thus focus on estimating expectation values in the dominant eigenvector,
while advantages of this approach are discussed below the Acknowledgements.

We further stress that in principle $\lambda$ can be arbitrarily small, e.g., $\lambda=10^{-6}$,
as long as it is the dominant component and larger then any other eigenvalue
as $\lambda > (1-\lambda) p_k$ for all $k$.
Although, for extremely low $\lambda$ other factors such as the sampling cost
may of course become prohibitive in practice as we discuss in later text.

Furthermore, $p_k$ are probabilities of `erroneous' contributions $| \psi_k \rangle$,
and we will refer to these (orthonormal) states as `erroneous' eigenvectors in the following
and we denote their probability vector as $\underline{p}$.
To keep our discussion completely general we do not restrict the probability distribution $\underline{p}$
at all, but we remark that R{\'e}nyi entropies \cite{renyi1961measures} as
$$H_n(\underline{p}) := \frac{1}{1-n} \, \ln [\sum_{k=2}^{2^N} p^n_k  ]$$
will have a crucial effect on the efficacy of the technique and, indeed, for typical experimental quantum systems one can expect
that $H_n(\underline{p})$ are large.

\subsection{Main Idea}

As discussed above, most applications targeting early
quantum devices aim to estimate expectation values  $\langle\psi_{id} | \sigma | \psi_{id} \rangle$
in a quantum state $| \psi_{id} \rangle$ prepared by an ideal noiseless quantum device. Measuring expectation values in the
dominant eigenvector $\langle\psi | \sigma | \psi \rangle$ from Eq.~\ref{rhodef} would give in practical scenarios a very good
approximation \cite{koczor2021dominant}, however, erroneous eigenvectors during state preparation contribute bias $\langle\psi_k | \sigma | \psi_k \rangle$
to the estimated expectation values. Here we aim to suppress these contributions
via the following novel principle. Let us prepare $n$ copies of the state $\rho$
from Eq.~\eqref{rhodef}.
The most likely event during state preparation is that we obtain the dominant eigenvector of the state:
with a probability $\lambda^n$ the resulting state (immediately after state preparation) is $|\psi, \psi,  \dots \psi \rangle$.
Measuring
the expectation value \emph{on the first} register gives the desired result
$\langle \psi, \dots \psi, \psi | \sigma \psi, \psi,  \dots \psi \rangle = \langle\psi | \sigma | \psi \rangle$.

In complete generality, under arbitrary noise models, the second most likely event is that one of the registers,
for example the first register,
is found in the orthogonal erroneous eigenvector of the density matrix $| \psi_k \rangle$;
A measurement then returns the error term
$\langle \psi_k, \dots \psi, \psi | \sigma \psi_k, \psi,  \dots \psi \rangle  = \langle\psi_k | \sigma | \psi_k \rangle$.
However, if one instead measures the expectation value of the product $\sigma \, \mathrm{SWAP}_{1n}$, where
$\mathrm{SWAP}_{1n}$ swaps the registers $1$ and $n$, we then obtain
\begin{equation*}
 \langle \psi_k, \dots \psi, \psi | \sigma \psi, \psi,  \dots \psi_k \rangle=
 \langle\psi_k | \sigma | \psi \rangle \langle\psi_k  | \psi \rangle =0.
\end{equation*}
Here the SWAP operator changed the ordering of the registers 
as $ | \psi_k, \psi,  \dots \psi \rangle  \rightarrow | \psi, \psi,  \dots \psi_k \rangle  $ and
the result is $0$
due to the orthogonality of the eigenvectors of the density matrix.
We can straightforwardly generalise this idea to the case where
all registers are swapped, allowing only permutation-symmetric
states to contribute to the measurement of expectation values. We will refer to this
permutation operation as `derangement'. Let us emphasise that the above argument
is completely general and holds for any	noise model.
While one can certainly realise the
above measurement principle in various different ways, we propose one such
circuit in Fig.~\ref{schem}. We rigorously
prove properties of this particular construction in Result~\ref{result1},
Result~\ref{result2} and Result~\ref{result3}, but we stress that the
current proposal is not limited to the circuit in Fig.~\ref{schem}
(and even Fig.~\ref{schem} leaves room for various different physical
implementations which we discuss in later text).

\section{Results}

\subsection{Exponential Error Suppression}

Let us now formally state the main result of the present work.
In particular, the  circuit in
Fig.~\ref{schem} can be thought of as a Hadamard-test technique \cite{nielsenChuang}
that measures the expectation value of the product $\sigma D_n$, where the derangement operator
$D_n$ permutes the $n$ input registers; as we will explain in a later section
and discuss that it only requires a linear number of primitive gates to construct.
We prove in Theorem~\ref{theo1}
that only permutation-symmetric combinations can pass through the derangement measurement
in Fig.~\ref{schem},
such as the dominant eigenvector $| \psi \rangle^{\otimes n}$
(which happens with a probability $\lambda^n$) or states in which
the same errors occured to all registers $| \psi_k \rangle^{\otimes n}$
(which happen with probabilities $(1-\lambda)^n p_k^n$).
Our general result in Theorem~\ref{theo1} determines the probability $\prob$ of measuring the
ancilla qubit in Fig.~\ref{schem} in the $0$ state as
\begin{align*}\label{prob0}
2 \mathrm{prob}_0 -1 &= \tr[\rho^n \sigma] 
\\ &= \lambda^n \langle\psi |\sigma | \psi \rangle
+  (1-\lambda)^n\sum_{k=2}^{2^N} p_k^n  \langle\psi_k |\sigma | \psi_k \rangle,
\end{align*}
where the erroneous contributions $\langle\psi_k |\sigma | \psi_k \rangle$ are
exponentially suppressed as we increase $n$.

Dividing by $\lambda^n$ allows one to approximate the expectation value of a unitary observable
$\sigma^2 = \mathrm{Id}$, or otherwise the real part of the expected value
of a unitary operator. 
We work out two explicit results in Example~\ref{example1} and Example~\ref{example2}
that demonstrate how the above scheme allows to exponentially suppress the noise
as we increase the number $n$ of copies of $\rho$ and how its efficacy depends
on properties of the probability distribution $p_k$.
Let us now state approximation errors of Methods A and B.

\begin{figure*}[tb]
	\begin{centering}
		\includegraphics[width=0.95\textwidth]{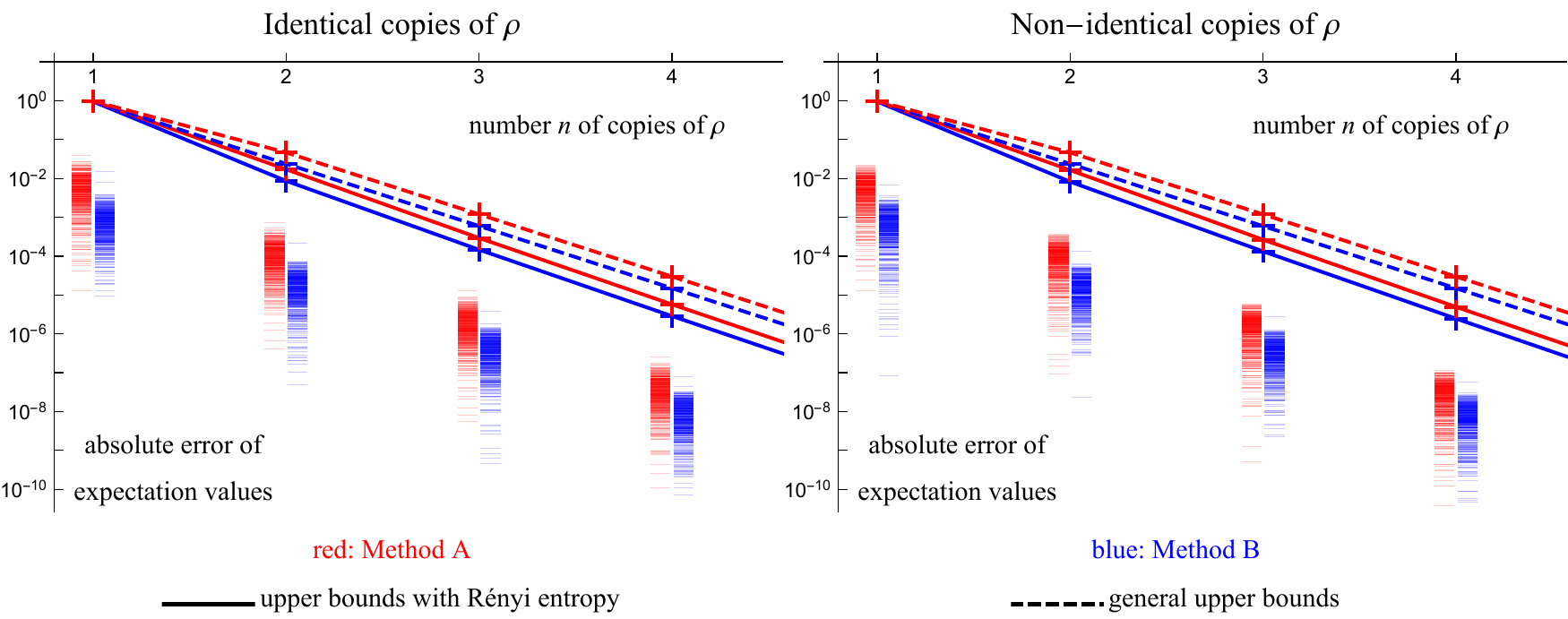}
		\caption{
			Simulation of a $12$-qubit state with $372$ noisy quantum gates.
			Errors in estimating the  expectation value in the dominant eigenvector $\langle\psi|\sigma| \psi \rangle$
			decay exponentially with the number of copies $n$ of the
			quantum state $\rho$. 
			Dashed lines: our general upper bounds on the errors from Result~\ref{result1}
			and Result~\ref{result2} only require the knowledge of the dominant eigenvalue
			$\lambda$ and the largest error probability $p_{max}$ from Eq.~\eqref{rhodef}.
			Solid lines: our upper bounds based on the R{\'e}nyi entropy of the
			quantum states.
			Red and blue bars: approximation errors
			obtained with 500 randomly selected Pauli strings (observables $\sigma$)
			are significantly below the upper bounds.
			(left) all copies of $\rho$ are perfectly identical and (right) all copies of $\rho$ are
			significantly different
			(but they commute as this case can be simulated efficiently) and their trace
			distance is $0.01$ due to their different eigenvalue distributions.
			\label{sampling}
		}
	\end{centering}
\end{figure*}

\begin{result}\label{result1}
		Let us prepare $n$ identical copies of the experimental quantum state
	 $\rho$ from from Eq.~\eqref{rhodef} and apply the derangement measurement
	 from Fig.~\ref{schem}. Both Methods A and B
	 approximate the expectation value $\langle\psi |\sigma | \psi \rangle$
	 by estimating $\prob$ on the ancilla qubit.
	 Method B only estimates $\prob$ and assumes explicit knowledge of the dominant eigenvalue $\lambda$. 
	 In Method A we additionally estimate $\mathrm{prob'}_0$ by repeating the procedure but
	 omitting the controlled-$\sigma$ gate in Fig.~\ref{schem}.
	 We denote their approximation errors as $\mathcal{E}_A$ and $\mathcal{E}_B$, respectively,
	 \begin{align}
	 \text{Method A:}& \quad \quad  \frac{2\mathrm{prob}_0-1}{2\mathrm{prob'}_0-1} = \langle\psi |\sigma | \psi \rangle + \mathcal{E}_A,\\
	 \text{Method B:}& \quad \quad \frac{2\mathrm{prob}_0-1}{\lambda^n} = \langle\psi |\sigma | \psi \rangle + \mathcal{E}_B,
	 \end{align} 
	and these approximation errors generally decay exponentially with the number $n$ of copies via the sequence $Q_n$
	\begin{equation}
	|\mathcal{E}_A| \leq \frac{ 2 \qn}{1+\qn} \quad \text{and} \quad |\mathcal{E}_B| \leq \qn,
	\end{equation}
	which is bounded $\qn \leq  \mathrm{const}\times Q^n$ via the suppression factor $Q < 1$  as established in
	Theorem~\ref{theo2} and in Lemma~\ref{renyilemma}.
\end{result}
Note that these error bounds naturally extend to observables $\mathcal{H}$ of unit
norm that are linear combinations of Pauli strings.
We shown in Lemma~\ref{renyilemma} that the errors also decay exponentially with
the R{\'e}nyi entropy $H_n(\underline{p})$ of the
error probability distribution from Eq.~\eqref{rhodef}  via
$\qn =  (\lambda^{-1}{-}1)^n \exp[ (1{-}n) H_n(\underline{p}) ]$.
Even without knowing or having a good guess of the R{\'e}nyi entropy of the error probabilities,
we can state a general upper bound that only depends on the two largest eigenvalues of the state
as $\qn \leq (\lambda^{-1}{-}1)^n (p_{max})^{n-1}$, where $p_{max}$ is the largest of the
error probabilities $p_k$ in Eq.~\eqref{rhodef}. Note that these quantities, and thus the upper bounds, may
be estimated experimentally \cite{ekert2002direct, PhysRevA.64.052311, marvian2014generalization, 9163139, PhysRevA.89.012117,christandl2007nonzero,christandl2006spectra}.

\subsection{Numerical Simulations}
Let us now numerically verify the above bounds in a practical setting:
We consider a $12$-qubit quantum state that is produced by a noisy, parametrised
quantum circuit typically used in variational quantum algorithms -- our circuit consits
of 10 alternating layers and overall $372$ quantum gates. Refer to Sec.~\ref{numerics}
for more details. Each two-qubit gate undergoes
2-qubit depolarising noise with $0.5\%$ probability and each single-qubit gate undergoes
depolarising noise with $0.05\%$ probability.
The resulting state has a dominant eigenvalue $\lambda \approx 0.51$
and it has a high entropy, full-rank error probability distribution via the R{\'e}nyi entropies
that monotonically decrease with $n$ as
$H_2(\underline{p}) = 4.69$, $H_3(\underline{p}) = 4.38$, $H_4(\underline{p}) = 4.23$,
and $H_\infty(\underline{p}) = 3.63$. Refer to Appendix~\ref{numerics} for more details.

Let us remind the reader that despite the purely incoherent error model, the dominant
eigenvector $| \psi \rangle$ of $\rho$ is slightly different than
what one would obtain from a completely error-free computation 
and in Fig.~\ref{sampling} we compute errors using the dominant eigenvector,
refer to Appendix~\ref{numerics} for more details.

In Fig.~\ref{sampling} (left) we plot our error suppression upper bounds from Result~\ref{result1},
i.e., solid lines represent the error bounds computed from the R{\'e}nyi entropy of the
quantum state's error-probability distribution and dashed lines represent the general
upper bound $\qn \leq (\lambda^{-1}{-}1)^n (p_{max})^{n-1}$ 
where the largest error probability is $p_{max}=0.026$
and the suppression factor is $Q = 0.026$.
Red and blue colours correspond to Method A and Method B, respectively.
We have generated 500 Pauli strings as observables randomly and computed
the errors in estimating their expectation values (there are overall $4^{12} = 1.68 \times 10^7$ Pauli strings,
and we randomly select $500$). These samples
(see horizontal lines in Fig.~\ref{sampling}) are significantly below our upper
bounds and seem to decrease in a similar exponential order
as our bounds (i.e., slope is similar in the logarithmic plot). 

Method B slightly outperforms Method A (slightly smaller errors as blue is slightly below red),
but it requires an exact (or very precise) knowledge of the dominant eigenvalue
$\lambda$. Nevertheless, this eigenvalue could be determined precisely by existing
approaches in special cases, e.g., as in \cite{harper2020fast}.

\subsection{Effect of Non-Identical States}

We now turn to the question of
how the efficacy of our error suppression scheme is affected when the $n$
copies of the state $\rho$ are not identical.
\begin{result} \label{result2}
	We assume that all copies of the quantum state are arbitrarily different via
	$\rho_1 \neq \rho_2 \neq \dots \rho_n$ except that their dominant eigenvector
	is $| \psi \rangle$. Our scheme via Lemma~\ref{differentlemma}
	still provides exponentially decreasing approximation errors when the
	dominant eigenvalue of the worst quality copy is $\lambda_\mathrm{min}>1/2$
	via
	\begin{align}
	\text{Method A:}& \,   \frac{2\prob-1}{2\mathrm{prob'}_0-1} = \langle\psi |\sigma | \psi \rangle + \mathcal{O}([\lambda_\mathrm{min}^{-1}{-}1]^n),\\
	\text{Method B:} &\,  \frac{2\prob-1}{\prod_{\mu=1}^n \lambda_\mu} = \langle\psi |\sigma | \psi \rangle +\mathcal{O}([\lambda_\mathrm{min}^{-1}{-}1]^n).
	\end{align}
	In the special case when all copies of the quantum state commute
	(same eigenvectors but different eigenvalues) one can expect very similar approximation
	errors to Result~\ref{result1} via an effective sequence $Q_n^{eff}$.
\end{result}

We can efficiently simulate the case when all copies of the quantum state commute.
We disturbed every copy of the
density matrix such that their trace distance is $\lVert \rho_k -\rho_l \rVert \approx 10^{-2}$
for all $k\neq l$. Note that the approximation errors in Fig.~\ref{sampling} (right) are very similar
to Fig.~\ref{sampling} (left) and they are approximately upper bounded by the same upper bounds
from Result~\ref{result1} (as expected from Lemma~\ref{differentlemma}).

\subsection{Complexity Analysis}

Let us now analyse resource requirements of
our ESD approach.
In particular, one needs to prepare a suitable number $n$ of
copies of $\rho$ in order to suppress its errors below a threshold level,
that we will refer to as precision and denote as $\mathcal{E}$.
The overall number of qubits required is then $n N+1$, where $N$
is the number of qubits in the computational state $\rho$.
Furthermore, one needs to repeat measurements many times to sufficiently reduce the
effect of so-called shot noise, i.e., we estimate the probability only from a finite
number of repetitions \cite{van2020measurement}. We denote the number of repetitions
as $N_s$.
Let us now summarise our general results
from Lemma~\ref{complexity_lemma}.
\begin{result}\label{result3}
	In order to reach a precision $\mathcal{E}$ in determining 
	the expectation value $\langle\psi |\sigma | \psi \rangle$, one
	requires a logarithmic number
	$n= \mathcal{O}(\ln\mathcal{E}^{-1}/\ln Q^{-1})$ of
	copies of the quantum state $\rho$ (up to rounding).
	Here $Q<1$ is the suppression factor from Result~\ref{result1}
	that depends on R{\'e}nyi entropies.
	The number $N_{s}$ of measurements required to suppress shot
	noise below
	the threshold $\mathcal{E}$ grows polynomially as
	\begin{align*}
	\text{Method A:}& \quad \quad N_{s} = \mathcal{O}[ \mathcal{E}^{-2 (1+2f) } ], \\
	\text{Method B:}& \quad \quad N_{s} = \mathcal{O}[ \mathcal{E}^{-2 (1+f) } ],
	\end{align*}	
	where $f = \ln [\lambda^{-1} /\ln Q^{-1}]$ increases the polynomial
	order compared to the standard shot-noise limit $\mathcal{O}(\mathcal{E}^{-2})$
	and we have derived a general upper bound on $f$ in Lemma~\ref{complexity_lemma}. 
\end{result}

Dividing by the exponentially attenuated factor $\lambda^n$ in both Methods A and B,
in Result~\ref{result1} requires an increasingly large number of measurements
to sufficiently suppress shot noise. Methods A and B are therefore less efficient than
permitted by the standard shot noise limit $N_s = \mathcal{O}(\mathcal{E}^{-2})$.
For example in the extreme, but still valid, case of $\lambda = 10^{-6}$
and $Q=1/2$ we obtain $f = 19.9$ which increases the sampling costs prohibitively in practice.
Nevertheless, the polynomial order of $\mathcal{O}(\mathcal{E}^{-1})$ is only
logarithmically increased via $f$ and its effect might be negligible in
practically relevant scenarios. For example in our simulations in Fig.~\ref{sampling} we
obtain $f =  0.18$ using our expression $Q = (\lambda^{-1}{-}1) p_{max}$ in Lemma~\ref{renyilemma}.
Indeed, we recover the standard shot-noise limit $\mathcal{O}(\mathcal{E}^{-2})$
for very good quality states $\lambda \approx 1$ or for very high entropy probabilities.

In summary, the complexity of our ESD approach only depends on the largest eigenvalue
$\lambda$ of the state and on the suppression factor $Q$ from
Result~\ref{result1} --  which is determined by the R{\'e}nyi entropy of the error probabilities.
As expected, the number $N_s$ of samples grows polynomially with the target precision
$\mathcal{E}^{-1}$ and the system size{(via $n$) grows logarithmically with $\mathcal{E}^{-1}$.
Let us remark that in case of certain applications a global prefactor in observable
expectation values does not matter
-- such as in case of VQE optimisations -- and one can use method B but omitting the division by $\lambda^n$. 
Using Method B significantly reduces the measurement costs and reduces errors from Result~\ref{result1}
when compared to Method A.

\subsection{Derangements of Quantum Registers}

Let us now discuss how to implement derangement circuits using a linearly growing number (in
$n$ and $N$) elementary gate operations. In particular, our
ESD circuit in Fig.~\ref{schem} uses a
generalisation of the SWAP operator that permutes subspaces of quantum registers.
Recall that in general there exist $n!$ permutations of a set of $n$ ordered elements.
Derangements are a subset of the collection of all permutations:
they permute the $n$ elements such that no element remains in place~\cite{roberts2009applied,sagan2013symmetric}.
We define $D_n$ in Definition~\ref{derangement_def} as unitary representations such that they permute
subspaces of $n$ quantum registers.
For example, for $n=2$ our $D_2$ reduces to the usual SWAP operator as
\begin{equation} \label{n2equation}
D_2 |\psi_1, \psi_2\rangle = \mathrm{SWAP}_{12} \, |\psi_1, \psi_2\rangle = |\psi_2, \psi_1\rangle.
\end{equation}
Note that here $\mathrm{SWAP}_{12}$ swaps the two registers, but it
decomposes into $N$ elementary SWAP operations between pairs of qubits within the registers.
For $n=3$ we have two distinct constructions for possible $D_3$ derangement operators as 
\begin{align*}
&\mathrm{SWAP}_{13} \,  \mathrm{SWAP}_{12} \, |\psi_1, \psi_2, \psi_3 \rangle = |\psi_3, \psi_1, \psi_2\rangle,\\
&\mathrm{SWAP}_{23} \, \mathrm{SWAP}_{12} \,  |\psi_1, \psi_2, \psi_3 \rangle = |\psi_2, \psi_3, \psi_1\rangle.
\end{align*}

For $n=4$ one has $6$ possibilities while in general there are $(n-1)!$ possibilities
for constructing distinct derangement operators -- but choosing any one of these
constructions is sufficient for our scheme to work. Indeed, one could construct derangements $D_n$ straightforwardly as
cyclic shifts \cite{ekert2002direct}, but
the large number of possibilities might offer more preferable
constructions that take into account, e.g,  hardware constraints
such as connectivity.
\addtocounter{footnote}{1}
\footnotetext[\value{footnote}]{
	\allowdisplaybreaks
	Please refer to the webpage
	\href{https://qtechtheory.org/derangement_circuits}{[link]}
	and to the repository \cite{git_derangement_circuits}
	for the demonstration material
}
\newcounter{footcombined}
\setcounter{footcombined}{\value{footnote}}
Please refer to~\cite{Note\thefootcombined,git_derangement_circuits}
for illustrations of the corresponding circuits.
Furthermore, we discuss in Appendix~\ref{symmetries} that the large number of symmetries 
in the derangement circuit can be exploited in order to, e.g., reduce
errors that happen during the controlled-SWAP operations.

Regarding gate complexity, derangement operators can be
implemented efficiently in general using $N(n-1)$ elementary
controlled two-qubit SWAP gates, where $N$ is the number of
qubits in the register $|\psi\rangle$ and $n$ is the number of
copies  of $|\psi\rangle$. These minimal SWAP circuits (which optimally
	implement derangement operators) can be constructed
by mapping the corresponding permutations to graph trees \cite{denes1959representation},
refer to Definition~\ref{derangement_def}.

It is important to recognise that while the number of elementary controlled-SWAP gates grows
as $\mathcal{O}(N)$, preparing the quantum state $|\psi\rangle$ generally requires
$\mathcal{O}[a(N) N]$ gates, where $a(N)$ is the
depth of the computation. It is generally expected that for practical problems one needs to go beyond
constant-depth circuits such that the number of gates in the main computation grows faster than $\mathcal{O}(N)$~\cite{bravyi2020quantum,bravyi2018quantum,niu2019optimizing,zhou2020quantum,babbush2018low}. 
Thus the gate count of the derangement circuit can be expected to be of diminishing relative significance
when scaling up computations.
Even if the controlled-SWAP operator is not a hardware-native gate, one
needs at most $6$ native entangling gates to implement the elementary controlled-SWAP operator,
refer to Table~\ref{recompile} in the Appendix.
We demonstrate this below on a practical example assuming a hardware-native gateset
and also briefly discuss connectivity constraints.

\section{Noise Robustness and Limitations}
\subsection{Mitigating Experimental Imperfections}

So far we have assumed that the
derangement operator in Fig.~\ref{schem} is perfect. Indeed, gates involved here are expected
to be noisy in a realistic
scenario which ultimately limits the precision of our approach and increases its complexity.

We show in Example~\ref{error_example} quite generally that the
derangement operator is highly resilient to experimental imperfections
and protects permutation symmetry even under experimental noise.
This is nicely illustrated in our simulated noisy circuit: the unmitigated errors in determining $\prob$ in Fig.~\ref{extrapolation} are quite low and
are below $10^{-2}$ for all $50$ randomly selected states.
The simulated circuit consist of $13$ qubits, i.e, $3$ copies
of a $4$ qubit state, and elementary controlled-SWAP gates undergo 3-qubit depolarisations
with a probability $3 \times 10^{-3}$.
Refer to Appendix~\ref{numerics} for more details

Most importantly, we show  in Example~\ref{error_example} quite generally that most errors
that occur during the derangement measurement
will only trivially affect the final result by (almost) linearly attenuating the
output probability $\prob$ which can in principle be corrected by an
extrapolation.
We use extrapolation techniques~\cite{endo2020hybrid,Li2017,PhysRevX.8.031027, kandala2018extending,PhysRevLett.119.180509,cai2020multi} which typically estimate 
$\prob(\epsilon)$ at different values of $\epsilon$ and extrapolate, e.g., linearly, to
zero noise $\epsilon = 0$. Due to the high degree of noise resilience of the derangement operator, 
the measurement probabilities $\prob(\epsilon)$ are closely approximated
by a linear function in $\epsilon$ and Fig.~\ref{extrapolation}
illustrates that indeed a linear extrapolation surprisingly well approximates the ideal
probability with errors less than $10^{-4}$. 

Here we aim to suppress errors arbitrarily by accounting for the slight non-linearity of the
function $\prob(\epsilon)$.
We prove in Theorem~\ref{pade_theo} quite generally that expectation values
are \emph{exactly} described by degree-$\nu$ polynomials
as $\prob(\epsilon) = \sum_{k=0}^\nu c_k \epsilon^k$
and $\nu$ is the number of noisy gates. It follows that one can in principle determine
the ideal probability by determining
$\prob(\epsilon)$ at $\nu+1$ different values of $\epsilon$
and fitting a degree $\nu$ polynomial.
Fig.~\ref{extrapolation} (blue circles) demonstrates how the
extrapolation error decreases exponentially with the degree of the fitted
polynomial.

\begin{figure}[tb]
	\begin{centering}
		\includegraphics[width=0.48\textwidth]{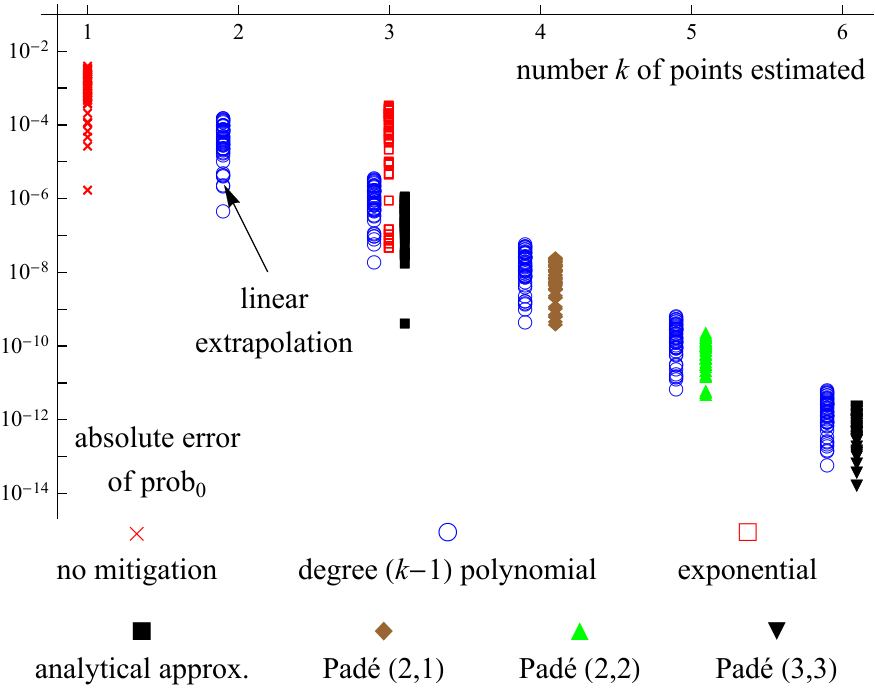}
		\caption{Mitigating errors in the derangement operator.
			Extrapolation errors using various different fitting techniques
			vs. the number of fitting points for $50$ randomly selected
			ansatz states. 	
			Elementary gates in the derangement operator in Fig.~\ref{schem} 
			have an error rate $\epsilon=10^{-3}$ and an experimentalist
			can increase this error in $k=2,3,4\dots$ steps up to $\epsilon=10^{-2}$.
			The probability $\prob$ from Fig.~\ref{schem} at $\epsilon=0$ is estimated 
			by extrapolating to $\epsilon=0$.
			The derangement measurement is highly resilient to imperfections (see text)
			and $\prob(\epsilon)$ is almost linear in $\epsilon$.
			Increasing the degree of the fitting polynomial (blue circles) reduces
			the extrapolation error exponentially.
			\label{extrapolation}
		}
	\end{centering}
\end{figure}

Furthermore, we analytically solve the dependence on $\epsilon$ in the limiting
case of a large number of gates and obtain the approximation
\begin{equation*}
\prob(\epsilon) \approx  \prob  - \tilde{\eta} \epsilon	\, \frac{(1 - \epsilon)^{ \nu} }{ 2  \epsilon-1 }
\approx	\frac{a_1 \epsilon  + a_2 \epsilon^2 + a_3 \epsilon^3
}{ 1 + a_4 \epsilon + a_5 \epsilon^2},
\end{equation*}
where $\tilde{\eta}$ is a constant. The above $(3,3)$ Pad{\'e}
approximation of the analytical dependence can be
determined by fitting the coefficients $a_1, a_2, a_3, a_4, a_5$.
These Pad{\'e} approximations appear to slightly outperform
degree-$k$ polynomial extrapolations in Fig.~\ref{extrapolation}.
Refer to Theorem~\ref{pade_theo} for more details.

In summary, guided by analytical arguments in Example~\ref{error_example}
we propose an efficient and straightforward approach to mitigate
experimental errors that occur during the derangement circuit.
Although in realistic scenarios an experimentalist
may not be able to perfectly amplify all errors, we demonstrate below
that extrapolation techniques can still significantly reduce the impact of noise.
Note, however, that for an increasing number of qubits
the noise in the controlled-SWAP gates accumulates and might
attenuate the output probability $\prob$. Estimating this attenuated
probability at increased error rates---as required for extrapolation---requires
an increased number of measurements.
For example, a factor of $0.1$ attenuation threshold could be approximated via the formula
$0.1 = (1-\epsilon)^{N(n-1)}$, and at a gate error $\epsilon=10^{-3}$ it limits the maximal number
of qubits as $N(n-1) \leq 2301$ -- which is still an encouraging figure in practice.
We note that other error mitigation schemes could also be applied straightforwardly
to address errors happening during the derangement measurement.

\begin{figure*}[tb]
	\begin{centering}
		\includegraphics[width=0.95\textwidth]{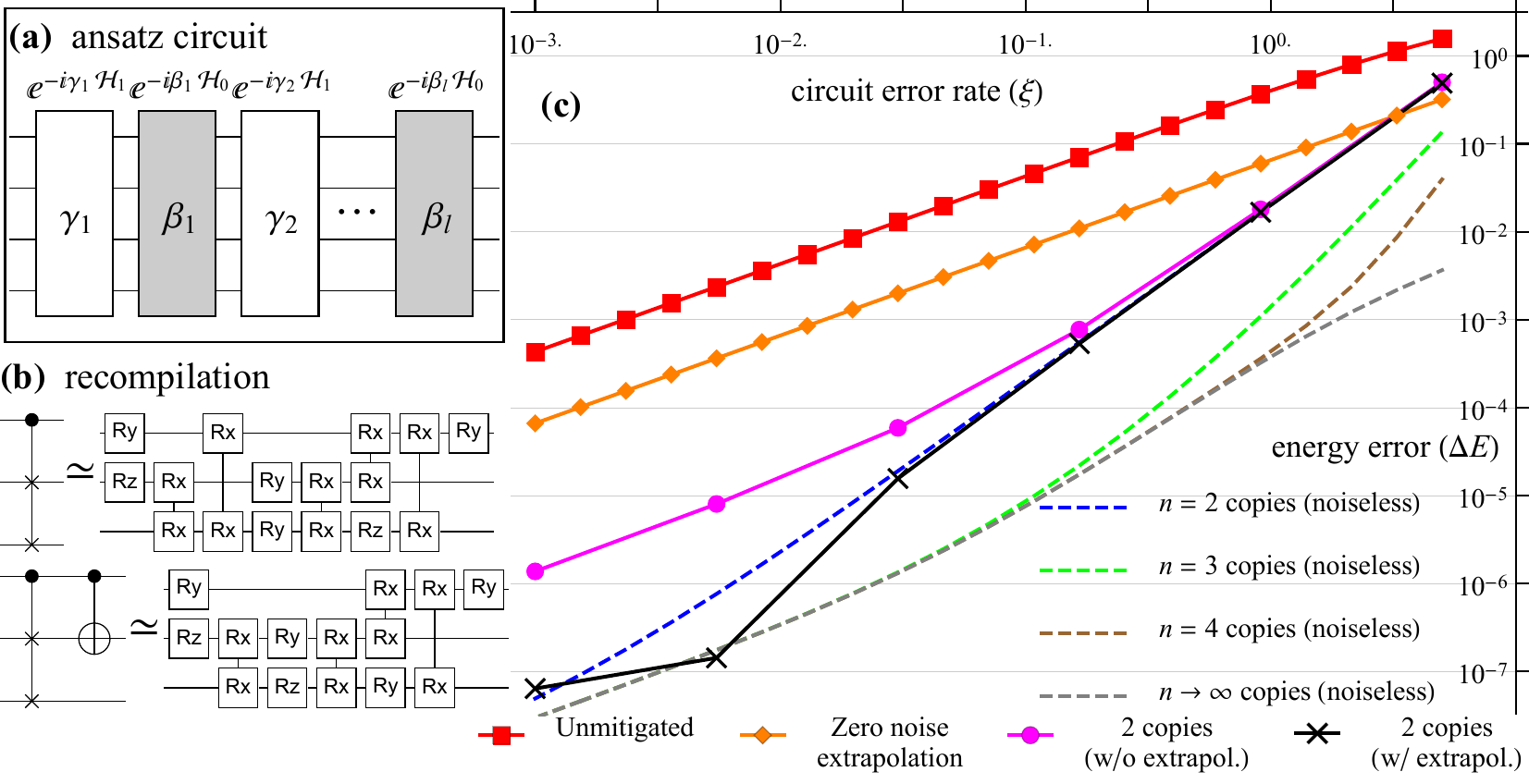}
		\vspace{-4mm}
		\caption{
				(a) Ansatz used to prepare the ground state of the Hamiltonian in Eq.~\eqref{spin_ring}
				for $N=6$ qubits.
				(b) Type B (type C) recompilation of controlled-SWAP (including observable) gates from Table~\ref{recompile}
				requires $5$ ($4$) applications of the hardware-native entangling gates.
				(c) Error in estimating the ground state energy with and without mitigation as a function
				of the number of expected errors $\xi$ in the ansatz circuit. We assume that the experimentalist
				can amplify the vast majority of the noise (94$\%$) in the hardware-native gates, but not all of it, limiting
				extrapolation to a finite precision (orange diamonds).
				Although the derangement circuit is also degraded by noise, it can still drastically reduce errors
				both in combination with (black crosses) and without extrapolation (magenta dots).
				Dashed lines correspond to $\tr[\mathcal{H} \rho^n]/\tr[\rho^n]$ as obtained via noiseless derangement circuits.
				When increasing $n$, we approach in exponential order a non-zero error (dashed grey) which is due to
				the coherent mismatch in the dominant eigenvector.
				The present demonstration on $2\times6+1$ qubits should rather be viewed as a worst-case
				scenario since increasing the scale of the computation will favour the ESD approach.
			\label{groundstate}
		}
	\end{centering}
\end{figure*}

\subsection{Limitations of the Technique}

There is one main limitation of
the present approach: In a realistic experiment one can expect that
coherent errors occur. 
As opposed to error correcting schemes, our ESD approach is completely oblivious to
these and ultimately such errors will limit precision.
Nevertheless, well-established techniques enable us to suppress these coherent errors, e.g.,
via converting them into incoherent errors by Pauli twirling \cite{PhysRevA.78.012347, PhysRevA.85.042311, cai2019constructing, cai2020mitigating}.
Furthermore, as discussed above, even incoherent noise models introduce a mismatch
in the dominant eigenvector which can be expressed via $ \sqrt{1-c} | \psi_{id} \rangle + \sqrt{c} | \psi_{err} \rangle$.
While the coherent mismatch $c$ limits the precision of the present approach,
we present a comprehensive analysis and provide strong theoretical guarantees in ref.~\cite{koczor2021dominant}
that its impact decreases when increasing the scale of the computation.
Refer also to the Appendix for an illustration how this error can be mitigated.

Furthermore, the present approach is expected to be particularly well suited for variational quantum
algorithms: First, the impact of coherent mismatch is
guaranteed to be quadratically smaller when the aim is to prepare eigenstates~\cite{koczor2021dominant}.
Second, variational algorithms are inherently robust to this kind of error as a variational optimisation
implicitly minimises the impact of coherent errors.
We also remark that in the context of variational algorithms one could slightly re-adjust
variational parameters such that the overlaps between copies $\tr[\rho_k \rho_l]$
are maximal for every $k \neq l$ -- note that measuring such overlaps is possible with the
setup in Fig.~\ref{schem}. This ensures us that the dominant eigenvector of every
copy is (close to) identical.
One could also use Clifford circuits to calibrate or validate the quantum device by
comparing to expectation values obtained from (efficient) classical simulations \cite{strikis2020learning, czarnik2020error}.

We further remark that we have also neglected the effect of measurement errors, i.e.,
when the probability of collapsing into state $0$ is biased. Nevertheless, there exist
well-established techniques for mitigating the effect of such imperfections
\cite{maciejewski2020mitigation,endo2020hybrid}.

\section{Practical Applications}

Recall that near-term quantum devices are limited to
shallow quantum circuits due to their inability to implement quantum error correction.
Nevertheless, such shallow circuits may still be of high practical value as, for example,  they may allow one
to approximate ground-state energies of Hamiltonians $\mathcal{H}$, which
cannot be estimated by other means~\cite{cerezo2020variationalreview,endo2020hybrid,bharti2021noisy,
	farhi2014quantum,peruzzo2014variational,wang2015quantum,PRXH2,PhysRevA.95.020501,mcclean2016theory}.
Let us consider a spin-ring Hamiltonian with a constant coupling $J=0.1$ and uniformly
randomly generated on-site interaction strengths $\omega_k \in [-1,1]$ as 
\begin{equation}\label{spin_ring}
	\mathcal{H} =  \sum_{k \in \text{ring}(N)} \omega_k Z_k  + J \, \vec{\sigma}_k \cdot \vec{\sigma}_{k+1},
\end{equation}
for the following reasons:
(a) this Hamiltonian is relevant in the context of condensed matter
phenomena, such as manybody localisation \cite{nandkishore2015many}, but its
ground state cannot be approximated classically for large $N$ \cite{spin_ring_hamil,childs2018toward};
(b) it has a very simple structure as well as a linearly scaling number of Pauli observables $\sigma$;
(c) it is closely related to other important Hamiltonians, cf. approximate optimisation algorithms
(QAOA) or spin systems in materials science~\cite{cerezo2020variationalreview,endo2020hybrid,bharti2021noisy, pagano2020quantum,arute2020quantum}.

We prepare the ground state via the usual variational Hamiltonian ansatz
(VHA)~\cite{cerezo2020variationalreview,endo2020hybrid,bharti2021noisy}, which
was proposed in the context of QAOA~\cite{farhi2014quantum,pagano2020quantum,arute2020quantum},
but has successfully been extended to and analysed in the context of, e.g., quantum chemistry, the Hubbard model
as well as spin systems \cite{babbush2018low,wecker2015progress,cade2020strategies,PRXQuantum.1.020319}.
It consists of alternating layers of discretised time evolutions as illustrated in Fig~\ref{groundstate}/a,
refer to the Appendix for more details.
We consider a quantum device that can natively implement single-qubit $R_y$ and $R_z$ rotation
gates as well as XX gates of the form $\exp[-i\theta X_j X_k]$ between any pairs $j \neq k$ of qubits, i.e.,
a gateset comparable to ion-trap systems \cite{pogorelov2021compact}. Such a platform can efficiently implement
the ansatz circuit of $l$ layers using $3Nl$ applications of the entangling gates.
Using general techniques of ref.~\cite{khatri2019quantumassisted} we
recompile the derangement circuit into hardware-native quantum operations.
Table~\ref{recompile} summarises the number of entangling
($\nu_e$) and single-qubit ($\nu_s$) gates required to implement
the elementary controlled-SWAP operator: we find more compact representations
than previous ones \cite{PhysRevLett.75.748,smolin1996five}.

We use $l=20$ ansatz layers such that the ground state energy in a noise-free setting could be approximated to $\Delta E \approx 10^{-4}$
and explicitly simulate $N=6$ qubits with $n=2$ copies of the noisy computational state (equivalent of a 26-qubit pure-state simulation).
We discuss in the Appendix that controlled-SWAP gates in the derangement circuit need only
be recompiled up to a local $SU(4)$ freedom as shown in Fig.~\ref{groundstate}(b),
refer also to second and third columns in Table~\ref{recompile}.
We thus need less than $5 N = 30$ entangling gates for the mitigation, which
is significantly fewer than the $3Nl = 360$ entangling gates required for the the main computation.
Fig.~\ref{groundstate}(c/red squares) shows unmitigated errors when estimating the ground state energy
of $\mathcal{H}$. We assume a noise model in which the vast majority of errors is due to dephasing
and damping (relaxation), which the experimentalist can perfectly amplify.
We additionally assume that a small depolarising noise,
approximately $6\%$ of the overall gate error rate,
affects the qubits that the experimentalist cannot amplify.
This limits extrapolation techniques \cite{endo2020hybrid,Li2017,PhysRevX.8.031027, kandala2018extending,PhysRevLett.119.180509,cai2020multi}
to a finite precision as shown in Fig.~\ref{groundstate}(c/orange diamonds).
In contrast, the present approach can suppress errors under arbitrary noise models.
Indeed, even with a noisy derangement
circuit, one can drastically reduce errors by orders of magnitude as shown in Fig.~\ref{groundstate}(c/magenta dots).

As discussed above, we can apply zero-noise extrapolation to mitigate the effect of errors in the
derangement circuit. As such, extrapolation in Fig.~\ref{groundstate}(c/black crosses)
can almost fully mitigate errors in the derangement circuit as black crosses approach the blue
dashed line, i.e., the performance of the noiseless derangement circuit $D_2$.
Thus it would be advantageous to prepare a larger number of copies $n>2$ 
to further suppress the errors as illustrated in Fig.~\ref{groundstate}(c/green and brown dashed lines).
We remark, however, that going significantly beyond $n=4$ copies may not be relevant in practice for the following reasons.
(a)
In the practically most important region with $\xi \lessapprox 1$, errors may be sufficiently suppressed
below the level of other practical factors, such as shot noise, or the approximation error $\Delta E \approx 10^{-4}$ due to
insufficient ansatz depth.
(b)
In the limit of a large number of copies, i.e., $n \rightarrow \infty$, a constant error is approached
which is due to the coherent mismatch Fig.~\ref{groundstate}(c/grey dashed line).
(c)
The region with $\xi \gtrapprox 2$ is practically inaccessible due the to rapidly increasing measurement overhead
from Result~\ref{result3} via $f=\mathcal{O}(\ln\lambda^{-1}) = \mathcal{O}(\xi)$
using that $\lambda =\mathcal{O}( e^{-\xi} )$ \cite{koczor2021dominant}.
Note that it is generally the drawback of all mitigation techniques that their measurement cost
grows exponentially with $\xi$ and becomes prohibitive when $\xi \gtrapprox 2 $\cite{endo2020hybrid}.

Let us finally emphasise that one should look at the present demonstration as a worst-case scenario
for the following reasons.
(a) Practical value is expected when computations are scaled
beyond $N>20$ qubits \cite{spin_ring_hamil,childs2018toward}, for which the
ansatz layers need to be increased beyond the present $l=20$,
e.g., refer to \cite{niu2019optimizing,zhou2020quantum,PRXQuantum.1.020319}.
This leads to an increasing ratio $r_{e}$ of the number of entangling gates in the main computation relative
to the derangement circuit as $r_{e} = \frac{3}{5} l(N)$. Here the number of layers $l(N) > \mathcal{O}(N^{0})$
needs to grow faster than a constant. 
(b) The impact of coherent mismatch in Fig.~\ref{groundstate}(c/grey dashed line) is guaranteed to
decrease as the number of gates increases \cite{koczor2021dominant}.
(c) Approximating ground states of Hamiltonians other than the one in Eq.~\ref{spin_ring} may require
more complex ansatz circuits with more rapidly growing gate counts.
For example, simulating the Hubbard model on $N=50$ qubits---one of the promising
candidates for demonstrating practical quantum advantage---requires $\approx 2\times10^4$ entangling
gates~\cite{PhysRevApplied.14.014059}
while the derangement circuit requires only a few hundred, resulting in the ratio of entangling gates as $r_e\approx10^2$.
An even more pronounced example is the case of molecular Hamiltonians in which the number of
Pauli terms may grow as $\mathcal{O}(N^4)$ \cite{ourReview}.
(d) The ansatz was optimised in a noiseless, pure-state simulation and re-optimising the
parameters may reduce the impact of coherent mismatch.
(e) In the present case we assume $94\%$ of gate errors can be amplified perfectly:
the experimentalist may only have control of a smaller fraction of errors further limiting
the precision of extrapolation techniques.

We also consider the example of a connectivity constrained architecture in the Appendix: the number
of two-qubit gates to implement the derangement circuit is increased from $5N$ to $6N$, while in the
ansatz it is increased from $3Nl$ to $9Nl$. Thus in such a
scenario connectivity constraints work in our favour. Of course, in principle specific hardware
may be fabricated to optimally accommodate the present technique as well as one may
utilise long-range links between macroscopically separate quantum processors \cite{PhysRevLett.124.110501}.

\section{Discussion and Conclusion}

This work has introduced a novel principle
for suppressing errors in near-term quantum devices. As opposed to error mitigation
techniques, our ESD approach requires an increased system size:
By preparing $n$ identical copies of a computational state, our derangement
circuit protects its permutation symmetry and suppresses errors in an
expectation value measurement exponentially (in the number $n$ of copies).
Furthermore, the ESD is very NISQ friendly, since the $n$ copies
of the computational state can be prepared completely independently
and they only need to be `bridged' by a shallow derangement circuit
immediately prior to measurement.
Furthermore, the significant advantage of the ESD approach is that it is completely oblivious to the error
model during the state preparation process and works (in principle) with arbitrarily high error
rates.
As such, the present approach could be compared to other mitigation techniques.
While quasi-probability techniques
\cite{PhysRevLett.119.180509,PhysRevX.8.031027, strikis2020learning, czarnik2020error}
may in principle be able to perfectly negate the effect of errors, they require
an exponentially growing number of circuit variants together with a perfect knowledge
of the error model. Any deviation from the assumed noise model results in errors,
which may grow exponentially with the number of gates.
Symmetry verification is another successful mitigation technique that could be used
if exploitable symmetries are present \cite{PhysRevA.98.062339,PhysRevLett.122.180501},
however, it cannot reduce errors that fall within the subspace of appropriate symmetry.
Furthermore, zero-noise extrapolation \cite{endo2020hybrid,Li2017,PhysRevX.8.031027, kandala2018extending,PhysRevLett.119.180509,cai2020multi}
can in principle be applied generally, however, the experimentalist may not
be able to perfectly amplify all errors, see Fig.~\ref{groundstate}(c). 
In contrast, the present approach can be applied completely generally in any scenario.
Note, however, that these existing mitigation techniques will be highly relevant as they can
be used in combination with the present approach, as demonstrated above.

The main limitation of the ESD approach is that it
cannot address coherent noise or a coherent mismatch in the dominant eigenvector,
although those errors can be exponentially smaller than the
incoherent decay of the fidelity and are guaranteed to
decrease when increasing the scale of the computation~\cite{koczor2021dominant}.
As long as the derangement circuit is assumed to be perfect, the sample complexity of
our ESD approach is  polynomial in the inverse precision $\mathcal{E}^{-1}$ and comparable to the standard shot-noise limit in
practically relevant scenarios, i.e., when the number of expected errors in the
main computation is below $\xi \approx 1$. Errors during the derangement process
do degrade the performance of the present approach and one needs to rely
on error mitigation techniques to reduce this impact. 
Nevertheless, it was shown above that the number of gates in the derangement
circuit is expected to become negligible relative to the main computation when scaling up computations.

Let us now briefly comment on prior approaches that similarly consider
identical copies of quantum states and similarly apply SWAP operators (or generalisations thereof).
In fact, numerous prior works have considered and exploited
the permutation symmetry of identical
copies of mixed states in the context of, e.g., reconstructing spectral properties of mixed
quantum states~\cite{ekert2002direct, PhysRevA.64.052311, marvian2014generalization, 9163139, PhysRevA.89.012117,christandl2007nonzero,christandl2006spectra},
probing their entanglement characteristics \cite{PhysRevA.68.052101, PhysRevLett.90.167901, PhysRevLett.89.127902},
for constructing universal quantum software \cite{PhysRevLett.89.190401},
and for optimal state discrimination \cite{PhysRevA.66.022112, PhysRevA.65.062320, PhysRevA.98.062318}.
Indeed, in the special case of $n=2$ copies, our scheme is comparable to a modification
of the usual SWAP-test circuit \cite{ekert2002direct}.
However, as opposed to previous works,
here we are not interested in the input mixed state $\rho$, but only in its dominant eigenvector $|\psi\rangle$
that represents a computational quantum state. In fact, we regard any other contribution 
in the state $\rho$ as `noise' which we aim to exclude from the expectation-value measurement process.
The present approach could also be compared to entanglement distillation protocols
\cite{PhysRevLett.76.722,kalb2017entanglement,deutsch1996quantum}, however, our derangement
circuit cannot exponentially improve the `quality' of the input states, but only exclude erroneous contributions
from the expectation-value measurement process.

Let us finally remark that the ESD approach leaves a lot of room for a large
number of different physical implementations, beyond the
circuit in Fig.\ref{schem} that has been analysed in detail in this work.
Our circuit in Fig.\ref{schem} is only one possible realisation
of the general principle outlined here and even this circuit
has a large number of invariants.	We only need to remark here that the results presented
here are very general, and our example circuit could certainly be improved by
combining it with advanced techniques for example, by simultaneously measuring groups of commuting
observables \cite{Crawford2019,hadfield2020measurements} -- but we expect these can only introduce constant factor improvements
and will not change the main results in this work.
In future work we will explore the numerous possibilities offered by the
general principle introduced here. 

Please also refer to the online repository~\cite{Note\thefootcombined,git_derangement_circuits} for simulation and demonstration material.

\section*{Acknowledgments}
I would like to thank Simon C. Benjamin for his invaluable comments and
challenging questions. His help and support was crucial for finalising
this work. I would like to thank Earl Campbell, Robert Zeier, Suguru Endo and Ying Li
for their very constructive and valuable comments on drafts of this work.
I acknowledge funding received from EU H2020-FETFLAG-03-2018 under Grant Agreement No. 820495
(AQTION) and the QCS Hub (EPSRC Hub grant under the agreement number EP/T001062/1)
for support including hardware provision.
The numerical modelling involved in this study made
use of the Quantum Exact Simulation Toolkit (QuEST), and the recent development
QuESTlink\,\cite{QuESTlink} which permits the user to use Mathematica as the
integrated front end. I am grateful to those who have contributed
to both these valuable tools. 

\vspace{4mm}

\emph{Note on subsequent works}---A week after I had made my
	preprint available a paper appeared on the arXiv that proposes a very similar idea \cite{huggins2020virtual},
	while mostly focusing on the $n=2$ scenario.
	The main difference is that ref.~\cite{huggins2020virtual} uses the trace distance to quantify errors thereby
	also taking into account the effect of coherent mismatch. In contrast, in Result~\ref{result1} I only quantify
	errors with respect to the dominant eigenvector, while I defer a comprehensive analysis of the
	coherent mismatch to the subsequent paper \cite{koczor2021dominant}
	for the following reasons.

	(a) Ref.~\cite{huggins2020virtual} numerically computed and plotted the
	trace distance in a comprehensive range of scenarios, and 
	noted that the bound is ``pessimistic'' as it overestimates errors.
	As such, in ref.~\cite{koczor2021dominant} I show that in most
	practical scenarios this trace distance should not be used	
	since a quadratically smaller bound exists, i.e., the square  $c$ of the trace distance $\sqrt{c}$.
	This is nicely illustrated in Fig.~\ref{groundstate}(c/gray dashed line):
	at a circuit error rate $\xi=0.1$ the actual error is $\Delta E \approx 7.5 \times 10^{-6}$
	while the bound of ref.~\cite{huggins2020virtual} is misleading as it
	is orders of magnitude larger
	$2 \sqrt{c} \lVert \mathcal{H}\rVert_{\infty}= 9.2 \times 10^{-3}$. 
	The relation between the two bounds is discussed in more detail in ref.~\cite{koczor2021dominant}.

	(b) Going beyond the ``pessimistic'' bound of
	ref.~\cite{huggins2020virtual} and realistically characterising 
	the coherent mismatch is a very complex problem
	as it is related to important themes in mathematics, such as Weyl's inequalities
	--  solving of which was a major breakthrough.
	In ref.~\cite{koczor2021dominant} I provide strong theoretical guarantees that the coherent mismatch
	can be exponentially small and decreases when increasing the scale of the computation.
	Thus practitioners need principally care about the errors with respect to measuring expectation values in
	the dominant eigenvector, cf. Fig.~\ref{groundstate}.

	(c) It is also interesting to note that Result~\ref{result1} only depends on spectral properties, i.e.,
	eigenvalues $\lambda_k$ and R{\'e}nyi entropies $H_n$,
	that may be estimated experimentally. In contrast, estimating the trace distance of ref.~\cite{huggins2020virtual}
	would require one to prepare
	the ideal, perfect, noiseless quantum state as well as the state $\rho^n/\tr[\rho^n]$, which is prohibitive.

In the interval since the present paper and then ref.~\cite{huggins2020virtual} appeared,
other studies have already reported ideas extending or varying these original concepts.
For example, ref.~\cite{cai2021quantum} introduces a generalisation of the presented
permutation-symmetry principles. Furthermore, ref.~\cite{czarnik2021qubit} proposes that the derangement circuit
$D_n := \mathrm{SWAP}_{1,n} \cdots \, \mathrm{SWAP}_{1,3} \, \mathrm{SWAP}_{1,2}$ can be realised in a
qubit-efficient manner by utilising qubit resets, thus drastically reducing resource requirements
of the present approach.


%

\onecolumngrid
\appendix
\clearpage

\section{Derangement measurements and suppressing errors}

In this section we prove that the derangement circuit in Fig.~\ref{schem}
can be used to estimate expectation values. We then prove upper bounds
on approximation errors with or without using R{\'e}nyi entropies of quantum states.
We finally prove sample complexities of our ESD approach.

\begin{definition}\label{derangement_def}
	We define the set $\mathfrak{D}_n$ of derangement operators that permute $n\geq 2$ quantum registers via
	their unitary representation as
	\begin{equation*}
	\text{every}  \quad \perm \in \mathfrak{D}_n,  \quad \text{is such that}
	\quad	\perm | \psi_1 , \psi_2 , \dots \psi_n \rangle = |\psi_{s(1)} , \psi_{s(2)}  , \dots \psi_{s(n)} \rangle.
	\end{equation*}
	Here all $s \in S_n$  are permutations of the index set $\{1,2,\dots n\}$
	with no fixed point, i.e., $s$ are derangements \cite{sagan2013symmetric,roberts2009applied}.
	Here $S_n$ denotes the symmetric group.
	For $n \geq 4$ we also
	demand that $s$ are $n$-cycles (standard cyclic permutations of maximal length \cite{sagan2013symmetric}), which are a subset of derangements.
	The number of unique ($n$-cycle) derangement operators is given as $|\mathfrak{D}_n| = (n-1)!$.
	Due to seminal results of D{\'e}nes, $s$ can be decomposed into $n-1$ transpositions \cite{denes1959representation}
	and therefore $\perm$ decomposes into $n-1$ pair-wise SWAP operators of the quantum registers.
	One can therefore construct minimal SWAP circuits by (bijectively) mapping the corresponding permutations
	performed by $\perm \in \mathfrak{D}_n$ to graph trees.
\end{definition}

\begin{theorem}	\label{theo1}
	We consider $n$ identical copies of the same quantum register $\rho$ in a separable state as $\rho^{\otimes n}$.
	Methods A and B, as illustrated in Fig.~\ref{schem}, result in the probability of
	measuring the ancilla in the 0 state as
	\begin{equation}
	\text{Method A/B:} \quad \quad \mathrm{prob}_0 = \tfrac{1}{2} + \tfrac{1}{2} \tr[\rho^n \sigma].
	\end{equation}
	Here $\sigma$ is a unitary (Hermitian) observable (or otherwise the real part of a unitary operator is estimated).
\end{theorem}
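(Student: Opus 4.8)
The plan is to recognise the circuit of Fig.~\ref{schem} as a Hadamard test for the unitary $W := \sigma \perm$, where $\sigma$ acts on the first register and $\perm\in\mathfrak{D}_n$ permutes the $n$ registers; both factors are unitary ($\sigma$ because it is a Hermitian involution, $\perm$ because it is a permutation representation), so $W$ is unitary and the test is well defined. First I would recall (or derive in two lines) the standard Hadamard-test identity: if the $n$ registers are prepared in the joint state $\rho_{\mathrm{tot}}$, the ancilla in $|0\rangle$, and one applies a Hadamard, a controlled-$\perm$ followed by a controlled-$\sigma$ (i.e.\ a controlled-$W$), and a second Hadamard to the ancilla before measuring it, then
\begin{equation*}
\prob = \tfrac12 + \tfrac12\,\re\,\tr[W\rho_{\mathrm{tot}}].
\end{equation*}
By the separability assumption $\rho_{\mathrm{tot}}=\rho^{\otimes n}$, so $\prob = \tfrac12 + \tfrac12\,\re\,\tr[\sigma\perm\,\rho^{\otimes n}]$, and the whole task reduces to evaluating this trace.

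The key step is a permutation--trace (``swap-trick'') identity: for $\perm\in\mathfrak{D}_n$, associated by Definition~\ref{derangement_def} to a permutation $s\in S_n$ consisting of a single $n$-cycle (note that for $n=2,3$ every derangement is already an $n$-cycle, so this covers all cases), one has
\begin{equation*}
\tr\bigl[\perm\,(A_1\otimes A_2\otimes\cdots\otimes A_n)\bigr] = \tr\bigl[A_{i_1}A_{i_2}\cdots A_{i_n}\bigr],
\end{equation*}
where $(i_1,i_2,\dots,i_n)$ is the cyclic order of the indices determined by $s$. I would prove this either by expanding $\perm = \sum_{\vec k}|k_{s^{-1}(1)},\dots,k_{s^{-1}(n)}\rangle\langle k_1,\dots,k_n|$ in the computational basis and collapsing the index sums against $A_1\otimes\cdots\otimes A_n$, or by induction on $n$, peeling off one pair-wise SWAP at a time via the decomposition of $\perm$ into $n-1$ transpositions guaranteed by Definition~\ref{derangement_def}.

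Putting the pieces together: by cyclicity of the trace, $\tr[\sigma\perm\,\rho^{\otimes n}] = \tr[\perm\,\rho^{\otimes n}(\sigma\otimes\mathrm{Id}^{\otimes(n-1)})] = \tr[\perm\,((\rho\sigma)\otimes\rho^{\otimes(n-1)})]$. Applying the identity above, the right-hand side is the trace of the product of these $n$ factors taken in cyclic order; since $n-1$ of them equal $\rho$ and mutually commute, and the trace is cyclic, this product can always be brought to $(\rho\sigma)\rho^{\,n-1}$, so $\tr[\sigma\perm\,\rho^{\otimes n}] = \tr[\rho^n\sigma]$ --- crucially independent of which $n$-cycle $\perm$ realises, which is exactly why any element of $\mathfrak{D}_n$ suffices. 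Finally $\tr[\rho^n\sigma]$ is real because $\rho^n$ and $\sigma$ are both Hermitian, so the $\re$ is redundant and $\prob = \tfrac12 + \tfrac12\,\tr[\rho^n\sigma]$. Methods A and B run exactly this circuit, so the formula is identical for both; the auxiliary run of Method A merely omits the controlled-$\sigma$ gate, i.e.\ sets $\sigma=\mathrm{Id}$, and yields $\mathrm{prob'}_0 = \tfrac12 + \tfrac12\,\tr[\rho^n]$ by the same computation.

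The main obstacle is purely the bookkeeping in the permutation--trace identity: fixing index/cycle conventions so the matrix product appears in the correct order, and confirming the answer does not depend on the chosen $n$-cycle. The latter is in fact automatic once the identity holds, because only a single tensor factor differs from $\rho$ and the copies of $\rho$ commute; so once the Hadamard-test reduction and the swap-trick lemma are in place the remainder is routine. One should still sanity-check the degenerate case $n=2$ ($\perm=\mathrm{SWAP}$, giving $\tr[\rho^2\sigma]$), and note that the trace manipulation uses only the register-wise factorisation $\rho^{\otimes n}$ and not that the copies are identical.
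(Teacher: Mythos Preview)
Your proof is correct and takes a genuinely different route from the paper's. The paper proceeds at the eigenvector level: it expands $\rho^{\otimes n}$ in its spectral decomposition, tracks each pure product state $|0,\psi_{k_1},\dots,\psi_{k_n}\rangle$ explicitly through the sequence Hadamard~$\to$ controlled-$\perm$~$\to$ controlled-$\sigma$~$\to$ Hadamard, and then collapses the resulting sum using the orthogonality relations $\langle\psi_{k_i}|\psi_{k_j}\rangle=\delta_{k_ik_j}$ to obtain $\sum_k p_k^n\langle\psi_k|\sigma|\psi_k\rangle=\tr[\rho^n\sigma]$. You instead work at the operator level: you first invoke the Hadamard-test identity $\prob=\tfrac12+\tfrac12\,\re\,\tr[\sigma\perm\,\rho^{\otimes n}]$ and then apply the cycle--trace (``swap-trick'') lemma $\tr[\perm(A_1\otimes\cdots\otimes A_n)]=\tr[A_{i_1}\cdots A_{i_n}]$ to collapse the tensor product directly, without ever diagonalising $\rho$. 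Your argument is shorter and more canonical, and your observation that the result is independent of the particular $n$-cycle because all but one tensor factor coincide is cleaner than the paper's remark that ``any derangement preserves the orthonormality relation''. What the paper's route buys is that the Kronecker-delta step makes the physical mechanism---non-symmetric eigenvector combinations being filtered out by orthogonality---visually explicit, which is precisely the intuition the main text builds on; it also sets up the machinery reused in Lemma~\ref{differentlemma} for non-identical copies, where the spectral viewpoint is more natural.
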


\begin{proof}
	We start by recapitulating that any density operator $\rho$ admits the following spectral decomposition 
	(note that here we use a different notation than what in the main text)
	\begin{equation}
	\rho = \sum_{k=1}^{2^N} p_k |\psi_k \rangle \langle \psi_k |,
	\quad \quad \text{thus} \quad \quad
	\rho^{\otimes n} = \sum_{k_1,k_2,\dots k_n=1}^{2^N} p_{k_1}p_{k_2}\cdots p_{k_n}  
	|\psi_{k_1}, \psi_{k_2},  \dots \psi_{k_n} \rangle \langle \psi_{k_1}, \psi_{k_2},  \dots \psi_{k_n} |,
	\end{equation}
	where the second equation is the spectral decomposition of $n$ copies of the same state.

	Recall that the action of any unitary circuit $U$ on a density matrix
	$U \rho U^\dagger$ represents a probabilistic mixture
	of its transformed eigenvectors $U |\psi_k \rangle$ that occur with probabilities $p_k$.
	Similarly the action of a unitary circuit on the composite state $\rho^{\otimes n}$ can
	be written as a probabilistic mixture of the pure-states $U |\psi_{k_1}, \psi_{k_2},  \dots \psi_{k_n} \rangle$
	that occur with probabilities as products $p_{k_1}p_{k_2}\cdots p_{k_n}$.
	
	Let us now derive the action of the unitary circuit in Fig.~\ref{schem} on the composite quantum system
	$\rho^{\otimes n}$.
	Our proof works with any derangement operator $\perm$ from Definition~\ref{derangement_def}
	but here we only need to consider one example: we consider a cyclic shift
	(as originally proposed in \cite{ekert2002direct}) of the registers 
	via its explicit action on pure states as
	$$  \perm | \psi_1 , \psi_2 , \dots \psi_n \rangle = |\psi_n , \psi_1  , \dots \psi_{n-1}\rangle. $$

	Our controlled derangement operator acts on the pure state $|0,\psi_{k_1}, \psi_{k_2},  \dots \psi_{k_n} \rangle$
	that occurs with a probability $p_{k_1}p_{k_2}\cdots p_{k_n}$, and we denote as $0$ the state of the additional ancilla qubit.
	Applying the sequence of gates from Fig.~\ref{schem} yields the following transformations of the pure states.
\begin{gather*}
|0,\psi_{k_1}, \psi_{k_2},  \dots \psi_{k_n} \rangle\\
 \Big\downarrow \mathrm{H}\\
\left( |1,\psi_{k_1}, \psi_{k_2},  \dots \psi_{k_n} \rangle + |0,\psi_{k_1}, \psi_{k_2},  \dots \psi_{k_n} \rangle \right)/\sqrt{2}\\
 \Big\downarrow \mathrm{controlled} \, \perm\\
\left(|1,\psi_{k_n}, \psi_{k_1},  \dots \psi_{k_{n-1}} \rangle + |0,\psi_{k_1}, \psi_{k_2},  \dots \psi_{k_n} \rangle \right)/\sqrt{2}\\
 \Big\downarrow \mathrm{controlled} \, \sigma\\
\left(|1,\sigma \psi_{k_n}, \psi_{k_1},  \dots \psi_{k_{n-1}} \rangle + |0,\psi_{k_1}, \psi_{k_2},  \dots \psi_{k_n} \rangle \right)/\sqrt{2}\\
  \Big\downarrow \mathrm{H}\\
 \left(|0,\sigma \psi_{k_n}, \psi_{k_1},  \dots \psi_{k_{n-1}} \rangle + |0,\psi_{k_1}, \psi_{k_2},  \dots \psi_{k_n} \rangle \right)/2
 + \dots.
\end{gather*}
	It is now straightforward to show that the probability of
	measuring the ancilla qubit in state 0 is
	\begin{equation}
	\mathrm{prob}_0 = \frac{1}{2} + \frac{1}{2} \sum_{k_1,k_2,\dots k_n=1}^{2^N} p_{k_1}p_{k_2}\cdots p_{k_n} 
	\langle \psi_{k_1}, \psi_{k_2},  \dots \psi_{k_n}	| \sigma \psi_{k_n}, \psi_{k_1},  \dots \psi_{k_{n-1}} \rangle,
	\end{equation}
	where we can simplify the inner products as
	\begin{equation} \label{orthogonality}
	\langle \psi_{k_1}, \psi_{k_2},  \dots \psi_{k_n}	| \sigma \psi_{k_n}, \psi_{k_1},  \dots \psi_{k_{n-1}} \rangle
	=\langle\psi_{k_1} |\sigma \psi_{k_n} \rangle \langle  \psi_{k_2} | \psi_{k_1} \rangle \cdots \langle \psi_{k_n} | \psi_{k_{n-1}} \rangle
	=\langle\psi_{k_1} |\sigma \psi_{k_n} \rangle \delta_{{k_2} {k_1}} \cdots \delta_{{k_n} {k_{n-1}}},
	\end{equation}
	and we have used the orthogonality of the eigenstates $| \psi_{k} \rangle$ and $\delta_{a b}$ is
	the Kroenecker delta symbol.
	At this point we remark that our proof works with any derangement operator from Definition~\ref{derangement_def}
	since these will conserve the above orthonormality relation. We remark here that the corresponding permutations
	can be mapped to graph trees, which related to the pairs of indexes $\delta_{a b}$ in the
	Kroenecker delta symbols in the above equation.

	Substituting the above results back we obtain the expression for
	the ancilla probability by using that only terms with coinciding indexes contribute to the sum via $k_1 = k_2 =\dots k_n$
	\begin{equation}
	\mathrm{prob}_0 = \frac{1}{2} + \frac{1}{2} \sum_{k=1}^{2^N} p_{k}^n \langle\psi_{k} |\sigma| \psi_{k} \rangle
	=\frac{1}{2} + \frac{1}{2}  \tr[\rho^n \sigma].
	\end{equation}

\end{proof}

\begin{example} \label{example1}
	\normalfont
	Using our definition of
	experimental quantum states from Eq.~\eqref{rhodef}, our circuit in
	Fig.~\ref{schem} can estimate the expectation value
	\begin{equation*}
	\tr[\rho^n \sigma] = \lambda^n \langle\psi |\sigma | \psi \rangle
	+  (1-\lambda)^n\sum_{k=2}^{2^N} p_k^n  \langle\psi_k |\sigma | \psi_k \rangle.
	\end{equation*}
	It is clear that the error probabilities $p_k$ are suppressed exponentially
	via $p_k^n$, but the dominant term gets slightly attenuated too via $\lambda^n$.
	For example, let us assume that our dominant eigenvalue is $\lambda=0.8$ and we have
	a high-entropy error in a subspace spanned by $100$ eigenvectors
	via the uniform distribution $p_k=(1-\lambda)/100$ when $k\leq101$ and $p_k = 0$
	when $k>101$. We then obtain the estimate
	\begin{equation*}
	\tr[\rho^n \sigma] = 0.8^n \langle\psi |\sigma | \psi \rangle
	+  \sum_{k=2}^{101} (2 \times 10^{-3})^n  \langle\psi_k |\sigma | \psi_k \rangle.
	\end{equation*}
	Since $|\langle\psi_k |\sigma | \psi_k \rangle| \leq 1 $, we can upper bound 
	the errors in $\tr[\rho^n \sigma] = 0.512 \langle\psi |\sigma | \psi \rangle + E$
	for, e.g., $n=3$ as $|E|\leq 8 \times 10^{-7}$. Hence our error contribution is at least $640000$-times
	smaller than the desired expectation value. This high degree of error suppression
	is due to the large $n=3$ R{\'e}nyi entropy of the error probabilities $p_k$ as 
	$$H_3(\underline{p}) = - \frac{1}{2} \, \ln [\sum_{k=2}^{2^N} p^n_k  ] = - \frac{1}{2} \, \ln [\sum_{k=2}^{101} (10^{-2})^3  ]
	\approx - \frac{1}{	2} \, \ln [ 10^{-4} ] \approx 4.6.$$
	We will show in Theorem~\ref{theo2} and in Lemma~\ref{renyilemma} that
	the efficiency of the error suppression depends exponentially on this R{\'e}nyi entropy.
	
	Indeed, in order to obtain an accurate estimate of $\langle\psi |\sigma | \psi \rangle$
	we need to have a good knowledge of the largest eigenvalue of the density
	matrix $\lambda$ that divides $\langle\psi |\sigma | \psi \rangle$.
	We assume in Method B in Theorem~\ref{theo2} that this eigenvalue is known precisely.
	However, in Method A we just replace our observable $\sigma$ with the identity in 
	Fig.~\ref{schem} and we directly approximate the $n^{th}$ power of the dominant eigenvalue
	$\lambda$ as 
	\begin{equation*}
	\tr[\rho^n] = 0.8^n +  \sum_{k=2}^{101} (2 \times 10^{-3})^n,
	\end{equation*}
	for $n=3$ we obtain the result as $0.8^3 + 8 \times 10^{-7} = 0.512001$,
	which is a very good estimate of $0.8^3 = 0.512$ as the error is $640000$-times
	smaller than the ideal value.
\end{example}

\begin{example}\label{example2}
	\normalfont We consider now the worst-case scenario of $0$-entropy error distributions.
	For example, let us consider the state $\rho = \lambda | \psi \rangle \langle  \psi | + (1-\lambda) | \psi_{err} \rangle \langle  \psi_{err} |$
	which is a mixture of the ideal state $ | \psi \rangle $ that occurs with a probability $\lambda$
	and an erroneous state $| \psi_{err} \rangle$ which occurs with a probability $(1-\lambda)$.
	The error probability distribution from Eq.~\eqref{rhodef} is obtained as $p_2 = 1$ and $p_k = 0$
	for $k>2$. It follows that the error distribution has a $0$ entropy and our approach
	completely breaks down when $\lambda \leq 1/2$ since the dominant eigenvector then becomes
	$| \psi_{err} \rangle$. We can show that the errors in the expectation value are still
	exponentially suppressed, but much less efficiently than before in Example~\ref{example1}.
	Let us set $\lambda=0.8$ and 
	\begin{equation*}
	\tr[\rho^n \sigma] = 0.8^n \langle\psi |\sigma | \psi \rangle
	+ 0.2^n  \langle\psi_{err} |\sigma | \psi_{err} \rangle.
	\end{equation*}
	For $n=3$ we obtain $\tr[\rho^3 \sigma] = 0.512 \langle\psi |\sigma | \psi \rangle + 0.008 \langle\psi_{err} |\sigma | \psi_{err} \rangle$,
	and therefore the error is suppressed by a factor of $64$. This is significantly
	lower that the factor of $640000$ suppression from Example~\ref{example1} which 
	assumed a high-entropy error distribution.	 
\end{example}

\clearpage

\section{Exponentially decreasing upper bounds on approximation errors}

\begin{theorem}	\label{theo2}
	We use Methods A/B from Fig.~\ref{schem} to estimate the probability
	$\mathrm{prob}_0 = \tfrac{1}{2} + \tfrac{1}{2}\tr[\rho^n \sigma]$ of the ancilla qubit.
	In Method A we use the same technique via $\sigma = \mathrm{Id}$ to estimate the
	probability $\mathrm{prob'}_0 = \tfrac{1}{2} + \tfrac{1}{2}\tr[\rho^n]$
	and our Method A yields the approximation
	\begin{equation*}
	\text{Method A:} \quad \quad 
	\frac{2\mathrm{prob}_0-1}{2\mathrm{prob'}_0-1}
	=
	\frac{\tr[\rho^n \sigma]} {\tr[\rho^n]} =  \langle\psi |\sigma | \psi \rangle + \mathcal{E}_A.
	\end{equation*}
	In Method B we assume that the largest eigenvalue $\lambda$ of the state
	$\rho$ is known and therefore we have the approximation
	\begin{equation*}
	\text{Method B:} \quad \quad 
	(2\prob-1)/\lambda^n
	=
	\tr[\rho^n \sigma] / \lambda^n =  \langle\psi |\sigma | \psi \rangle + \mathcal{E}_B.
	\end{equation*}
	The approximation errors are bounded via
	$|\mathcal{E}_A| \leq \frac{ 2 Q_n}{1+Q_n}$ and $|\mathcal{E}_B| \leq Q_n$,
	and we prove in Lemma~\ref{renyilemma} that the bounding sequence $\qn = (\lambda^{-1}-1)^n \lVert \underline{p} \rVert^n_n$
	generally decays exponentially when we increase $n$ or when we increase the R{\'e}nyi entropy
	of the probability vector $\underline{p}$.
\end{theorem}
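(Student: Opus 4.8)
The plan is to reduce everything to Theorem~\ref{theo1}, which already supplies $2\prob-1=\tr[\rho^n\sigma]$ and $2\mathrm{prob'}_0-1=\tr[\rho^n]$; the remaining work is elementary algebra on the spectral data of $\rho$ from Eq.~\eqref{rhodef}, together with two crude triangle inequalities. The only structural input needed is that a unitary Hermitian $\sigma$ has eigenvalues $\pm 1$, so $|\langle\psi_k|\sigma|\psi_k\rangle|\leq 1$ for every eigenvector of $\rho$ (and likewise $|\re\langle\psi_k|\sigma|\psi_k\rangle|\leq 1$ for a general unitary, or $|\langle\psi_k|\mathcal{H}|\psi_k\rangle|\leq\lVert\mathcal{H}\rVert_\infty=1$ for a Hermitian observable of unit norm, which is what extends the bounds to observables $\mathcal{H}$).

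First I would make the two estimators explicit: inserting Eq.~\eqref{rhodef} into Theorem~\ref{theo1} gives $\tr[\rho^n\sigma]=\lambda^n\langle\psi|\sigma|\psi\rangle+(1-\lambda)^n T$ and $\tr[\rho^n]=\lambda^n+(1-\lambda)^n S$, where I abbreviate $S:=\sum_{k=2}^{2^N}p_k^n=\lVert\underline{p}\rVert_n^n$ and $T:=\sum_{k=2}^{2^N}p_k^n\langle\psi_k|\sigma|\psi_k\rangle$, so that $|T|\leq S$ and $\tr[\rho^n]\geq\lambda^n>0$ (hence the Method~A ratio is well defined). For Method~B, dividing by $\lambda^n$ immediately yields $\mathcal{E}_B=(\lambda^{-1}-1)^n T$, whence $|\mathcal{E}_B|\leq(\lambda^{-1}-1)^n S=\qn$.

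For Method~A, forming the ratio and subtracting $\langle\psi|\sigma|\psi\rangle$ cancels the $\lambda^n\langle\psi|\sigma|\psi\rangle$ terms and leaves
\[
\mathcal{E}_A=\frac{(1-\lambda)^n\bigl(T-S\langle\psi|\sigma|\psi\rangle\bigr)}{\lambda^n+(1-\lambda)^n S}.
\]
Bounding the numerator by $|T|+S\,|\langle\psi|\sigma|\psi\rangle|\leq 2S$ and then dividing numerator and denominator by $\lambda^n$ gives exactly $|\mathcal{E}_A|\leq \frac{2\qn}{1+\qn}$ with $\qn=(\lambda^{-1}-1)^n S=(\lambda^{-1}-1)^n\lVert\underline{p}\rVert_n^n$, as claimed. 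The final assertion, that this sequence $\qn$ decays exponentially when $n$ grows or when the R\'enyi entropy $H_n(\underline{p})$ grows, is precisely the statement of Lemma~\ref{renyilemma} and so requires nothing further here.

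The main obstacle is essentially bookkeeping rather than a genuine difficulty: in the Method~A step one must recombine the numerator so that the estimate saturates at the stated form $2\qn/(1+\qn)$ rather than settling for a looser bound, and one must keep straight that the divisor is $\lambda$, the \emph{dominant} eigenvalue --- the same quantity that Method~A itself estimates by replacing $\sigma$ with $\mathrm{Id}$. Everything else is the spectral expansion of $\tr[\rho^n\sigma]$ plus the triangle inequality.
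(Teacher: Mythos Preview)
Your proposal is correct and follows essentially the same approach as the paper's own proof: expand $\tr[\rho^n\sigma]$ and $\tr[\rho^n]$ via the spectral decomposition of Eq.~\eqref{rhodef}, bound the error sum $T$ by $S=\lVert\underline{p}\rVert_n^n$ using $|\langle\psi_k|\sigma|\psi_k\rangle|\leq 1$, and then apply the triangle inequality to the Method~A numerator. Your presentation of the Method~A step (writing $\mathcal{E}_A$ directly as a single fraction with numerator $(1-\lambda)^n(T-S\langle\psi|\sigma|\psi\rangle)$) is in fact slightly cleaner than the paper's, which first divides through by $\lambda^n$ and introduces an intermediate variable $Z$, but the two computations are identical in content.
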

\begin{proof}
	Let us recapitulate the explicit form of the density matrix from Eq.~\eqref{rhodef} as
	\begin{equation}
	\rho = \lambda | \psi \rangle \langle \psi  | + (1-\lambda) \sum_{k=2}^{2^N} p_k | \psi_k \rangle \langle \psi_k  |.
	\end{equation}
	We can evaluate the expressions for the trace operation as
	\begin{align*}
	\tr[\rho^n \sigma] =  \lambda^n \langle\psi |\sigma | \psi \rangle + (1-\lambda)^n \sum_{k=2}^{2^N} p^n_k \langle\psi_k |\sigma | \psi_k \rangle,
		\quad \quad \text{and} 	\quad \quad 
	\lvert \sum_{k=2}^{2^N} p^n_k \langle\psi_k |\sigma | \psi_k \rangle \rvert
	\leq  \sum_{k=2}^{2^N} p^n_k 	 = \lVert \underline{p} \rVert^n_n,
	\end{align*}
	where we have used that $|\langle\psi_k |\sigma | \psi_k \rangle| \leq 1$ due to unitarity of $\sigma$
	and $\lVert \underline{p} \rVert_n$ is the $n$-norm of the probability vector $\underline{p}$.

	\textbf{Method B}: Here our aim is to estimate $\tr[\rho^n \sigma] $ and $\lambda^n$ is known exactly.
	The error term can be calculated via
	\begin{equation}
	\lvert \mathcal{E}_B \rvert
	=
	\lvert \frac{\tr[\rho^n \sigma]}{\lambda^n} -  \langle\psi |\sigma | \psi \rangle \rvert
	=
	 \frac{(1-\lambda)^n}{\lambda^n}  \lvert \sum_{k=2}^{2^N} p^n_k \langle\psi_k |\sigma | \psi_k \rangle \rvert
	 \leq \frac{(1-\lambda)^n}{\lambda^n} \lVert \underline{p} \rVert^n_n =: \qn,
	\end{equation}
	and here we have defined the sequence $Q_n$.
	
	\textbf{Method A}: In this case we estimate $\tr[\rho^n \sigma] $  and $\tr[\rho^n]$, and we now calculate the error term
	using that $\tr[\rho^n] = \lambda^n + (1-\lambda)^n \sum_k p_k^n = \lambda^n + (1-\lambda)^n	 \lVert \underline{p} \rVert^n_n$. Indeed, we obtain
	\begin{equation}
	\lvert \mathcal{E}_A \rvert
	=
	\lvert \frac{\tr[\rho^n \sigma]}{\tr[\rho^n]} -  \langle\psi |\sigma | \psi \rangle \rvert
	=
	\lvert \frac{
		 \langle\psi |\sigma | \psi \rangle + \lambda^{-n}(1-\lambda)^n \sum_{k=2}^{2^N} p^n_k \langle\psi_k |\sigma | \psi_k \rangle
	}{
	1 + \lambda^{-n} (1-\lambda)^n  \lVert \underline{p} \rVert^n_n 
	}
	-  \langle\psi |\sigma | \psi \rangle \rvert
	= \lvert \frac{\langle\psi |\sigma | \psi \rangle  + Z}{1 + \qn} - \langle\psi |\sigma | \psi \rangle \rvert,
	\end{equation}
	where we have used the notation $Z = \lambda^{-n}(1-\lambda)^n \sum_{k=2}^{2^N} p^n_k \langle\psi_k |\sigma | \psi_k \rangle$
	for simplicity.
	It follows that the error term is bounded via
	\begin{equation}
	\lvert \mathcal{E}_A \rvert = \lvert \frac{\langle\psi |\sigma | \psi \rangle + Z}{1 + \qn} - \langle\psi |\sigma | \psi \rangle \rvert
	=
	\lvert \frac{ Z - \qn \langle\psi |\sigma | \psi \rangle }{1 + \qn}  \rvert.
	\end{equation}
	Let us now upper bound this expression as
	\begin{equation}	
	 (1+\qn)^{-1} \lvert Z - \qn \langle\psi |\sigma | \psi \rangle \rvert 
	\leq (1+\qn)^{-1} [ \lvert Z \rvert  + \qn \lvert \langle\psi |\sigma | \psi \rangle \rvert],
	\end{equation}
	where we have used the triangle inequality as $|a-b| \leq |a|+|b|$.
	We can now use from before that $\lvert Z \rvert \leq \qn$, which results in the error term
	\begin{equation}
	\lvert \mathcal{E}_A \rvert \leq 
	\frac{\qn+\qn \lvert \langle\psi |\sigma | \psi \rangle \rvert}{1+\qn}
	= \frac{\qn}{1+\qn} (1+\lvert \langle\psi |\sigma | \psi \rangle \rvert)
	\leq\frac{2\qn}{1+\qn}.
	\end{equation}
	This concludes our proof.
	
\end{proof}

\begin{lemma}\label{renyilemma}
	The sequence $Q_n$ in our upper bounds in Theorem~\ref{theo2} decreases exponentially for a fixed $n$ when we increase the
	R{\'e}nyi entropy $H_n(\underline{p})$ of the error probability vector $\underline{p}$ from Eq.~\eqref{rhodef} as
	$\qn =  (\lambda^{-1}{-}1)^n \exp[ - (n{-}1) H_n(\underline{p}) ]$.
	Furthermore, the sequence generally decays exponentially via $\qn \leq  (p_{max})^{{-}1} Q^n$
	where we define the suppression factor $Q := (\lambda^{-1}{-}1) p_{max}<1$ and  $p_{max}$ is the largest error probability from Eq.~\eqref{rhodef}.
\end{lemma}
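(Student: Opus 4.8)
The plan is to derive both assertions directly from the closed form $\qn = (\lambda^{-1}{-}1)^n \lVert \underline{p}\rVert_n^n = (\lambda^{-1}{-}1)^n \sum_{k=2}^{2^N} p_k^n$ already established in Theorem~\ref{theo2}, so that essentially no new estimate is needed beyond unpacking the definition of the R\'enyi entropy and one elementary power-mean inequality.

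First I would rewrite the factor $\sum_{k=2}^{2^N} p_k^n$ using $H_n(\underline{p}) = \tfrac{1}{1-n}\ln[\sum_{k=2}^{2^N} p_k^n]$. Solving for the sum gives $\sum_{k} p_k^n = \exp[(1-n)H_n(\underline{p})] = \exp[-(n-1)H_n(\underline{p})]$, and substituting back yields $\qn = (\lambda^{-1}{-}1)^n \exp[-(n-1)H_n(\underline{p})]$. Since $n\geq 2$, the coefficient $-(n-1)$ of $H_n$ in the exponent is strictly negative, so for a fixed $n$ the bound $\qn$ is multiplied by $e^{-(n-1)\Delta}$ under any increase $\Delta$ of the R\'enyi entropy; this is the first claim.

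For the second claim I would bound the same sum by a crude but sharp-enough step: using that $p_k \leq p_{max}$ for every $k$ and that $\underline{p}$ is normalised ($\sum_{k=2}^{2^N} p_k = 1$, as in Eq.~\eqref{rhodef}), one has $\sum_k p_k^n = \sum_k p_k^{n-1} p_k \leq p_{max}^{n-1}\sum_k p_k = p_{max}^{n-1}$. Hence $\qn \leq (\lambda^{-1}{-}1)^n p_{max}^{n-1} = p_{max}^{-1}\,[(\lambda^{-1}{-}1)p_{max}]^n = (p_{max})^{-1} Q^n$ with $Q := (\lambda^{-1}{-}1)p_{max}$. It remains to check $Q < 1$: the spectrum of $\rho$ consists of $\lambda$ together with the numbers $(1-\lambda)p_k$, and $\lambda$ being strictly dominant means $\lambda > (1-\lambda)p_{max}$; dividing by $\lambda>0$ gives $1 > (\lambda^{-1}{-}1)p_{max} = Q$, as required.

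I do not anticipate a genuine obstacle here, since the content is algebraic. The only points requiring a little care are bookkeeping ones: making sure the probability vector $\underline{p}$ entering $\qn$ is the normalised error distribution of Eq.~\eqref{rhodef} (so that $\sum_k p_k = 1$, not $\sum_k (1-\lambda)p_k$), and invoking the correct strict-dominance hypothesis $\lambda > (1-\lambda)p_k$ stated just below Eq.~\eqref{rhodef} in order to conclude $Q<1$ rather than merely $Q\leq 1$.
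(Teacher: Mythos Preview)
Your proposal is correct and essentially matches the paper's argument. The only cosmetic difference is in the second step: you obtain $\sum_k p_k^n \leq p_{max}^{n-1}$ directly via $\sum_k p_k^{n-1}p_k \leq p_{max}^{n-1}\sum_k p_k$, whereas the paper first rewrites $Q_n$ in terms of $H_n$ and then invokes the monotonicity $H_n(\underline{p}) \geq H_\infty(\underline{p}) = -\ln p_{max}$ to reach the same inequality; your route is slightly more elementary but the content is identical.
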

\begin{proof}
	The first part of the proof straightforwardly follows by substituting the expression for the R{\'e}nyi entropy \cite{renyi1961measures}
	\begin{equation}
	H_n(\underline{p}) = \frac{1}{1-n} \, \ln [\sum_{k=2}^{2^N} p^n_k  ]
	= \frac{n}{1-n} \, \ln \lVert \underline{p} \rVert_n,
	\end{equation}
	into the expression for $Q$ as
	\begin{equation*}
	\qn = [ (\lambda^{-1}-1) \lVert \underline{p} \rVert_n ]^n=   (\lambda^{-1}-1)^n \exp[ - \tfrac{n-1}{n} H_n(\underline{p}) ]^n
	= (\lambda^{-1}{-}1)^n \exp[ - (n{-}1) H_n(\underline{p}) ].
	\end{equation*}
	This concludes the first part of our proof.
	
	Let us now prove that the sequence $\qn$ decreases in exponential order when we increase $n$.
	Using the well-known
	series of inequalities satisfied by the R{\'e}nyi entropies
	as $H_\infty(\underline{p}) \dots \leq  H_n(\underline{p}) \leq  H_{n-1}(\underline{p}) \leq \dots \leq  H_1(\underline{p})$,
	we obtain the general bound $-\ln p_{max} \leq  H_n(\underline{p})$ for all $n$,
	and we define the largest error probability $p_{max}:= \max_k p_k$.
	It follows that	
	$$\qn \leq  (\lambda^{-1}{-}1)^n (p_{max})^{n{-}1} =  (p_{max})^{{-}1} [(\lambda^{-1}{-}1) p_{max} ]^n =: (p_{max})^{{-}1} Q^n .$$
	
		The upper bound $Q<1$ holds due to our condition below Eq.~\eqref{rhodef}
		as $(\lambda-1) p_k < \lambda$ for every probability $k=\{2,3, \dots 2^N\}$.
		It follows that $p_{max} <  (\lambda^{-1}{-}1)^{-1}$ and therefore $Q = (\lambda^{-1}{-}1) p_{max}  <1 $.

\end{proof}

\begin{lemma}\label{complexity_lemma}
	Determining the expectation value $ \langle\psi |\sigma | \psi \rangle$ from Theorem~\ref{theo2} to a
	fixed precision $\mathcal{E}$ requires $n=\frac{\ln \mathcal{E}^{-1} +  \ln{[2    (p_{max})^{{-}1}   ]}   }{\ln Q^{-1}}$ 
	copies of the quantum state $\rho$ (one needs to apply the ceiling function to round this up 
		to the nearest integer). Here $Q<1$ is the
	suppression factor from Theorem~\ref{theo2} and from Lemma~\ref{renyilemma}.
	
	Reducing shot noise to the desired precision $\mathcal{E}$ requires the following number of samples.
	In Method A one needs to assign $N_{s,1}$ samples to determine $\prob$
	and $N_{s,2}$ samples to determine $\mathrm{prob'}_0$ as
	\begin{equation}
	\text{Method A:} \quad \quad	N_{s,1} = \mathcal{O}[ \mathcal{E}^{-2 (1+f) } ] = \mathrm{poly}(\mathcal{E}^{-1})
	\quad \quad \text{and} \quad \quad
	N_{s,2} = \mathcal{O}[ \mathcal{E}^{-2 (1+2f) } ] = \mathrm{poly}(\mathcal{E}^{-1}).
	\end{equation}
	The overall number of measurements required is $N_{s} = N_{s,1}+N_{s,2} = \mathcal{O}[ \mathcal{E}^{-2 (1+2f) } ]$.
	In Method B one only needs to determine $\prob$ since $\lambda^n$ is known. The
	number of samples scales as
	\begin{equation}
	\text{Method B:} \quad \quad N_s  =\mathcal{O}[ \mathcal{E}^{-2 (1+f) } ] = \mathrm{poly}(\mathcal{E}^{-1}).
	\end{equation}
	Indeed, in both cases the measurement cost grows polynomially with the inverse precision $\mathcal{E}^{-1}$ and
	its polynomial order is determined by $f:= \frac{\ln(\lambda^{-1})}{\ln (Q^{-1})}$.
	The standard shot-noise limit $\mathcal{E}^{-2 }$ is only
	slightly modified by $f$ in the case of good quality quantum states or in the case of high-entropy
	probabilities.
\end{lemma}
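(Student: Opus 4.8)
The statement has two essentially independent halves — the required number $n$ of copies, and the shot-noise complexity — and I would prove them in that order. Part~1 is just an inversion of the exponential bound already established: by Theorem~\ref{theo2} and Lemma~\ref{renyilemma}, $|\mathcal{E}_A|\le 2\qn\le 2(p_{max})^{-1}Q^n$ with $Q<1$, and $|\mathcal{E}_B|\le\qn$ is smaller still. Imposing $2(p_{max})^{-1}Q^n\le\mathcal{E}$ and taking logarithms gives $n \ge (\ln\mathcal{E}^{-1}+\ln[2(p_{max})^{-1}])/\ln Q^{-1}$; the smallest admissible integer is the ceiling of the right-hand side, and since $p_{max}$ is a fixed constant of the state the leading behaviour is $n=\mathcal{O}(\ln\mathcal{E}^{-1}/\ln Q^{-1})$, matching Result~\ref{result3}.

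\textbf{Part~2 (sample complexity, Method~B).} Here I would use the elementary fact that a probability estimated from $M$ ancilla repetitions carries standard error at most $1/(2\sqrt M)$ (equivalently, a Hoeffding/Chernoff bound gives $M=\mathcal{O}(\mathcal{E}^{-2})$ shots for additive error $\mathcal{E}$ at fixed confidence). In Method~B, $\langle\psi|\sigma|\psi\rangle$ is obtained as $(2\prob-1)/\lambda^n$, so its statistical error is the ancilla error divided by the known constant $\lambda^n$; demanding it below $\mathcal{E}$ gives $N_s=\mathcal{O}(\lambda^{-2n}\mathcal{E}^{-2})$. The decisive step is to substitute the $n$ from Part~1: $\lambda^{-n}=\exp[n\ln\lambda^{-1}]$ equals $\mathcal{E}^{-f}$ up to a state-dependent constant, with $f:=\ln\lambda^{-1}/\ln Q^{-1}$, whence $N_s=\mathcal{O}(\mathcal{E}^{-2(1+f)})$, manifestly $\mathrm{poly}(\mathcal{E}^{-1})$. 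A closed form / explicit estimate for $f$ then follows immediately from $f=\ln\lambda^{-1}/\ln Q^{-1}$ together with $Q<1$ (Lemma~\ref{renyilemma}); for instance, in the worst, zero-entropy case $Q=\lambda^{-1}-1<1$ for $\lambda>1/2$, so $f$ is finite and the cost is genuinely polynomial.

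\textbf{Part~2 (Method~A).} Method~A replaces $\lambda^n$ by the estimated denominator, making the estimator a ratio $\hat\alpha/\hat\beta$ with $\alpha=\tr[\rho^n\sigma]$ and $\beta=\tr[\rho^n]\ge\lambda^n$. I would bound $|\hat\alpha/\hat\beta-\alpha/\beta|\le 2|\delta\alpha|/\beta+2|\alpha|\,|\delta\beta|/\beta^2$ under the side condition $|\delta\beta|\le\beta/2$ — which on its own already forces $N_{s,2}\gtrsim\lambda^{-2n}$ just to keep $\hat\beta$ bounded away from zero. Using the trivial bound $|\alpha|\le1$ and $\beta\ge\lambda^n$: the numerator-noise term gives $N_{s,1}=\mathcal{O}(\lambda^{-2n}\mathcal{E}^{-2})=\mathcal{O}(\mathcal{E}^{-2(1+f)})$, while the denominator-noise term, carrying the extra factor $\beta^{-1}\sim\lambda^{-n}$, gives $N_{s,2}=\mathcal{O}(\lambda^{-4n}\mathcal{E}^{-2})=\mathcal{O}(\mathcal{E}^{-2(1+2f)})$; the total $N_s=N_{s,1}+N_{s,2}$ is dominated by $N_{s,2}$.

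\textbf{Main obstacle.} The logarithm inversion and the shot-noise counting are routine; the part requiring real care is the ratio estimator of Method~A — controlling $\hat\alpha/\hat\beta$ when the denominator $\hat\beta$ is itself noisy and exponentially small ($\sim\lambda^n$), and recognising that the denominator's contribution to the error is amplified by $\beta^{-2}$ rather than $\beta^{-1}$. This $\beta^{-2}$ amplification is precisely what produces both the asymmetry $N_{s,1}\ne N_{s,2}$ and the extra $f$ (i.e. $2f$ versus $f$) in the exponent, and it is where the argument could most easily go wrong if one naively treated the ratio as well-conditioned.
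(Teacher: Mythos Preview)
Your proposal is correct and follows essentially the same route as the paper: invert the exponential bound $2(p_{max})^{-1}Q^n\le\mathcal{E}$ for $n$, then compute shot-noise cost as $\lambda^{-2n}\mathcal{E}^{-2}$ (numerator) and $\lambda^{-4n}\mathcal{E}^{-2}$ (denominator in Method~A), and finally substitute $\lambda^{-n}\sim\mathcal{E}^{-f}$ via the expression for $n$. The only cosmetic difference is that the paper phrases the ratio analysis as Gaussian error propagation on the variance of $\hat\alpha/\hat\beta$, whereas you use a direct algebraic bound under the side condition $|\delta\beta|\le\beta/2$; your version is arguably slightly more careful in making explicit why the linearisation is valid, but the two arguments are equivalent in content and yield identical exponents.
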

\begin{proof}
	Let us first compute the upper bound on the number of copies $n$
	required to achieve a fixed precision $\mathcal{E}\ll1$. We use the upper bounds from Theorem~\ref{theo2}
	as $|\mathcal{E}_A| \leq \mathcal{E} = \frac{ 2 \qn}{1+\qn}\approx 2\qn$ and $|\mathcal{E}_B| \leq \mathcal{E} =  \qn$.
	It is clear that the precision of Method A differs by a factor of $2$ for $\mathcal{E}\ll1$, 
	and we will use this expression for both methods for simplicity.
	Let us use the exponentially decreasing upper
	bounds on $\qn$ from Lemma~\ref{renyilemma} and write $|\mathcal{E}_A| \leq 2  (p_{max})^{{-}1} Q^n $,
	and $|\mathcal{E}_B| \leq 2  (p_{max})^{{-}1} Q^n $,
	where we have defined the suppression factor as $Q := (\lambda^{-1}{-}1) p_{max}<1$.
	It is straightforward to express $n$ as
	\begin{equation}
	n = \frac{\ln \mathcal{E}^{-1} + \ln{[2  (p_{max})^{{-}1}   ]}}{\ln Q^{-1}}.
	\end{equation}

	\textbf{Remark:} Let us further expand the above equation by using our expression
	from Lemma~\ref{renyilemma} as $\qn =  (\lambda^{-1}{-}1)^n \exp[ - (n{-}1) H_n(\underline{p}) ]$,
	which results in
	\begin{equation*}
	- \ln(\mathcal{E}^{-1}) =  \ln \mathcal{E} =  \ln 2 \qn = \ln(2) + n \ln[ (\lambda^{-1}{-}1) ]  - (n{-}1) H_n(\underline{p})
	= \ln(2) + H_n(\underline{p}) + n \{ \ln[ (\lambda^{-1}{-}1) ] - H_n(\underline{p}) \}.
	\end{equation*}
	We can express $n$ as
	\begin{equation*}
	 n=
	\frac{ \ln(\mathcal{E}^{-1}) + H_n(\underline{p})+ \ln(2)}
	{H_n(\underline{p}) - \ln[ (\lambda^{-1}{-}1) ]  } = \mathcal{O}[\ln(\mathcal{E}^{-1}) ].
	\end{equation*}	
	We remark that the denominator is positive due to the
	bound on R{\'e}nyi entropies from Lemma~\ref{renyilemma}
	as  $\ln[ (\lambda^{-1}{-}1) ] < H_n(\underline{p})$.
	One should actually use the ceil function to round up the right-hand
	expression to the nearest integer. Note that the above expression
	implicitly depends on $n$ via the R{\'e}nyi entropy $H_n(\underline{p})$,
	but one could always use the series of inequalities $0 \leq H_n(\underline{p}) \leq H_{n-1}(\underline{p})  \leq \dots H_{2}(\underline{p}) \leq  H_{1}(\underline{p}) $
	to bound the value of $n$. It is straightforward to show now that
	in the limiting scenarios $ H_{2}(\underline{p}) \gg 1$ or $\lambda \approx 1$ we recover $n \rightarrow 1$ (via the ceil function).
	Let us now express the scaling with respect to shot noise. 
	
	\textbf{Method B:} We estimate the probability $\prob$ from Theorem~\ref{theo2} and we exactly
	know $\lambda^n$. Our precision $\mathcal{E}$ is determined by the variance of our estimator which 
	can be obtained as
	\begin{equation}
	\mathcal{E}^2 = \var[ \langle\psi |\sigma | \psi \rangle]
	=
	\var[\frac{2\prob-1}{\lambda^n}]
	= \frac{4\var[\prob]}{\lambda^{2n}}
	= \frac{4\prob(1-\prob)}{N_s \lambda^{2n}},
	\end{equation}
	where we have used that the variance of the binomial distribution is $\prob(1-\prob)/N_s$
	and $N_s$ is the number of samples.
	We can explicitly express the number of shots $N_s$ required to reach a fixed precision
	$\mathcal{E}$ as
		\begin{equation}
	N_s 
	= \frac{4\prob(1-\prob)}{\mathcal{E}^2  \lambda^{2n}},
	\end{equation}
	Let us now simplify $\lambda^{2n}$ by
	expressing the dependence of $n$ on the precision above $\mathcal{E}$ as
	\begin{equation*}
	\ln [ \lambda^{2n} ] = 2n \ln[\lambda]
	=
	2\ln[\lambda]
	 \frac{\ln \mathcal{E}^{-1} + \ln{[2   (p_{max})^{{-}1}  ]}   }{\ln Q^{-1}}
	 =
	 	 \ln \mathcal{E}^{-1} \frac{2 \ln[\lambda]}{\ln Q^{-1}} + \frac{\ln[\lambda]  \ln{[4    (p_{max})^{{-}2}  ]}  }{\ln Q^{-1}},
	\end{equation*}
	and it follows that
		\begin{equation} \label{lambdapower}
	\lambda^{2n}
	=
	\exp[
	\ln \mathcal{E}^{-1} \frac{2 \ln[\lambda]}{\ln Q^{-1}} + \frac{\ln[\lambda]   \ln{[4   (p_{max})^{{-}2}   ]}    }{\ln Q^{-1}}]
	=
	\mathcal{E}^{ \frac{2 \ln[\lambda^{-1}]}{\ln Q^{-1}} } \exp[\frac{\ln[\lambda]   \ln{[4    (p_{max})^{{-}2}   ]}   }{\ln Q^{-1}}].
	\end{equation}
	We can finally express the number of samples explicitly as
		\begin{equation}
		N_s 
		= 4\prob(1-\prob) \mathcal{E}^{-2 [1+ \frac{\ln(\lambda^{-1})}{\ln (Q^{-1})}] } \exp[\frac{\ln (\lambda^{-1})   \ln{[4    (p_{max})^{{-}2}  ]}     }{\ln (Q^{-1})}]
		=\mathcal{O}[ \mathcal{E}^{-2 (1+f) } ]
		\end{equation}
	Here we used that $4 \exp[\frac{\ln[\lambda^{-1}]   \ln{[4   (p_{max})^{{-}2}   ]}     }{\ln (Q^{-1})}]$ is a constant multiplication factor
	and $0 \leq \prob \leq 1$ and we have introduced $f := \frac{\ln(\lambda^{-1})}{\ln (Q^{-1})}$.
	Indeed, we obtain the expected limits due to  $\lim_{\lambda\rightarrow1} f = 0$ and $\lim_{Q\rightarrow0} f= 0$.
	
	In general when $\lambda>1/2$ we can use the expression $Q\leq(\lambda^{-1}-1)$ from Theorem~\ref{theo2} as
	$f \leq \frac{\ln(\lambda^{-1})}{\ln[(\lambda^{-1} -1)^{-1}] }$ which is only saturated by $0$-entropy distributions.
	For example when $\lambda=0.6$ then we obtain $f \leq 1.26$, and this value can be smaller depending on
	the entropy of the probability distribution. Interestingly, for sufficiently good quality states as $\lambda \geq 0.9$,
	the polynomial overhead introduced is very small via $f \leq 0.16$.
	
	\textbf{Method A:} In this case we estimate both $\prob$ and $\mathrm{prob'}_0$. The variance of our estimator can be specified
	as
	\begin{equation*}
	\mathcal{E}^2=
	\var[\langle\psi |\sigma | \psi \rangle ]
		=
	\var[\frac{2\mathrm{prob}_0-1}{2\mathrm{prob'}_0-1}]
	= \var[\mathrm{prob}_0]  \frac{4}{(2\mathrm{prob'}_0-1)^{2}}
	+ \var[\mathrm{prob'}_0]  \frac{ 4 (2\mathrm{prob}_0-1)^2}{ (2\mathrm{prob'}_0-1)^{4} }.
	\end{equation*}
	Let us now use that $2\mathrm{prob'}_0-1 \approx \lambda^n$ and simplify the above expression
	as
	\begin{equation*}
	\mathcal{E}^2=
	\var[\langle\psi |\sigma | \psi \rangle ]
	\approx 
	\var[\frac{2\mathrm{prob}_0-1}{2\mathrm{prob'}_0-1}]
	= \var[\mathrm{prob}_0]  \frac{4}{\lambda^{2n}}
	+ \var[\mathrm{prob'}_0]  \frac{ 4 (2\mathrm{prob}_0-1)^2}{ \lambda^{4n} }.
	\end{equation*}
	We can again substitute the variance of binomial distributions as
	$\var[\mathrm{prob}_0] = \mathrm{prob}_0(1-\mathrm{prob}_0)/N_{s,1}$
	and $\var[\mathrm{prob'}_0] = \mathrm{prob'}_0(1-\mathrm{prob'}_0)/N_{s,2}$.
	The measurement cost of determining both components to a precision
	$\mathcal{E}^2/2$ follows as
	\begin{equation*}
	N_{s,1} = \frac{8\mathrm{prob}_0(1-\mathrm{prob}_0)}{\mathcal{E}^2 \lambda^{2n}}
	\quad \quad \text{and} \quad \quad
	N_{s,2} = 	\frac{ 8 \mathrm{prob'}_0(1-\mathrm{prob'}_0) (2\mathrm{prob}_0-1)^2}{ \mathcal{E}^2 \lambda^{4n} }
	\end{equation*}
	We can now use our previous expression from Eq.~\eqref{lambdapower}
	for determining $\lambda^{2n}$ and  $\lambda^{4n}$,
	which finally yields our formula for the measurement costs as
	\begin{equation}
	N_{s,1} = \mathcal{O}[ \mathcal{E}^{-2 (1+f ) } ]
	\quad \quad \text{and} \quad \quad
	N_{s,2} = \mathcal{O}[ \mathcal{E}^{-2 (1+2f) } ].
	\end{equation}
	Total number of measurements required to determine
	the result is indeed $N_{s,1} + N_{s,2}$,
	and recall that $f = \frac{\ln(\lambda^{-1})}{\ln (Q^{-1})}$.

\end{proof}

\clearpage

\section{Effect of violating assumptions}

Let us now analyse how non-identical copies of $\rho$ affect the
performance of our approach.

\begin{lemma}\label{differentlemma}
	When the states are not perfectly identical via $\rho = \bigotimes_{\mu=1}^n\rho_\mu$
	with $\rho_1 \neq \rho_2 \dots \neq \rho_n$, but their dominant eigenvector is identical
	then our main result from Theorem~\ref{theo2} still holds and we still obtain exponentially
	decreasing error bounds as
	\begin{align*}
	\text{Method A:}& \quad \quad
	\frac{2\mathrm{prob}_0-1}{2\mathrm{prob'}_0-1}
	=  \langle \psi  | \sigma | \psi \rangle  +  \mathcal{O}([\lambda_\mathrm{min}^{-1}{-}1]^n),\\
	\text{Method B:}& \quad \quad
	\frac{2\mathrm{prob}_0-1}{\prod_{\mu=1}^n \lambda_\mu}
	=  \langle \psi  | \sigma | \psi \rangle  +  \mathcal{O}([\lambda_\mathrm{min}^{-1}{-}1]^n).
	\end{align*}
	For Method B we assume that the dominant eigenvalues $\lambda_1, \lambda_2, \dots \lambda_n$ are known.
	The error depends on the smallest of these dominant eigenvalues, which we denote as $\lambda_\mathrm{min}$.
	In the special case when all $\rho_\mu$ commute (i.e., same eigenvectors, but different eigenvalues)
	our error bounds $\mathcal{E}_A$ and $\mathcal{E}_B$ from Theorem~\ref{theo2} approximately
	holds via an effective sequence $Q_n^{eff}$, and we can expect an error suppression
	very similar to Theorem~\ref{theo2}.
\end{lemma}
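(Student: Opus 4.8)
The plan is to reduce the statement to a trace identity and then let the \emph{exact}, shared dominant eigenvector do the work through an orthogonality collapse. First I would re-run the pure-state computation in the proof of Theorem~\ref{theo1}, but keeping the now-distinct spectral bases of $\rho_1,\dots,\rho_n$. For the cyclic-shift member of $\mathfrak{D}_n$ the telescoping of Kronecker deltas used there becomes a telescoping of resolutions of the identity, and one obtains $2\prob-1=\tr[\rho_n\rho_{n-1}\cdots\rho_1\,\sigma]$, and with $\sigma\mapsto\mathrm{Id}$, $2\mathrm{prob'}_0-1=\tr[\rho_n\cdots\rho_1]$. Equivalently one uses $\tr[\perm(B_1\otimes\cdots\otimes B_n)]=\tr[B_n\cdots B_1]$, and for a general $n$-cycle $\perm$ from Definition~\ref{derangement_def} the right-hand side is the analogous single cyclic product over all $n$ registers -- which is all that is needed below.

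Next I would write $\rho_\mu=\lambda_\mu P+(1-\lambda_\mu)\tilde\rho_\mu$ with $P=|\psi\rangle\langle\psi|$ and $\tilde\rho_\mu$ the normalised error part. Because $|\psi\rangle$ is the exact common dominant eigenvector, every $\tilde\rho_\mu$ is supported on $|\psi\rangle^{\perp}$, hence $P\tilde\rho_\mu=\tilde\rho_\mu P=0$. Expanding the ordered product $\rho_n\cdots\rho_1$ into $2^n$ terms indexed by the subset $S$ of factors that contribute their error part, every term with $\emptyset\neq S\neq\{1,\dots,n\}$ contains two adjacent factors of opposite type (one $P$, the next $\tilde\rho$) whose product vanishes; only $S=\emptyset$ and $S=\{1,\dots,n\}$ survive, giving
\[
2\prob-1=\Big(\prod_{\mu=1}^{n}\lambda_\mu\Big)\langle\psi|\sigma|\psi\rangle+\Big(\prod_{\mu=1}^{n}(1-\lambda_\mu)\Big)\tr[\tilde\rho_n\cdots\tilde\rho_1\,\sigma],
\]
and likewise $2\mathrm{prob'}_0-1=\prod_{\mu}\lambda_\mu+\big(\prod_{\mu}(1-\lambda_\mu)\big)\tr[\tilde\rho_n\cdots\tilde\rho_1]$.

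Then I would bound the error. Since $\|\sigma\|_\infty=1$ and $\|\tilde\rho_n\cdots\tilde\rho_1\|_1\le1$, the two error traces are at most $1$ in absolute value. For Method B, dividing by $\prod_\mu\lambda_\mu$ gives $|\mathcal{E}_B|\le\prod_{\mu}(\lambda_\mu^{-1}-1)\le(\lambda_\mathrm{min}^{-1}-1)^n$, which tends to $0$ precisely because $\lambda_\mathrm{min}>1/2$. For Method A, setting $R:=\prod_\mu(\lambda_\mu^{-1}-1)<1$, the ratio has the form $(\langle\psi|\sigma|\psi\rangle+Z)/(1+W)$ with $|Z|,|W|\le R$, and the triangle-inequality estimate of Theorem~\ref{theo2} yields $|\mathcal{E}_A|\le 2R/(1-R)=\mathcal{O}([\lambda_\mathrm{min}^{-1}-1]^n)$. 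In the commuting case the $\rho_\mu$ share the error eigenvectors $|\psi_k\rangle$, so $\tr[\tilde\rho_n\cdots\tilde\rho_1\sigma]=\sum_{k\ge2}\big(\prod_\mu p_k^{(\mu)}\big)\langle\psi_k|\sigma|\psi_k\rangle$ is bounded by $\sum_{k\ge2}\prod_\mu p_k^{(\mu)}$; replacing $\lVert\underline p\rVert_n^n$ by this quantity defines $Q_n^{eff}:=\big(\prod_\mu(1-\lambda_\mu)/\prod_\mu\lambda_\mu\big)\sum_{k\ge2}\prod_\mu p_k^{(\mu)}$, and the Method A/B estimates of Theorem~\ref{theo2} go through verbatim with $Q_n\mapsto Q_n^{eff}$; by AM--GM, $\prod_\mu p_k^{(\mu)}\le(\tfrac1n\sum_\mu p_k^{(\mu)})^n$, so $Q_n^{eff}$ stays close to the $Q_n$ of the identical-copy case.

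The step I expect to be the main obstacle is the collapse: one must be confident that for a general $n$-cycle derangement the trace formula really is a single cyclic product running once through all $n$ registers (the operator-level version of the graph-tree picture of Definition~\ref{derangement_def}), so that any change of factor type produces a vanishing adjacent pair; and one must use the hypothesis ``identical dominant eigenvector'' in its exact form, since any overlap mismatch between the copies' dominant eigenvectors would leave uncancelled cross terms scaling with that mismatch rather than with $(\lambda_\mathrm{min}^{-1}-1)^n$.
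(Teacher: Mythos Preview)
Your argument is correct and follows essentially the same route as the paper's proof: both hinge on the observation that, because the common dominant eigenvector $|\psi\rangle$ is exactly orthogonal to every error part $\tilde\rho_\mu$, all ``mixed'' terms in the expansion of the cyclic product vanish, leaving only the pure-$P$ term $\big(\prod_\mu\lambda_\mu\big)\langle\psi|\sigma|\psi\rangle$ and a remainder carrying the full prefactor $\prod_\mu(1-\lambda_\mu)$.

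The organisation differs slightly. The paper works directly with eigenvector expansions and Kronecker deltas (as in Theorem~\ref{theo1}), treating the commuting case first and then the general case; you instead pass through the trace identity $2\prob-1=\tr[\rho_n\cdots\rho_1\,\sigma]$ and a projector decomposition $\rho_\mu=\lambda_\mu P+(1-\lambda_\mu)\tilde\rho_\mu$, which makes the collapse argument a one-line consequence of $P\tilde\rho_\mu=0$. Your $Q_n^{eff}=\big(\prod_\mu(\lambda_\mu^{-1}-1)\big)\sum_{k\ge2}\prod_\mu p_k^{(\mu)}$ keeps the exact products, whereas the paper upper-bounds by $(\lambda_{\min}^{-1}-1)^n\lVert\underline p_{\max}\rVert_n^n$ with $p_{k,\max}=\max_\mu p_k^{(\mu)}$; yours is the tighter of the two. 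Your Method-A bound $2R/(1-R)$ (rather than $2R/(1+R)$) is also the right call in the general non-commuting case, since $\tr[\tilde\rho_n\cdots\tilde\rho_1]$ need not be non-negative there.
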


\begin{proof}
	\textbf{Case 1}: Let us build up components of our proof by first considering the special case when all $\rho_k$
	commute with each other. In other words the states staisfy the spectral decomposition
	\begin{equation*}
	\rho_\mu =\lambda_\mu | \psi \rangle \langle \psi  |  + (1-\lambda_\mu) \sum_{k=2}^d p_{k_\mu} |\psi_{k}\rangle \langle \psi_{k} |,
	\end{equation*}
	and they all share the same eigenvectors while their eigenvalues can be different.
	It follows that the orthogonality relations in the proof of Theorem~\ref{theo1} in Eq.~\eqref{orthogonality} 
	still hold and the final result can be written explicitly as
	\begin{equation}
	\mathrm{prob}_0 =\frac{1}{2} + \frac{1}{2}  \langle \psi  | \sigma | \psi \rangle \prod_{\mu=1}^n \lambda_\mu
	+ \sum_{k=2}^d   \langle\psi_k |\sigma | \psi_k \rangle \prod_{\mu=1}^n p_{k_\mu} (1-\lambda_\mu).
	\end{equation}
	We can upper bound the error term in the above expression as
	\begin{equation}
	 \lvert \sum_{k=2}^d  \langle\psi_k |\sigma | \psi_k \rangle \prod_{\mu=1}^n p_{k_\mu} (1-\lambda_\mu) \rvert \leq
	  (1-\lambda_\mathrm{min})^n \sum_{k=2}^d  (p_{k,\mathrm{max}})^n,
	\end{equation}
	where we have denoted the largest component as $ \lambda_\mathrm{min} = \min_\mu \lambda_\mu $ and
	$ p_{k,\mathrm{max}} = \max_\mu p_{k_\mu}$ and we have also used that
	 $ \lvert \langle\psi_k |\sigma | \psi_k \rangle \rvert \leq 1$.
	
	We can also upper bound the product $\prod_{\mu=1}^n \lambda_\mu \geq \lambda_\mathrm{min}^n$
	and  derive the error of our Method A in Theorem~\ref{theo2}	which results in 
	\begin{equation*}
	\frac{2\mathrm{prob}_0-1}{2\mathrm{prob'}_0-1}
	=  \langle \psi  | \sigma | \psi \rangle  + \mathcal{E}_A,
	\quad \quad \text{with}	\quad \quad 
	|\mathcal{E}_A| \leq  \frac{2 Q_n^{eff}   }{1+  Q_n^{eff}   }
	\end{equation*}
	which we write in terms of an effective sequence
	$Q_n^{eff} = (\lambda_\mathrm{min}^{-1}{-}1)^n \lVert \underline{p}_\mathrm{max} \rVert_n^n$.
	
	We can similarly derive the errors of our Method B in Theorem~\ref{theo2}
	in case when the eigenvalues $\lambda_1, \lambda_2, \dots \lambda_n$ are known.
	This results in 
	\begin{equation*}
	\frac{2\mathrm{prob}_0-1}{ \prod_{\mu=1}^n \lambda_\mu}
	=  \langle \psi  | \sigma | \psi \rangle  + \mathcal{E}_B,
	\quad \quad \text{with}	\quad \quad 
	|\mathcal{E}_B| \leq   Q_n^{eff}   ,
	\end{equation*}
	where we have again used our effective sequence
	$Q_n^{eff} = (\lambda_\mathrm{min}^{-1}{-}1)^n \lVert \underline{p}_\mathrm{max} \rVert_n^n$.
	We note that here $\underline{p}_\mathrm{max}$ is no longer a proper probability vector since
	$\sum_{k=2}^d  p_{k,\mathrm{max}} \geq 1$ and therefore we cannot guarantee
	in general that $Q_{eff}<1$. Nevertheless, one expect a very similar exponential decay
	of the error as in Theorem~\ref{theo2} and in Lemma~\ref{renyilemma} for high-entropy
	probability distributions and for $n>1$.

	\vspace{1cm}
	\textbf{Case 2}: We now consider the most general case when
	$\rho_\mu$ are arbitrary except that their dominant eigenvector is exactly $ | \psi \rangle$.
	The states therefore admit the following spectral decompositon
	\begin{equation*}
	\rho_\mu =\lambda_\mu | \psi \rangle \langle \psi  |  + (1-\lambda_\mu) \sum_{k=2}^d p_{k_\mu} |\psi_{k_\mu}\rangle \langle \psi_{k_\mu} |,
	\end{equation*}
		It follows from the above definition  that the dominant eigenvector $| \psi \rangle $ is
	orthogonal to every error contribution in every eigenstate as $\langle \psi   |\psi_{k_\mu} \rangle = 0$ for
	every $k=\{ 2, \dots 2^N\}$ and for every $\mu= \{1, \dots n\}$.
	Modifying accordingly the orthogonality relation in the proof of Theorem~\ref{theo1} in Eq.~\eqref{orthogonality} 
	allows us to compute the leading term as expected, but every other non-zero term is multiplied with the
	prefactor $\prod_{\mu=1}^n (1-\lambda_\mu)$ which leads to the following error term
	\begin{equation}
	\mathrm{prob}_0 =\frac{1}{2} + \frac{1}{2}  \langle \psi  | \sigma | \psi \rangle \prod_{\mu=1}^n \lambda_\mu + \mathcal{O}[\prod_{\mu=1}^n (1-\lambda_\mu)].
	\end{equation}
	
	As shown  previously, this allows us to compute the error of our Method A 
	and Method B in Theorem~\ref{theo2} as
	\begin{equation*}
	\frac{2\mathrm{prob}_0-1}{2\mathrm{prob'}_0-1}
	=  \langle \psi  | \sigma | \psi \rangle  +  \mathcal{O}([\lambda_\mathrm{min}^{-1}{-}1]^n),
	\quad \quad \text{and}	\quad \quad 
	\frac{2\mathrm{prob}_0-1}{\prod_{\mu=1}^n \lambda_\mu}
	=  \langle \psi  | \sigma | \psi \rangle  +  \mathcal{O}([\lambda_\mathrm{min}^{-1}{-}1]^n),
	\end{equation*}
	in general for $\lambda_\mathrm{min} > 1/2$.
\end{proof}

\subsection{Coherent mismatch in incoherent error channels \label{coherrorsec}}

As we discussed in the main text our approach cannot address
coherent errors, i.e., when the dominant eigenvector of 
the density matrix is $ \sqrt{1-c} | \psi_{id} \rangle + \sqrt{c} | \psi_{err} \rangle$,
where $\psi_{id}$ is the ideal computational state and $\psi_{err}$ is some error.
This is expected to happen when systematic errors, such as miscalibrated
rotation angles, are present but it is straightforward to show that
even a completely incoherent error channel (random unitary events) can
introduce a slight mismatch in the eigenvectors.

We show this by considering a quite general noise channel as
\begin{equation} \label{error_channel}
 \rho' = (1-\epsilon) \rho + \epsilon \rho_{err},
\end{equation}
in which no errors happen with a probability $(1-\epsilon)$
and some error happens with a probability $\epsilon$.
In complete generality, the eigenvectors of $\rho$ can
be different than the eigenvectors of $\rho_{err}$ unless the
commutator vanishes $[\rho, \rho_{err}] =0$. A typical example
for a vanishing commutator is the single-qubit depolarising channel in single-qubit
systems, in which case $\rho' = (1-\epsilon) \rho + \epsilon \mathrm{Id}$
and indeed $[\rho, \mathrm{Id}] =0$.
However, for more than 1 qubits (or non-separable states) the
above expression does not hold and even single qubit depolarising can
introduce  a coherent mismatch such that the dominant eigenvector of $\rho'$
is $ \sqrt{1-c} | \psi \rangle + \sqrt{c} | \psi_{err} \rangle$.

The coherent mismatch due to incoherent errors is expected to be very small
in practically relevant scenarios since the high entropy of the error probabilities
from Eq.~\eqref{rhodef} ensures us that $\lVert [\rho, \rho_{err}] \rVert \ll 1$. 
For example, in our numerical simulations in Fig.~\ref{sampling} the
infidelity of the dominant eigenvector with respect to the pure
state obtained from a noise-free computation was below $10^{-4}$.

In general, for a high entropy error distribution in Eq.~\eqref{rhodef} we obtain
the spectral decomposition with $\lambda_k \ll 1$ for $k\geq2$
\begin{equation*}
	\rho = \sum_{k=1}  \lambda_k |\psi_k \rangle \langle \psi_k |.
\end{equation*}
One can compute the first order (in $\epsilon$) corrections
to the eigenvectors of $\rho'$ via the usual perturbative series:
the dominant eigenvector of $\rho'$ is approximately (up to normalisation)
\begin{equation*}
	|\psi_1' \rangle  \approx |\psi_1 \rangle + \sum_{k=2}\frac{\langle \psi_k | \epsilon \rho_{err} | \psi_1 \rangle}{ \lambda_1 - \lambda_k} | \psi_k \rangle
	=
	|\psi_1 \rangle  
	+ \frac{\epsilon}{\lambda_1} \sum_{k=2}  \langle \psi_k | \rho_{err} | \psi_1 \rangle  | \psi_k \rangle
	= |\psi_1 \rangle  + \mathcal{O}(\epsilon) ,
\end{equation*}
where we have used that $ \lambda_1 - \lambda_k \approx \lambda_1$.
Indeed the result is constant bounded due to the norm of the fist order correction
$\sum_{k=2}  |\langle \psi_k | \rho_{err} | \psi_1 \rangle  |^2 = |\mathrm{Col}_1[\rho_{err}]|^2 \leq 1$, hence
the scaling of the correction $\mathcal{O}(\epsilon)$. Here $\mathrm{Col}_1[\rho_{err}]$ is the column  vector of $\rho_{err}$
whose norm is bounded by the largest eigenvalue.

	Let us now focus on the repeated application of the noise channel from Eq.~\eqref{error_channel} which can be used to model
	a quantum circuit that applies a series of noisy quantum gates with error probability $\epsilon$.
	The incoherent decay of the dominant eigenvalue is expected to decay exponentially with $\nu$ as $(1-\epsilon)^\nu$.
	For large systems one needs to implement a large number $\nu$ of gates 
	and therefore one requires to have a sufficiently low per-gate error $\epsilon$
	in order to keep the dominant eigenvalue above a threshold $\lambda_{min}$ -- and thus
	keep the sampling costs in Result~\ref{result2} practical.
	Since the strength of the coherent mismatch is proportional to the per-gate error rate $\epsilon$,
	it is expected to decrease when we decrease $\epsilon$.
	Interestingly, our numerical simulations of the single- and two-qubit depolarising channel suggest
	that in the investigated region (see Fig.~\ref{coherror_fig}) the coherent mismatch only grows linearly
	when we increase the number of gates $\nu$.	
	This suggest that when we increase the number of gates the incoherent (exponential) decay of the
	dominant eigenvalue is significantly more damaging than the (linearly) increasing coherent mismatch.
	
	We illustrate this the following way. Let us define the following quantities. We define the fidelity between the dominant eigenvector
	$\psi_1^{(\nu)}$ (after the application of $\nu$ noisy gates)
	and the ideal state as $\eta_1:= |\langle  \psi_{id} |\psi_1^{(\nu)} \rangle |^2 = 1-c$.
	Furthermore we define the fidelity between the dominant eigenvector and the density matrix $\rho$ as
	$\eta_2 := \langle  \psi_1^{(\nu)} | \rho |  \psi_1^{(\nu)} \rangle \approx  \lambda$.
	Here $\eta_1$ decays due to the coherent mismatch while $\eta_2  \approx (1-\epsilon)^\nu$ decays purely
	due to the incoherent effect of the noise channel. Fig.~\ref{coherror_mitig} (a) shows 
	how the ratio $\eta_2/\eta_1$ decreases when we increase the number of gates. Interestingly the scale
	at which this ratio decays appears to be exponential in the investigated region.

	These results suggest that the coherent mismatch in the dominant eigenvector can be expected
	to be sufficiently small for large, complex quantum circuits. Refer to ref.~\cite{koczor2021dominant}
	for a detailed analysis.

\begin{figure*}[tb]
	\begin{centering}
		\includegraphics[width=0.95\textwidth]{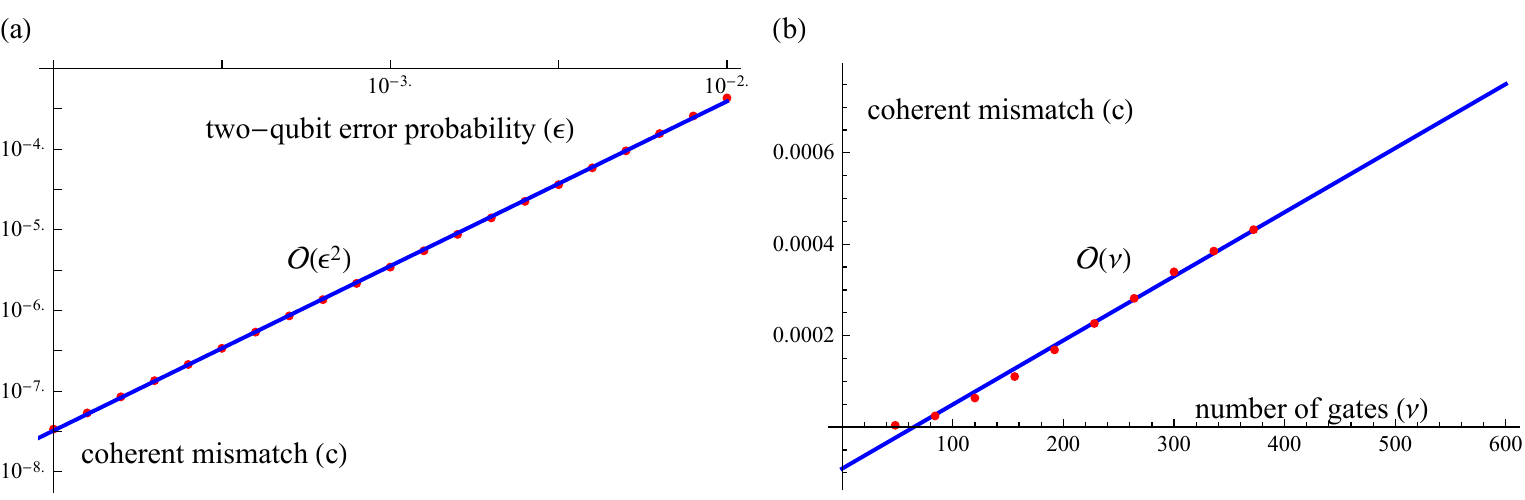}
		\caption{
			Coherent mismatch $c$ in the dominant eigenvector 
			$ \sqrt{1-c} | \psi \rangle + \sqrt{c} | \psi_{err} \rangle$
			of the density matrix $\rho$ 
			in case of a purely incoherent noise model (single and two-qubit depolarising).
			We simulated the same $12$-qubit system with
			$372$ noisy gates from  Fig.~\ref{sampling}
			in which 
			single qubit gates undergo single-qubit depolarisation with
			probability $0.1\epsilon$ while two-qubit gates undergo two-qubit depolarisation
			with probability $\epsilon$.
			(a) The coherent mismatch is small and scales with the per-gate error $\epsilon$
			as $c = \mathcal{O}(\epsilon^2)$ as explained in the text.			
			(b) The coherent mismatch is small and it grows with the number of gates $\nu$
			at a fixed gate error $\epsilon = 10^{-3}$ as $c = \mathcal{O}(\nu)$.
			\label{coherror_fig}
		}
	\end{centering}
\end{figure*}

\subsubsection{Mitigating the coherent mismatch}

As discussed in the main text, well-established techniques can be used to mitigate the effect of coherent errors.
We now focus on the above introduced coherent mismatch in the eigenvector due to incoherent error channels and
demonstrate the effectiveness of an extrapolation approach in Fig.~\ref{coherror_mitig} (b). Similarly to Fig.~\ref{extrapolation} in the main text,
we use extrapolation techniques, but here we vary the gate error rate in the state preparation stage
(and not in the derangement process). We set gate errors such that two-qubit gates undergo
a depolarising noise with probability $\epsilon=10^{-3}$ and assume that the experimentalist
can increase this error in $k=2,3,4\dots$ steps up to $\epsilon=10^{-2}$.
As expected from the above arguments based on a perturbative expansion of the
dominant eigenvector, the measured expectation value should depend on the error levels
as a polynomial that has rapidly decaying expansion coefficients due to the fact that the per-gate error level
is low as $\epsilon \ll 1$. We have determined extrapolation
errors using various fitting techniques as shown in Fig.~\ref{coherror_mitig} (b). We define the extrapolation
error as the difference between the  ideal, error free expectation value $\langle \psi_{id} | \sigma | \psi_{id} \rangle $
and the estimated expected value $\frac{\tr[\rho^n \sigma]} {\tr[\rho^n]}$ from Method A of Theorem~\ref{theo2}.
Here $ | \psi_{id} \rangle$ is the state that one would obtain from a perfect, noise-free evaluation
of the circuit and in our simulation we consider the same 12-qubit circuit as in Fig.~\ref{sampling} (right) in the
main text (refer to Appendix~\ref{numerics}) with $n=3$ copies.

Indeed, Fig.~\ref{coherror_mitig} (b) confirms that the effect of the coherent mismatch can be straightforwardly
mitigated by fitting low-order polynomials to the experimental data. The red horizontal line represents the error bound
from Result~\ref{result1} and one can demonstrably suppress the effect of the coherent mismatch below this error bound. As
expected, when we increase the degree of the fitting polynomial,
 the error saturates as it reaches the level from Result~\ref{result1} which we defined for the case when the
coherent mismatch is neglected -- and the errors could only be further suppressed by increasing the number $n$ of copies.

\begin{figure*}[tb]
	\begin{centering}
			\includegraphics[width=0.45\textwidth]{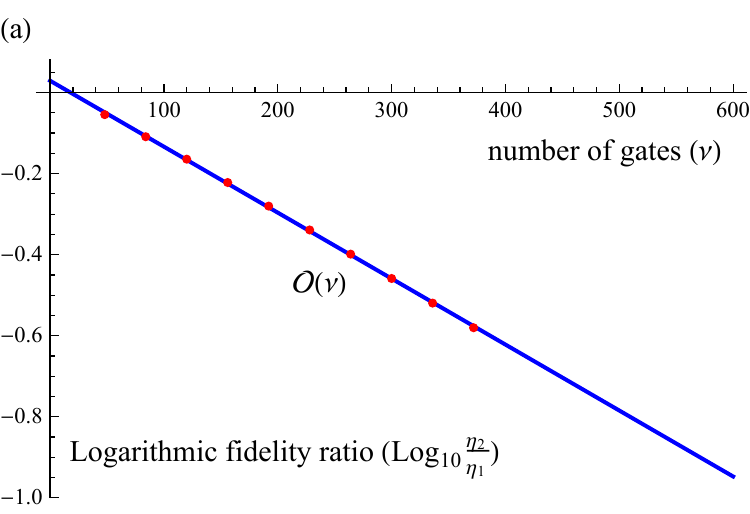}
		\includegraphics[width=0.45\textwidth]{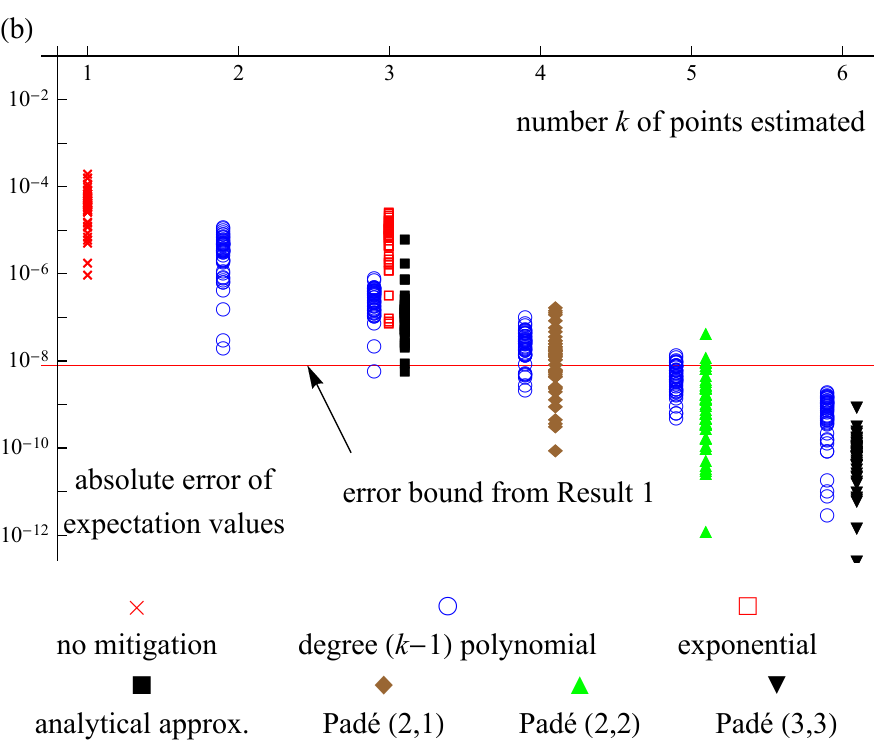}\\
		\caption{
			(a)  The fidelity $\eta_1$ decreases purely due to the coherent mismatch while the fidelity
			$\eta_2$ decays purely due to the incoherent effect of the noise channel as discussed in Appendix~\ref{coherrorsec}.
			The ratio $\eta_2/ \eta_1$ appears to decay exponentially in the investigated region
			(linear in the logarithmic plot) when we increase
			the number of gates for the fixed two-qubit gate error $\epsilon = 10^{-2}$ and this ensures
			us that the coherent mismatch in the dominant eigenvector becomes negligible for large systems
			(large $\nu$). We simulated
			the same circuit as in Fig.~\ref{coherror_fig} (b).
			(b) Mitigating the error caused by a coherent mismatch in the dominant eigenvector
			of the density matrix as discussed in Appendix~\ref{coherrorsec}. Gate error levels in the
			state preparation stage were varied in $k=2,3,4\dots$ steps and the obtained
			expectation values were extrapolated to the zero error limit.  The red horizontal line
			represents the error bound from Result~\ref{result1} and one can straightforwardly
			suppress the effect of the coherent mismatch below this error bound
			by fitting low-order polynomials. The error saturates when increasing the degree
			of the fitting polynomial and can only be further reduced by increasing the number $n$ of
			copies.
			\label{coherror_mitig}
		}
	\end{centering}
\end{figure*}

\clearpage

\section{Noise resilience of derangements and error extrapolation \label{noise_resilience_sec}}

\begin{example}\label{error_example}
	\normalfont
	We now show examples why the derangement operator is highly resilient to errors.
	We proceed by recapitulating that quantum channels can be represented by a set of
	non-unique Kraus maps and, in particular, we consider the decomposition into the following sum
	of unitary transformations as
	\begin{equation*}
	\rho' = (1-\epsilon) U \rho U^\dagger +\epsilon \sum_{m} c_m U_m \rho U^\dagger_m,
	\end{equation*}
	where $U$ is the ideal unitary transformation, $\sum_{k} c_m = 1$ and $ 0 \leq \epsilon, c_m \leq 1$,
	while  the erroneous Kraus operators
	are unitary via $U_m  U^\dagger_m = \mathrm{Id}$.
	The deviation from the ideal transformation can be interpreted as unitary transformations
	that randomly affect the eigenvectors $|\psi\rangle$ of the
	quantum state as $U_m |\psi\rangle$ with probability
	$\epsilon c_m$.
	
	Let us now analyse how such errors affect our procedure when they occur 
	during the derangement operator, i.e., we set the ideal transformation $U$ to be our derangement circuit from Fig.~\ref{schem}.
	First, we show that the orthogonality
	relations in the proof of Theorem~\ref{theo1} are resilient to such noise events.
	In particular, recall that the derangement operator symmetrises the input state
	as, e.g.,
	$$  \perm | \psi_1 , \psi_2 , \dots \psi_n \rangle = |\psi_n , \psi_1  , \dots \psi_{n-1}\rangle,$$
	which would ideally ensure that only permutation-symmetric combinations contribute to the
	output via the orthogonality relation from Eq.~\eqref{orthogonality} as
	\begin{equation*} 
	\langle \psi_{k_1}, \psi_{k_2},  \dots \psi_{k_n}	| \sigma \psi_{k_n}, \psi_{k_1},  \dots \psi_{k_{n-1}} \rangle
	=\langle\psi_{k_1} |\sigma \psi_{k_n} \rangle \langle  \psi_{k_2} | \psi_{k_1} \rangle \cdots \langle \psi_{k_n} | \psi_{k_{n-1}} \rangle.
	\end{equation*}
	One can show that even if errors occur during the derangement procedure
	the orthogonality relations are still preserved as
	\begin{equation*} 
	\langle U_{1} \psi_{k_1}, U_{2} \psi_{k_2},  \dots U_{n} \psi_{k_n}	| \sigma U_{1} \psi_{k_n}, U_2 \psi_{k_1},  \dots U_n \psi_{k_{n-1}} \rangle
	=\langle U_{1} \psi_{k_1} |\sigma U_{1} \psi_{k_n} \rangle \langle  \psi_{k_2} | \psi_{k_1} \rangle \cdots \langle \psi_{k_n} | \psi_{k_{n-1}} \rangle.
	\end{equation*}
	It follows that the non-symmetric combinations of input states do
	not contribute to the output even when the derangement operator is
	affected by random errors. Note that even though the errors do not directly contribute to
	the final output (as shown above), the probability that the circuit outputs an error-free result
	is decreased via the $1-\epsilon$ factor. This is, however, a trivial effect that only attenuates the output
	probabilities linearly and can be completely corrected by a linear extrapolation (i.e., estimating
	the output probabilities at different $\epsilon$ values and then extrapolating to $\epsilon = 0$).
	
	Second, let us show that for symmetric input states $| \psi, \psi,  \dots \psi \rangle$
	all random errors during the derangement procedure cancel that do
	not affect the ancilla qubit nor the register to which the observable $\sigma$
	is applied. In fact, we just modify the above equation by not allowing
	errors on register $1$ as
	\begin{equation*} 
	\langle \psi, U_{2} \psi,  \dots U_{n} \psi	| \sigma  \psi, U_2 \psi,  \dots U_n \psi \rangle
	=\langle \psi |\sigma \psi \rangle \langle  \psi| \psi \rangle \cdots \langle \psi | \psi \rangle
	= \langle \psi |\sigma | \psi \rangle.
	\end{equation*}	
	The second equation shows that we obtain the correct contribution
	despite all registers except for register $1$ have undergone some
	random error $U_2, U_3$ etc. Our previous argument again holds:
	despite the fact that these error events do not directly contribute to the final output of the circuit,
	the probability of an error-free output is attenuated linearly which,
	nonetheless, can be completely corrected by a linear extrapolation.
	
	In summary, the derangement measurement is highly
	resilient to errors and completely protects the permutation
	symmetry of input states even when the derangement operator
	suffers from experimental noise. However, errors that affect the
	qubits to which the observable $\sigma$ is applied will degrade
	the final result \emph{non-trivially} via $\langle U_1 \psi |\sigma | U_1 \psi \rangle$,
	where $U_1$ is some unitary noise process that occurs with a (possibly)
	low probability. 	
	Nevertheless, we show in the main text and
	in the following theorem that these erroneous contributions can be successfully
	mitigated with, e.g.,  extrapolation techniques.

\end{example}

\begin{theorem} \label{pade_theo}
	Assume that a circuit consists of a sequence of $\nu$ noisy quantum gates, and
	each gate's error model is of the form $(1-\epsilon) \Phi_{k} + \epsilon \mathcal{E}_k$,
	where $\Phi_{k}$ is the ideal, error-free quantum channel and $\mathcal{E}_k$ is 
	an arbitrary error channel (CPTP map) that occurs with probability $\epsilon$. Most typical error models 
	are of this form, including dephasing, depolarising, inhomogeneous Pauli errors,
	arbitrary unital channels and beyond ($\mathcal{E}_k$ need not be local or two-local).
	In a circuit that consists of number $\nu$ such gates, any expectation value $E$ will depend on the error probability $\epsilon$ as a
	degree $\nu$ polynomial via
	\begin{equation*}
	E (\epsilon)=	E_{0} + \sum_{k=1}^\nu \epsilon^k \, E_{k},
	\end{equation*}
	where $E_{k}$ are real polynomial coefficients. One can therefore exactly
	determine the ideal expectation value $E (0)$ by estimating $E (\epsilon)$
	at $\nu+1$ points in $\epsilon$. The so-called Lagrange polynomial or the Newton polynomial
	provide explicit formulas for computing $E(\epsilon)$ from
	the pointwise reconstructions $E(\epsilon_k)$. Furthermore, one can approximate the dependence
	on $\epsilon$  via, e.g.,  the $(3,3)$ Pad{\'e} approximation as
	\begin{equation}
	E (\epsilon) \approx  E_0  - \tilde{\eta} \epsilon	\, \frac{(1 - \epsilon)^{ \nu} }{ 2  \epsilon-1 }
	\approx	\frac{a_1 \epsilon  + a_2 \epsilon^2 + a_3 \epsilon^3
	}{ 1 + a_4 \epsilon + a_5 \epsilon^2
	},
	\end{equation}
	that only requires the coefficients $a_1, a_2, a_3, a_4, a_5$ to be
	fitted to experimental data.
\end{theorem}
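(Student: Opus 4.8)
## Proof Proposal for Theorem~\ref{pade_theo}

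The plan is to exploit the affine structure of each noisy gate's channel in $\epsilon$. First I would write the full circuit's effective channel as a composition of the $\nu$ individual maps $(1-\epsilon)\Phi_k + \epsilon\mathcal{E}_k$. Expanding this composition by multilinearity, each factor contributes either its ideal part (with weight $1-\epsilon$) or its error part (with weight $\epsilon$), so the total channel is a sum over subsets $S\subseteq\{1,\dots,\nu\}$ of terms of the form $\epsilon^{|S|}(1-\epsilon)^{\nu-|S|}$ times a fixed CPTP map (the ordered composition in which gates in $S$ are replaced by their error channels). Taking the expectation value $E$ of the observable in the resulting state is linear in the channel, hence $E(\epsilon) = \sum_{S} \epsilon^{|S|}(1-\epsilon)^{\nu-|S|}\, e_S$ for real constants $e_S$. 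Grouping by $|S|=j$ and using the binomial expansion of $(1-\epsilon)^{\nu-j}$, this collapses to a polynomial $E(\epsilon) = \sum_{k=0}^{\nu}\epsilon^k E_k$ with real coefficients; since $E(0)$ is the ideal value, this already proves the first claim. The reconstruction-by-interpolation statement is then immediate: a degree-$\nu$ polynomial is uniquely determined by its values at $\nu+1$ distinct points, and the Lagrange/Newton formulas are standard.

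Next I would address the closed-form approximation. The idea is that in the regime of a large number of gates, the dominant behaviour comes from the "few-error" terms. A single error event (say gate $k$ replaced by $\mathcal{E}_k$) contributes $\epsilon(1-\epsilon)^{\nu-1}$ times a coefficient; summing over which single gate erred, and retaining only the leading single-error correction while absorbing the per-event coefficients into a single effective constant $\tilde\eta$, gives $E(\epsilon)\approx E_0 - \tilde\eta\,\epsilon\sum_{j=0}^{\nu-1}(1-\epsilon)^{j}$. The geometric sum evaluates to $\tilde\eta\,\epsilon\,\frac{1-(1-\epsilon)^{\nu}}{\epsilon} $... more precisely $\sum_{j=0}^{\nu-1}(1-\epsilon)^j = \frac{1-(1-\epsilon)^\nu}{\epsilon}$, and after a sign bookkeeping with the $2\epsilon-1$ denominator one recovers the stated form $E_0 - \tilde\eta\,\epsilon\,\frac{(1-\epsilon)^\nu}{2\epsilon-1}$ up to the constant absorbed in $\tilde\eta$ and an additive constant folded into $E_0$. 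Finally, a $(3,3)$ Padé approximant of this analytic function in $\epsilon$ is a rational function with a degree-$3$ numerator and degree-$2$ denominator; writing it with $a_1,\dots,a_5$ and noting that $E(0)=0$ forces the numerator constant term to vanish (matching the displayed normalisation with leading $a_1\epsilon$), gives the quoted template to be fitted.

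The routine part is the multilinear expansion and the geometric-series manipulation. The main obstacle is making the large-$\nu$ approximation step precise — i.e., justifying that the single-error term with a common effective coefficient $\tilde\eta$ dominates, which requires an argument that the multi-error coefficients $e_S$ for $|S|\geq 2$ are either suppressed or have slowly varying structure so that their contribution is well-captured by the Padé tail. I would handle this heuristically (as the theorem statement itself only claims "$\approx$"), pointing out that for $\epsilon\ll1$ the $\epsilon^k$ terms with $k\geq2$ are small and that the Padé form is chosen precisely to absorb the next-order structure; a fully rigorous error bound on the approximation is not needed for the statement as phrased, so I would keep that step at the level of a controlled expansion rather than a hard estimate.
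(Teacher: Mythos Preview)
Your treatment of the polynomial claim is correct and is essentially the paper's argument: expand the composition $\prod_k[(1-\epsilon)\Phi_k+\epsilon\mathcal{E}_k]$ by multilinearity, group by the number $k$ of error events, and observe that the expectation value is a real degree-$\nu$ polynomial in $\epsilon$. The interpolation remark is likewise standard.

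The closed-form step, however, has a genuine gap. Your single-error ansatz cannot produce the $2\epsilon-1$ denominator. Keeping only subsets with $|S|=1$ contributes $\nu$ terms, each weighted by the \emph{same} factor $\epsilon(1-\epsilon)^{\nu-1}$; with a common coefficient $\tilde\eta$ this gives $\nu\,\tilde\eta\,\epsilon(1-\epsilon)^{\nu-1}$, not a geometric series in $(1-\epsilon)$ with varying exponent. The expression $\tilde\eta\,\epsilon\sum_{j=0}^{\nu-1}(1-\epsilon)^{j}$ you write down does not arise from single-error terms, and even if granted it evaluates to $\tilde\eta\,[1-(1-\epsilon)^{\nu}]$, with no $2\epsilon-1$ anywhere --- the ``sign bookkeeping'' you invoke does not recover the stated form. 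The paper's route is different: it assumes the expectation values $\tr\{\mathcal{H}\,\mathcal{G}_k\,\rho\}$ are approximately equal to a common constant $\tilde\eta$ for \emph{all} error counts $k=1,\dots,\nu$ (not just $k=1$), and then sums the full series
\[
\sum_{k=1}^{\nu}(1-\epsilon)^{\nu-k}\epsilon^{k}
\;=\;\epsilon\,\frac{\epsilon^{\nu}-(1-\epsilon)^{\nu}}{2\epsilon-1},
\]
which is precisely where the $2\epsilon-1$ denominator originates; dropping $\epsilon^{\nu}$ for small $\epsilon$ then gives the displayed closed form, and the $(3,3)$ Pad\'e is taken of that expression. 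So the operative approximation is ``the $k$-error coefficients are roughly constant across all $k$,'' not ``single errors dominate.'' (A minor aside: your claim that $E(0)=0$ forces the numerator constant to vanish is also off --- $E(0)=E_0$ is the ideal value, and the Pad\'e template in the statement is for the deviation, with $E_0$ absorbed separately.)
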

\begin{proof}
	Applying $\nu$ gates in a sequence will result in the
	product of channels
	\begin{equation}
	\prod_{k=1}^\nu [ (1-\epsilon) \Phi_{k} + \epsilon \mathcal{E}_k ]
	 =
	 (1-\epsilon)^\nu \prod_{k=1}^\nu \Phi_k + \sum_{k=1}^\nu (1-\epsilon)^{\nu-k} \epsilon^k \mathcal{G}_k,
	\end{equation}
	where $\mathcal{G}_k$ is a channel which decomposes into the sum of all terms in which $k$ errors occur and
	$\prod_{k=1}^\nu \Phi_k $ is the ideal error-free circuit. 	We can introduce the circuit with no errors as
	$\mathcal{G}_0:= \prod_{k=1}^\nu \Phi_k$ which simplifies our formula as.
	\begin{equation}
	\prod_{k=1}^\nu [ (1-\epsilon) \Phi_{k} + \epsilon \mathcal{E}_k ]
	=
	 \sum_{k=0}^\nu (1-\epsilon)^{\nu-k} \epsilon^k \mathcal{G}_k,
	\end{equation}
	
	It follows that any expectation value (with respect to some observable $\mathcal{H}$) will be of the form
	\begin{equation}
	E = \tr\{ \mathcal{H}	\prod_{k=1}^\nu [ (1-\epsilon) \Phi_{k} + \epsilon \mathcal{E}_k ] \rho  \}
	=\sum_{k=0}^\nu (1-\epsilon)^{\nu-k} \epsilon^k \, \tr \{ \mathcal{H} \mathcal{G}_k  \rho \},
	\end{equation}

	therefore any expectation value can be expressed as a degree $\nu$ polynomial
	as a function of the error probability as
	\begin{equation}
	E (\epsilon)=
	E_{0} + \sum_{k=1}^\nu \epsilon^k \, E_{k},
	\end{equation}
	where $E_0$ is the ideal, noise-free expectation value
	and $E_k$ are polynomial coefficients.
	
	Let us now write (without loss of generality) that the expectation values
	are of the form $\tr \{\mathcal{H} \mathcal{G}_k  \rho \} =  \tilde{\eta} + \eta_k$,
	where $\tilde{\eta}$ is a mean value and $\eta_k$ expresses the deviation
	from the mean value. Let us assume that $\eta_k \ll \tilde{\eta}$, which
	in the case of the derangement operator is motivated by our argument in \ref{error_example},
	that most errors do not contribute and therefore $\tilde{\eta} \approx 0$.
	In this case we can evaluate the summation analytically for the mean value 
	\begin{equation}
	E (\epsilon) = E_0 + \sum_{k=1}^\nu (1-\epsilon)^{\nu-k} \epsilon^k \, \tr \{ \mathcal{H} \mathcal{G}_k  \rho \}
	= E_0 +  \tilde{\eta}	\sum_{k=0}^\nu (1-\epsilon)^{\nu-k} \epsilon^k + \mathcal{O}(\eta_k)
	=  E_0 + \tilde{\eta} \epsilon	\, \frac{ \epsilon^{\nu} -(1 - \epsilon)^{ \nu} }{ 2  \epsilon-1 } + \mathcal{O}(\eta_k).
	\end{equation}
	We can obtain a Pad{\'e} expansion of the above result at $\epsilon \approx 0$
	by neglecting the term $\epsilon^{\nu}$. For example the $(3,3)$ Pad{\'e} approximation follows as
	\begin{equation}
	E (\epsilon) \approx  E_0  - \tilde{\eta} \epsilon	\, \frac{(1 - \epsilon)^{ \nu} }{ 2  \epsilon-1 }
	\approx E_0  - \tilde{\eta}	\, 
	\frac{ \epsilon  + a(n) \epsilon^2 + b(n) \epsilon^3
	}{ 1 + c(n) \epsilon + d(n) \epsilon^2
	},
	\end{equation}
	where $a(n), b(n), c(n), d(n)$ are the Pad{\'e} expansion coefficients that depend on the number $n$ of gates $\nu$,
	for example
	\begin{eqnarray}
	a(n) = \frac{2 (62 + 11 n - 8 n^2 + n^3)}{5 (26 - 9 n + n^2)}.
	\end{eqnarray}
	Indeed this expansion is only valid when $\eta_k \approx 0$. Nevertheless, we propose to
	approximate the polynomial
	\begin{equation}
	E (\epsilon) \approx
	\frac{a_1 \epsilon  + a_2 \epsilon^2 + a_3 \epsilon^3
	}{ 1 + a_4 \epsilon + a_5 \epsilon^2
	},
	\end{equation}
	by fitting the coefficients $a_1, a_2, a_3, a_4, a_5$ to
	experimental data.
\end{proof}

\clearpage

\section{Hardware-native implementation of derangement circuits}

\begin{table*}[tb]
	\centering 
	\caption{\label{recompile}
			Number $\nu_e$ of entangling gates with 2(3)-qubit gates in bold (Roman) and number $\nu_s$ of single-qubit
			gates when recompiling elementary controlled-SWAP operations:
			fully equivalent (type A), local $SU(4)$ equivalent recompilation (type B)
			and recompilation with including the observable $\sigma$ (type C). 
			 $R_{\alpha}$ denote single-qubit rotation gates with $\alpha \in \{x,y,z\}$,
			while $\mathrm{C}[R_\alpha]$ denote controlled-rotations.
			$R_{xx}$ denotes the xx gate and $\text{pSWAP}$ is a parametrised SWAP gate. Three qubit
			gates in the last two rows are the xxx gate $R_{xxx}$ and the controlled-controlled-phase
			gate $\mathrm{CC}[\mathrm{P}]$.
			See Appendix~\ref{appendix_recompile} for more details.
	}
	\begin{tabular}{@{\hspace{2mm}}l@{\hspace{7mm}}c@{\hspace{1.5mm}}c@{\hspace{7mm}}c@{\hspace{1.5mm}}c@{\hspace{7mm}}c@{\hspace{1.5mm}}c@{\hspace{2mm}} }
		\\[-3mm]
		\hline\hline
		\\[-4.5mm]
		&	type   &   A  &   type & B	& type & C		\\[-1mm] 	
		native gateset  &   $\nu_e$ &   $\nu_s$		&  $\nu_e$ & $\nu_s$ &  $\nu_e$ & $\nu_s$
		\\[0.5mm] 
		\hline 
		\\[-3mm]	
		$\mathrm{C}[R_x]$, $R_{y,z}$ & \textbf{6} & 6 & \textbf{5} & 2 & \textbf{4} & 2 \\[1mm]
		$\mathrm{C}[R_z]$, $R_{x,y,z}$ & \textbf{6} & 15 & \textbf{5} & 4 & \textbf{4} & 4 \\[1mm]
		$R_{xx}$, $R_{y,z}$ & \textbf{6} & 11 & \textbf{5} & 6 & \textbf{4} & 6 \\[1mm]
		$\text{pSWAP}$, $R_{x,y,z}$ & \textbf{6} & 11 & \textbf{5} & 4 & \textbf{4} & 3 \\[1.5mm]
		\hline\\[-3mm]
		$R_{xxx}$, $R_{xx}$, $R_{y,z}$ & 3{+}\textbf{3} & 10 & 3 & 6 & 2{+}\textbf{1} & 6 \\[1mm]
		$\mathrm{CC}[\mathrm{P}]$, $R_{x,y}$, $\mathrm{C}[R_z]$ & 1{+}\textbf{2} & 6 &  1{+}\textbf{1} & 3 &  1{+}\textbf{2} & 3
		\\[2mm] \hline \hline
	\end{tabular} 
\end{table*}

\subsection{Recompiling controlled-SWAP gates\label{appendix_recompile}}

Recall that derangement circuits permute registers via Definition~\ref{derangement_def}.
Permuting two registers is performed by the SWAP operator, which decomposes into
a product of $N$ elementary, two-qubit SWAP gates as $\text{SWAP}_{N,N'} \cdots \text{SWAP}_{2,2'} \text{SWAP}_{1,1'}$,
where $N$ is the number of qubits in a  register.
Our aim is now to optimally recompile elementary, controlled-SWAP gates assuming various different hardware-native gatesets.

The controlled-SWAP, also called Fredkin, gate has been much investigated in the
literature, but mostly in the context of fault-tolerant quantum computing. For example,
ref.~\cite{nguyen2013space} provided a circuit that optimally implements the controlled-SWAP gate using
$8$ applications of CNOT gates and $9$ applications of $T$ gates.
Early works have suggested that if one has the ability to natively implement
\emph{any} two-qubit gate, then one can in principle implement the controlled-SWAP
gate with only $5$ applications of arbitrary two-qubit gates \cite{PhysRevLett.75.748,smolin1996five}.
These works have provided circuit representations using $7$ applications of controlled-X
rotation gates. We now use general techniques of ref.~\cite{khatri2019quantumassisted} to
recompile the controlled-SWAP gate and find more compact representations under the assumption that
only a limited set of hardware-native gates can be executed by the hardware. We also find 
analytical guarantees that the recompiler has found the most compact representation
possible. Results as the number $\nu_e$ of entangling gates and number $\nu_s$ of single-qubit
gates are summarised in Table~\ref{recompile}. While the corresponding detailed circuits
can be found online \cite{Note\thefootcombined,git_derangement_circuits}, we show the resulting circuit diagrams
in Fig.~\ref{circuits1}, Fig.~\ref{circuits2} and Fig.~\ref{circuits3}.

\textbf{Equivalence classes:} Before discussing details of the recompilation, let us recapitulate basic definitions. A unitary $U$
is fully recompiled into $V$ if their action on every quantum state in the Hilbert space $\mathcal{H}$
is identical, i.e., there exists a global phase factor freedom $\exists \phi \in \mathbb{R}$ such that
\begin{equation*}
	U | \psi \rangle = e^{-i\phi} V |\psi \rangle, \quad \quad \forall \psi \in \mathcal{H}.
\end{equation*}
We also consider the case of local $SU(4)$ equivalent recompilations $V'$ which, in contrast, 
result in equivalence only up to a local $SU(4)$ transformation as
\begin{equation*}
	\exists W \in SU(4):\quad 
	U | \psi \rangle = e^{-i\phi} W V' |\psi \rangle, \quad \quad \forall \psi \in \mathcal{H}.
\end{equation*}
We apply this definition to the case of elementary controlled-SWAP gates where the $SU(4)$ transformation
acts locally on the two swapped qubits.
The reason why these circuits are important is the following. We notice that after the derangement circuit
$D_n$ and the observable $\sigma$ in Fig.~\ref{schem}
we can apply any local unitary transformation $W$ to the quantum registers without changing the outcome of the measurement
on the ancillary qubit. This generally allows us to recompile those controlled-SWAP gates into more compact circuits
that are not followed by any further operations.
As such, when considering $n=2$ copies, the entire derangement circuit can be recompiled into these more
compact circuits.
Let us now introduce the 3 types of recompilations used in this work.

\textbf{Type A:}
We consider the fully equivalent recompilation which may be necessary when $n>2$ and when implementing controlled-SWAP
operations that are followed by other controlled-SWAP operators acting on the same registers. The first column of
Table~\ref{recompile} shows that we generally need $6$ two-qubit operations to implement an elementary controlled-SWAP gate
and corresponding compact circuits are illustrated in Fig.~\ref{circuits1} and Fig.~\ref{circuits2}.
We also need to consider fully equivalent recompilation if the observable is measured by post-selecting on the ancilla and sampling
the output of the registers -- and not by implementing the controlled-observable as in Fig.~\ref{schem}. The former scheme would
allow us to estimate multiple observables simultaneously.

\begin{figure*}[tb]
	\begin{centering}
		\includegraphics[width=0.9\textwidth]{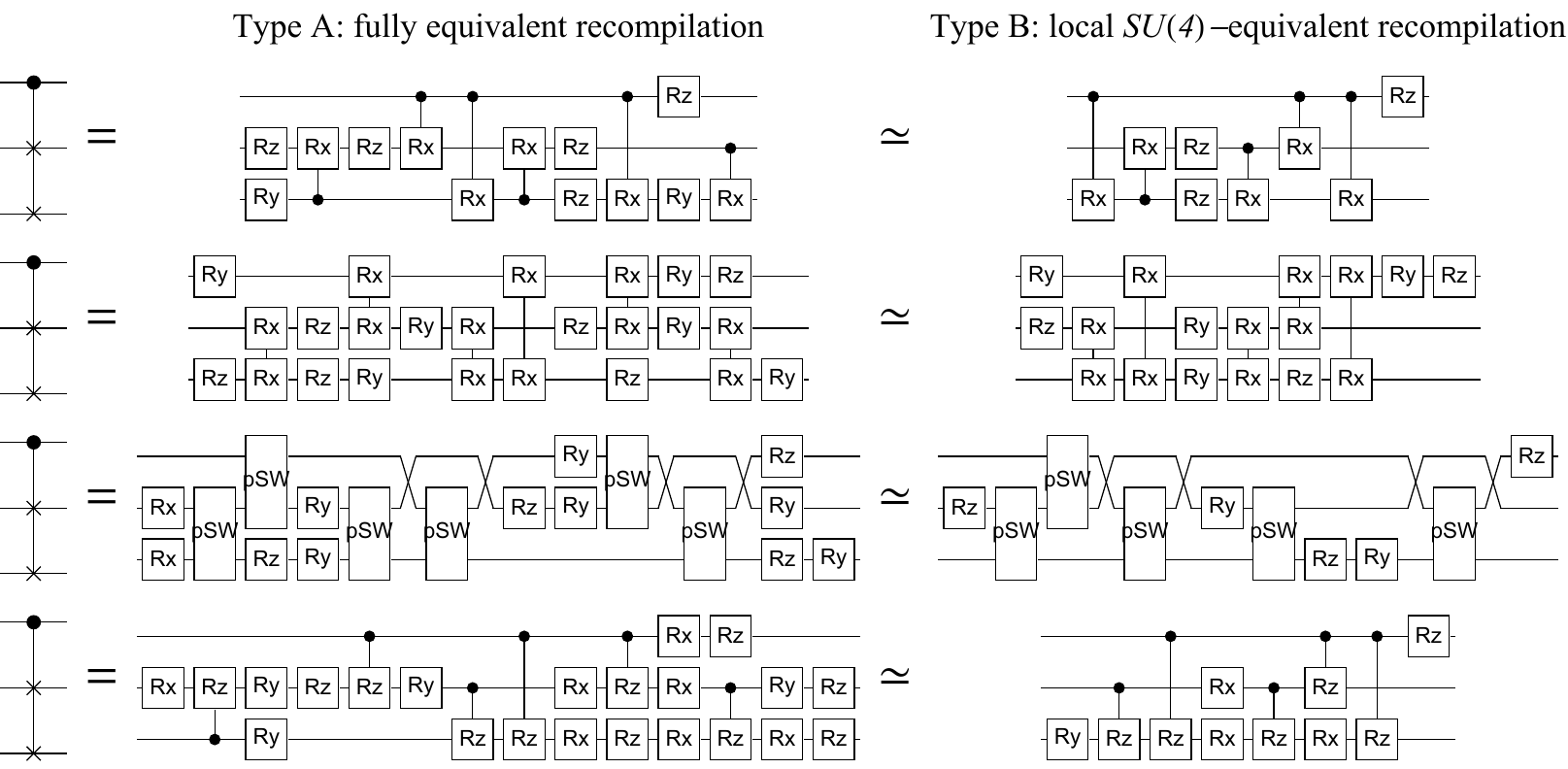}
		\caption{
			Recompiling the controlled-SWAP gate into hardware-native gates.
				In all circuits the the last $R_z$ rotation gate on the control qubit can be removed as it commutes with the
				controlled-SWAP gate and can be merged with the basis transformation of the ancilla qubit (Hadamard gate)
				immediately prior to measurement in Fig.~\ref{schem}.
			\label{circuits1}
		}
	\end{centering}
\end{figure*}

\textbf{Type B:}
If a controlled-SWAP gate is not followed by any other gate we only need to recompile up to a local $SU(4)$ freedom.
For example in case of $n=3$ copies and observable $\sigma=\mathrm{Id}$, the derangement circuit consists of two swaps
of pairs of registers as
$\mathrm{C}[\text{SWAP}_{2,3} \text{SWAP}_{1,2}]$. We need to consider Type A recompilation for $\mathrm{C}[\text{SWAP}_{1,2}]$,
and we can consider type B recompilation for $\mathrm{C}[\text{SWAP}_{2,3}]$, since the latter is not followed by any other operation
on the main registers. The second column of Table~\ref{recompile} shows that we generally need $5$ two-qubit operations to implement such an elementary gate.
The resulting compact circuits are illustrated in Fig.~\ref{circuits1} and Fig.~\ref{circuits2}.
We note that the entire $\mathrm{C}[\text{SWAP}_{2,3} \text{SWAP}_{1,2}]$
circuit could also be recompiled up to an $SU(8)$ freedom.

\textbf{Type C:}
Observables as Pauli strings $\sigma \in \{\mathrm{Id}_2, X, Y, Z\}^{\otimes N}$ act on some or all of the qubits
non-trivially. For example, consider $n=2$ copies and the Pauli string $X_4 Y_9$, in which case we can consider Type B recompilation
for all controlled-SWAP operators except for the ones that swap qubits $4$ with $4'$ and $9$ with $9'$ in the two registers.
Similarly, we need only recompile the
product of the elementary swap and the observable, $\mathrm{C}[X_4 \, \text{SWAP}_{4,4'}]$ and $\mathrm{C}[Y_9 \, \text{SWAP}_{9,9'}]$,
up to a local $SU(4)$ freedom. In the present work we fix an $X$ basis: one can thus implement
$\mathrm{C}[\text{SWAP}_{9,9'} Y_9]$ by first transforming the basis of qubits $9$ and $9'$ using single-qubit rotations.
The third column of Table~\ref{recompile} shows that we generally need $4$ two-qubit operations to implement such an elementary gate.
The resulting compact circuits are illustrated in Fig.~\ref{circuits3}.

\textbf{Gatesets:}
Let us start by defining single qubit $R_\alpha(\theta)$ rotation gates that depend on a parameter
that can be calibrated to any fixed value in experiments $-2\pi \leq \theta \leq 2\pi$ as
\begin{equation*}
	R_\alpha(\theta) := \exp[-i \frac{\theta}{2} \sigma_\alpha ] \quad \quad
	\text{with} \quad \quad
	\sigma_\alpha \in \{X,Y,Z\},
\end{equation*}
where $X$, $Y$ and $Z$ are Pauli matrices.

\emph{$\mathrm{C}[R_x]$ gates---}Let us first consider the aforementioned case of controlled-X rotation gates. We assume that the
hardware can natively implement single-qubit $Y$ and $Z$ rotations as well two-qubit
controlled-X rotations which we define as
\begin{equation*}
	\mathrm{C}[R_x(\theta)] := |0\rangle \langle 0| \otimes \mathrm{Id}_2 + |1\rangle \langle 1| \otimes R_x(\theta).
\end{equation*}
This unitary generates the CNOT gate when $\theta = \pi$. The controlled-SWAP gate can then be implemented via
$6$ ($5$) applications of the controlled-$x$ rotation gate in case of the fully (locally) equivalent recompilation
as illustrated in the first row of Fig.~\ref{circuits1}. Refer to the last row of Fig.~\ref{circuits1} for the
similar case of controlled $R_z$ rotation gates.

\begin{figure*}[tb]
	\begin{centering}
		\includegraphics[width=0.9\textwidth]{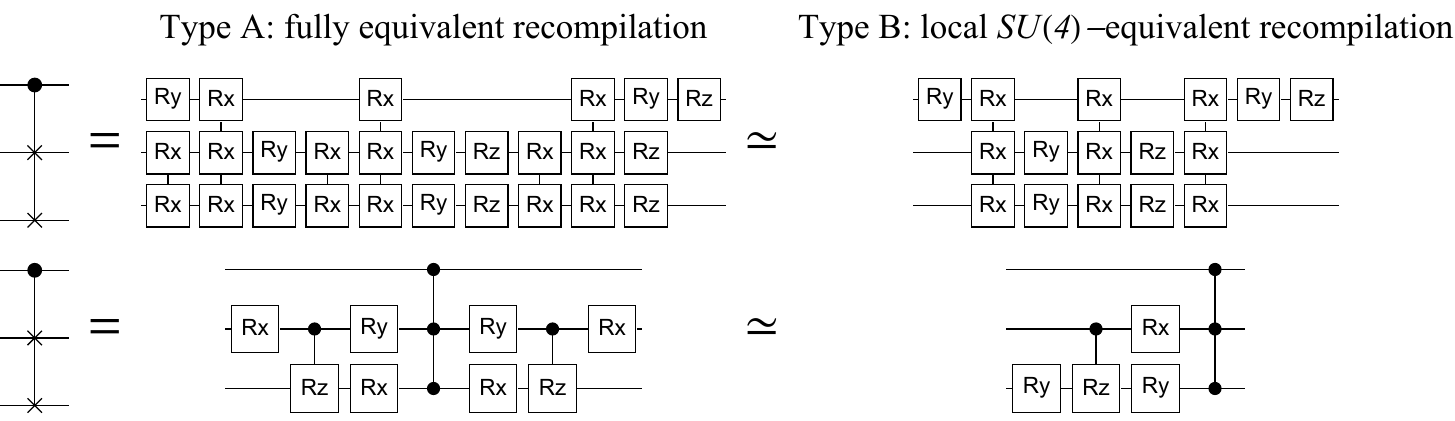}
		\caption{
			Recompiling the controlled-SWAP gate assuming that the hardware can natively
			implement 3-qubit gates, such as the $xxx$ gate (first row) and the
			controlled-controlled-phase gate (second row).
			\label{circuits2}
		}
	\end{centering}
\end{figure*}

\emph{$R_{xx}$ gates---}For the simulations in Fig.~\ref{groundstate} we assume that the hardware can natively
implement single-qubit $Y$ and $Z$ rotations as well as the two-qubit $XX$ gate of the form
\begin{equation*}
	R_{xx}(\theta) := \exp[-i \frac{\theta}{2} X \otimes X ],
\end{equation*}
which generates the M{\o}lmer-S{\o}rensen gate at $\theta=-\pi/2$. The controlled-SWAP gate can be implemented via
$6$ ($5$) applications of the XX rotation gate in case of the fully (locally) equivalent recompilation
as illustrated in the second row of Fig.~\ref{circuits1}.

\emph{$pSWAP$ gates---}We now consider an entangling gate that depends on two parameters $\theta_1$ and $\theta_2$
as
\begin{equation*}
	\text{pSWAP}[\theta_1,\theta_2] := \exp[-i \frac{\theta_1}{2}( X \otimes X + Y \otimes Y)  -i \frac{\theta_2}{2} Z \otimes Z],
\end{equation*}
which is typical to superconducting systems.
This gate is locally equivalent to the fermionic simulation gate from \cite{PhysRevLett.125.120504}
and generates at special angles many important gates, such as the SWAP gate. Although being a more general two-qubit
gate than the ones above, using the parametrised SWAP does not result in a significant improvement when recompiling
the controlled-SWAP gate: We still need
$6$ ($5$) applications of the entangling gate in case of the fully (locally) equivalent recompilation
as illustrated in the third row of Fig.~\ref{circuits1}.

\emph{$xxx$ gates---}Let us now turn to the question: can we obtain more compact representations
of the controlled-SWAP gate when we assume that the hardware can natively implement three-qubit gates.
Let us first consider the case when we allow both $xx$ and $xxx$ gates, where we define the latter as
\begin{equation*}
	R_{xxx}(\theta) := \exp[-i \frac{\theta}{2} X \otimes X \otimes X  ].
\end{equation*}
In this case we can analytically solve the recompilation problem by recalling that
the controlled-SWAP gate can be expressed as
\begin{equation*}
	\mathrm{C}[\text{SWAP}] = \exp[-i \frac{\pi}{8} (\mathrm{Id}_2 -Z) \otimes ( X \otimes X + Y \otimes Y + Z \otimes Z - \mathrm{Id}_2 \otimes \mathrm{Id}_2  )  ].
\end{equation*}
Since above all terms in the exponential commute, we can express this gate as the following series
of gates executed in arbitrary order: a $zxx$ gate, a $zyy$ gate, a $zzz$ gate, an $xx$ gate, a $yy$ gate,
a $zz$ gate and additionally a $z$ gate on the ancilla, which however can be removed as discussed below Fig.~\ref{circuits1}.
Indeed, our recompiler has found exactly this kind of circuit by mapping the $xxx$ gate to, e.g., the $zyy$ gate via single-qubit
rotations. Refer to the first row of Fig.~\ref{circuits2}. Since all these multi-qubit gates commute, we can order them such that
all the $xx$, the $yy$ and the $zz$ gates are at the end of the circuit. These gates then form a local $SU(4)$ unitary that
can be removed as discussed above. Indeed, our recompiler has found this solution when considering only locally
equivalent recompilations as illustrated in Fig.~\ref{circuits2}.
\emph{$\mathrm{CC}[P]$ gates---}Let us finally consider controlled-controlled phase gates, which have been
successfully implemented in experiments as native gates \cite{fedorov2012implementation}. Let us define this gate as
\begin{equation*}
	\mathrm{CC}[P] := \mathrm{diag}(1,1,1,1,1,1,1,-1).
\end{equation*}
These allow for surprisingly compact representations: the controlled-SWAP gate can be implemented using a single application
of the $\mathrm{CC}[P]$ gate plus $2$($1$) applications of controlled-Z rotation gates as illustrated in the second row
of Fig.~\ref{circuits2}

\begin{figure*}[tb]
	\begin{centering}
		\includegraphics[width=0.9\textwidth]{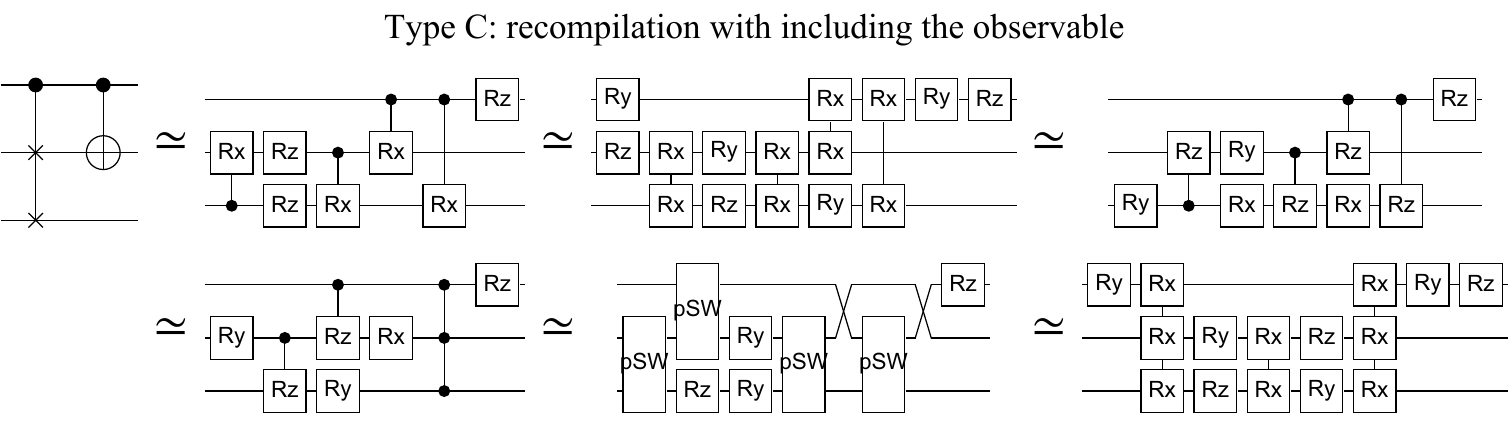}
		\caption{
			Type C: recompiling the product of the controlled-SWAP gate and the controlled observable $\sigma$ up to
			a local $SU(4)$ freedom on the swapped qubits.
			\label{circuits3}
		}
	\end{centering}
\end{figure*}

\subsection{Exploiting symmetries in derangement circuits \label{symmetries}}
As discussed in the main text, derangement circuits have a rapidly growing number of invariants
	when increasing the number of copies $n$ or the number of qubits $N$.
This includes the large number of distinct permutations from Definition~\ref{derangement_def}.
Let us illustrate how these symmetries can be exploited via the following three examples.

\textbf{Example 1:}
	In the first example we assume that the connectivity between registers is limited
	such that only nearest neighbour registers can be swapped. For example, this could be a quantum device
	with $n$ individual quantum processors arranged in a line. A derangement operation in this case is preferred
	that swaps only nearest-neighbour registers. For example, for $n=4$ we can first apply nearest neighbour SWAP operators
	between registers as
	$\text{SWAP}_{1,2}$ and $\text{SWAP}_{3,4}$ followed by $\text{SWAP}_{2,3}$. In contrast, a derangement
	which uses, e.g., $\text{SWAP}_{1,4}$ would not be supported natively by the device.
	Refer to \cite{Note\thefootcombined,git_derangement_circuits} for a demonstration of the various distinct derangement circuits.

\textbf{Example 2:}
In the second example we assume that arbitrary connectivity is available. In this case one 
can in principle implement any of the distinct derangement (permutation) patterns. Although
choosing and fixing any one of those is sufficient, it is also possible to randomly choose from these circuits
since noise may affect them differently. This may also help in suppressing asymmetries in
the potentially non-identical input density matrices thereby creating randomised, `average' density matrices
when $n$ is large.

Note that further invariants exist due to the fact that the $\sigma$ gate in Fig.~\ref{schem} can be applied
to any of the registers.

\textbf{Example 3:}	
In the third example let us consider an approach that can exploit symmetries in the derangement operator
in complete analogy with twirling techniques.
We notice that the controlled-derangement operator is a very
specific kind of operation: when the ancilla state is $|0\rangle$ then the registers are left invariant
and when the ancilla state is $|1\rangle$ then the registers are permuted.
We start by applying an operation $U$ before the derangement
operator, i.e., Pauli strings $P_k \in \{ \mathrm{Id}_2, X, Y, Z\}^{\otimes N}$ are applied to registers $k$
with  $U:= \otimes_{k=1}^n P_k$.
We can undo this operation after the derangement operator by first applying the anti-controlled Pauli string $U$.
This is then followed by the controlled Pauli string $U'$, where
in $U'$ we need to relabel the indexes $k$ according to the permutation $s(k)$ that the derangement
operator implements, i.e., $U':= \otimes_{k=1}^n P_{s(k)}$.

Applying the above (controlled) Pauli strings before (after) the controlled-derangement operator
does not affect the expectation-value measurement in an ideal scenario.
However, the Pauli strings in $U$ applied to the registers do reflect the errors that happen during the swap process,
thus randomly applying Pauli strings and averaging the measurement results can reduce and homogenise
the impact of errors that happen in the derangement circuit. Note that this is analogous to twirling techniques.

Let us illustrate this on the particular case of $n=2$ copies.
We randomly select two Pauli string $P_1,P_2 \in \{ \mathrm{Id}_2, X, Y, Z\}^{\otimes N}$,
where we apply $P_1$ to the first input state $\rho$ and apply $P_2$ to the second register.
Note that these are just single-qubit $X$, $Y$ and $Z$ operations (or the identity) applied
to individual qubits in the registers.
We then perform the controlled-derangement operator $D_2$ which swaps the two registers.
We can now undo the effect of Pauli strings by applying the anti-controlled Pauli string $P_1$ and
$P_2$ to registers $1$ and $2$, respectively. Since the derangement operator
swaps the two registers we apply the controlled $P_1$ operator to the second register and
the controlled $P_2$ operator to the first register.

\begin{figure*}[tb]
	\begin{centering}
		\includegraphics[width=0.45\textwidth]{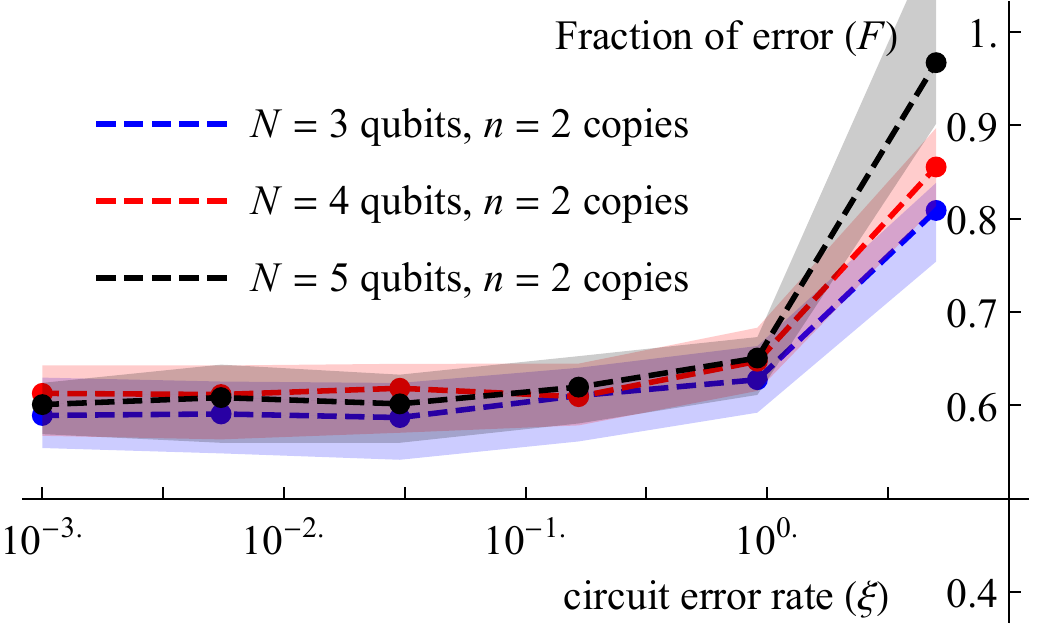}\hspace{10mm}
		\includegraphics[width=0.45\textwidth]{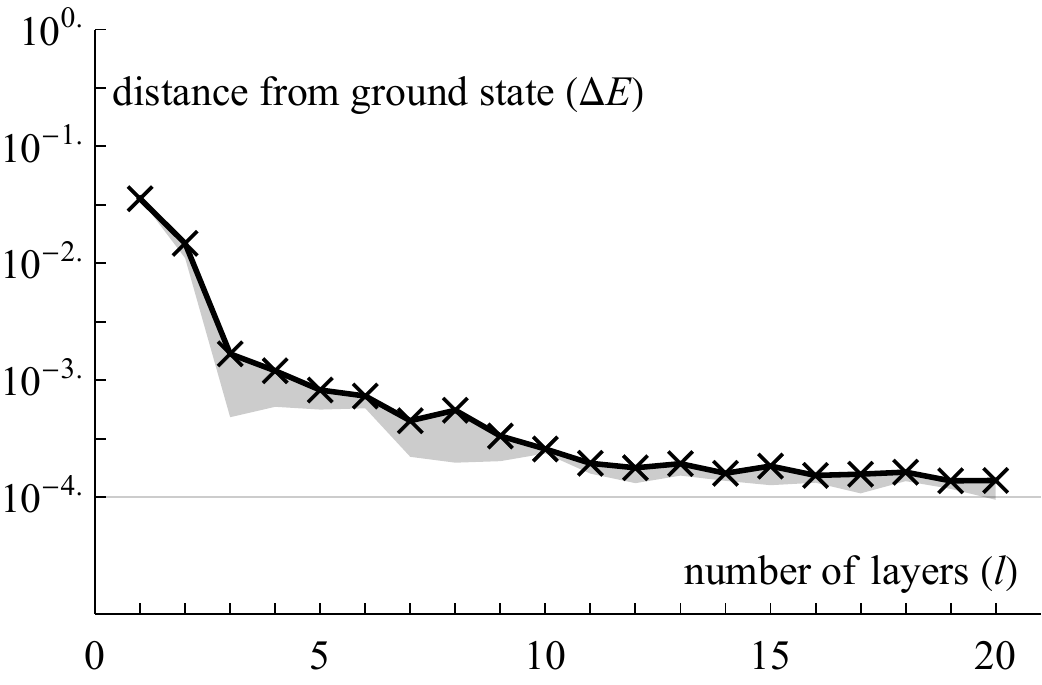}
		\caption{
				(left) Reducing errors in the derangement operator $D_2$ via generalised twirling operations as
				discussed in Appendix~\ref{symmetries}. Errors in measuring expectation values of 50 randomly
				selected observables $\sigma \in \{ \mathrm{Id}_2, X, Y, Z\}^{\otimes N} $ were computed
				with ($\tilde{\mathcal{E}}$) and without ($\mathcal{E}$) twirling. The median of the ratios $F$
				of the two errors $F:=\tilde{\mathcal{E}}/\mathcal{E}$ is plotted as a function of the
				number of expected errors in the derangement circuit ($\xi$) -- shading represents the quantiles $1/4$ and $3/4$.
				Twirling reduces 30-50\% of errors in the noisy derangement circuit, but it can be viewed as a worst-case
				scenario as discussed in Appendix~\ref{symmetries}.
				(right) Finding the ground state energy $E$ of the spin-ring Hamiltonian from Eq.~\ref{spin_ring}
				using the variational Hamiltonian ansatz with $l$ layers as discussed in Appendix~\ref{appendix_ground_state}.
				The ansatz parameters were optimised in 5 independent instances and the average of the distance
				from the ground state energy $\Delta E$ is plotted as a function of $l$. Shading represents the minimum
				of the 5 instances.
			\label{twirl_ground}
		}
	\end{centering}
\end{figure*}

We have simulated the above example assuming a system
of $N=3,4$ and $5$ qubits using recompiled controlled-SWAP operators from Appendix~\ref{appendix_ground_state}. We assume
the noise model and the same native gateset as in Appendix~\ref{appendix_ground_state}. We randomly generate input states $\rho$
and compute the error in the derangement
circuit as $\mathcal{E}= \tr[\rho^2 \sigma] / \tr[\rho^2] - \frac{2\mathrm{prob}_0-1}{2\mathrm{prob'}_0-1}$,
where we estimate the probabilities $\mathrm{prob}_0$ and $\mathrm{prob'}_0$ using the circuit in Fig.~\ref{schem}.
We then randomly select and apply 50 pairs of Pauli strings $P_1$ and $P_2$ and average the estimated probabilities;
we denote the resulting errors as $\tilde{\mathcal{E}}$. The ratio of the errors with and without twirling
$F:=\tilde{\mathcal{E}}/\mathcal{E}$
is plotted in Fig.~\ref{twirl_ground}(left).
Note that the controlled Pauli strings
$P_1$ and $P_2$ introduce additional noise when compared to just applying the plain controlled-derangement operator.
Nevertheless, Fig.~\ref{twirl_ground}(left) highlights that the above twirling scheme is still able to
reduce 30--50\% of errors in the practically most important region, i.e., when the circuit error rate is $\xi < 2$.

It is important to note that the simulations in Fig.~\ref{twirl_ground}(left) should be viewed as a worst-case scenario
for the following reasons.
(a) In the example we have considered $n=2$ copies, in which case we need $6$ entangling gates per qubit in a register
to fully recompile controlled-SWAP operators.
In the simulations we naively implemented the twirling technique resulting in $2$ additional entangling gates per qubit.
When we consider a larger number of copies, e.g., $n=5$, the overhead of the twirling technique remains $2$ 
entangling gates, but implementing the derangement operation requires proportionally more entangling gates. This leads to
a decreasing overhead of the twirling technique.
(b) As discussed above, the controlled-observable can be combined with elementary controlled-SWAP gates via recompilation resulting
in no increase in the number of two qubit gates in the derangement circuit. Similarly, we can recompile the entire, twirled
controlled-SWAP operator into one compact circuit. This will significantly reduce the gate-count overhead of the twirled circuit
thereby increasing the efficacy of the error reduction factor $F$.

\section{Numerical Simulations \label{numerics}}

\subsection{Simulations in Fig.~\ref{sampling}}

We consider an alternating-layer ansatz with 10 layers and 12 qubits as illustrated in Fig.~\ref{figansatz}.
The circuit consists of overall $372$ noisy gates and 
each two-qubit gate undergoes 2-qubit depolarising noise with $0.5\%$ probability while each single-qubit gate undergoes
depolarising noise with $0.05\%$ probability.
Each gate is parametrised and we have selected their parameters randomly.

We computed the density matrix of a single copy of the state $\rho$ and our derangement
circuit uses $n$ copies of this state as input. While we have numerically verified full derangement circuits
with smaller density matrices, here we aim to efficiently compute approximation errors. In particular,
we only need to store a single copy of $\rho$ and perform computations to obtain
$\tr[\rho^n \sigma]$ and $\tr[\rho^n]$ for randomly selected Pauli strings $\sigma$.
Furthermore, we diagonalise $\rho$ and use its eigenvalues for computing
R{\'e}nyi entropies exactly while we use its dominant eigenvector $| \psi \rangle$ to
determine the  expectation value $\langle \psi | \sigma | \psi \rangle$.

We remark here that we compute approximation errors as the deviation from the expectation value obtained from
the dominant eigenvector $\langle \psi | \sigma | \psi \rangle$ and do not directly compare to noise-free
computations due to the coherent mismatch discussed in Appendix~\ref{coherrorsec} (which becomes negligible for large systems
and can be addressed with standard techniques).
In our simulations this coherent mismatch was below $10^{-4}$,
and could be corrected with usual techniques that aim to suppress coherent errors
as discussed in the main text.

\subsection{Simulations in Fig.~\ref{extrapolation}}

We have simulated a derangement circuit that takes $n=3$ copies
of a noisy $4$-qubit state as input and the controlled SWAP operators 
also undergo depolarising noise (with a probability of $10^{-3}$) as shown in Fig.~\ref{figansatz2}.
The input state is produced by a parametrised $4$-qubit circuit
and we have selected $50$ sets of parameters randomly and
performed extrapolation techniques on each instance as shown in Fig.~\ref{extrapolation}.

\begin{figure*}[tb]
	\begin{centering}
		\includegraphics[width=0.6\textwidth]{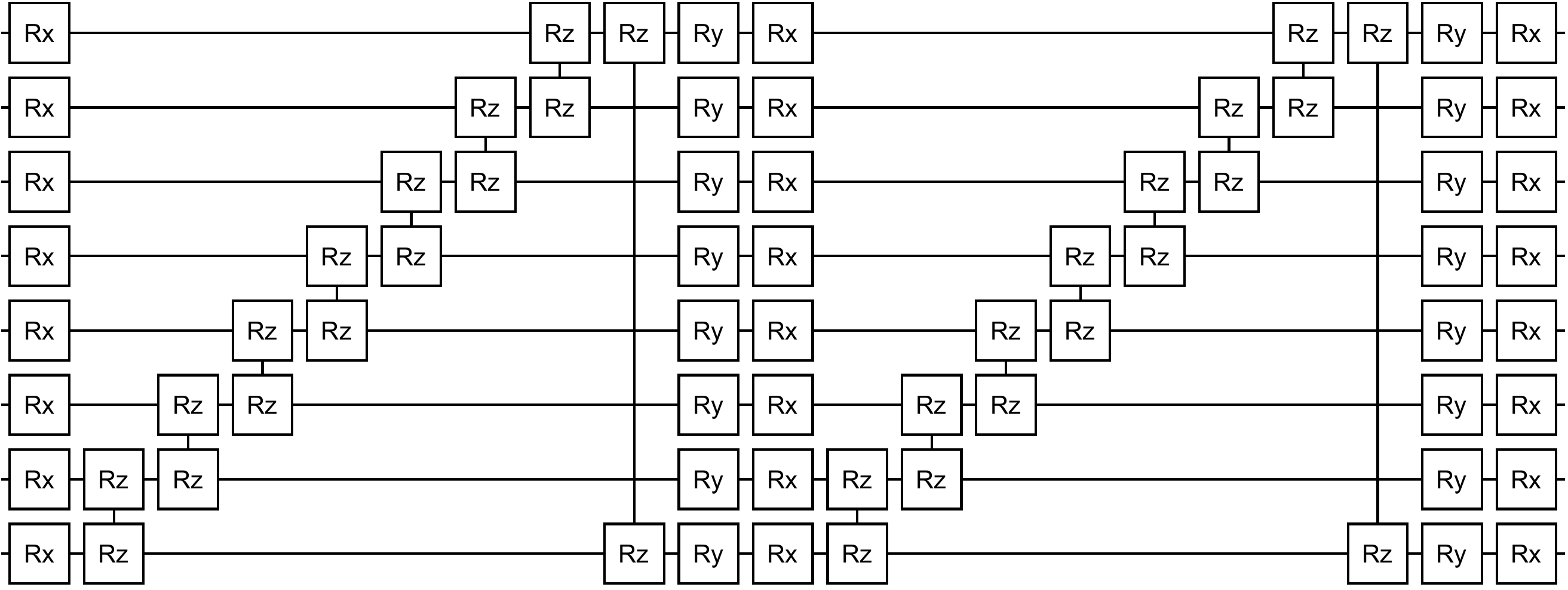}
		\caption{
			Example of a 2-block ansatz circuit of 8 qubits.
			We used a 10-block circuit of 12 qubits in our simulations
			in Fig.~\ref{sampling} consisting of overall $372$ noisy gates.
			Each two-qubit gate undergoes
			2-qubit depolarising noise with $0.5\%$ probability and each single-qubit gate undergoes
			depolarising noise with $0.05\%$ probability.
			\label{figansatz}
		}
	\end{centering}
\end{figure*}

\subsection{Ground state simulation in Fig.~\ref{groundstate}\label{appendix_ground_state}}

We consider the spin-ring Hamiltonian in Eq.~\ref{spin_ring} and aim to determine its ground state
using the Variational Hamiltonian Ansatz \cite{cerezo2020variationalreview,endo2020hybrid,bharti2021noisy,farhi2014quantum,pagano2020quantum,arute2020quantum,babbush2018low,wecker2015progress,cade2020strategies,PRXQuantum.1.020319}. The ansatz consist of alternating layers of time
evolutions under the Hamiltonians which we define as
\begin{equation*}
	\mathcal{H}_{0} :=  \sum_{k = 1 }^N \omega_k Z_k
	\quad \quad \text{and}  \quad \quad 
	\mathcal{H}_{1} := J \, \sum_{k \in \text{ring}(N)}  \vec{\sigma}_k \cdot \vec{\sigma}_{k+1},
\end{equation*}
via $\mathcal{H} = \mathcal{H}_0 + \mathcal{H}_1$,
as illustrated in Fig.~\ref{groundstate}(a).
We can analytically determine and start the optimisation from the ground state $|\psi_{init}\rangle$ of $\mathcal{H}_{0}$
as a computational basis state as in, e.g.,  \cite{pagano2020quantum}. We then apply alternating layers
of the parametrised evolutions $A(\gamma_k):= e^{-i \gamma_k \mathcal{H}_1 }$ and $B(\beta_k):=  e^{-i \beta_k \mathcal{H}_0 }$
to this initial state as
\begin{equation*}
	|\psi(\underline{\beta}, \underline{\gamma}) \rangle =  
	 B(\beta_l)  A(\gamma_l)  \dots A(\gamma_2) B(\beta_1)  A(\gamma_1)  \,	|\psi_{init}\rangle,
\end{equation*}
using overall $l$ layers. The parameters $\underline{\beta}$ and $\underline{\gamma}$ are optimised by a
classical co-processor such that the estimated energy $E:=\tr[\rho \mathcal{H}]$ is minimised. 
We consider a quantum device that can natively implement single-qubit $R_y$ and $R_z$ rotation
gates as well as XX gates of the form $\exp[-i\theta X_j X_k]$ between any pairs $j \neq k$ of qubits.
This gateset is comparable to ion-trap systems \cite{pogorelov2021compact} and the above discussed
ansatz can be implemented efficiently the following way. The evolution under $\mathcal{H}_1$ is Trotterised
such that every term in the Hamiltonian is implemented independently via a gate of the form, e.g., 
$\exp[-i\theta X_j X_k]$. Single-qubit rotations are used to implement gates of the form 
$\exp[-i\theta Z_j Z_k]$ by rotating the $Z$ basis to an $X$ basis.
It follows that the ansatz circuit with $l$ layers can be implemented via $3Nl$
applications of the $XX$ entangling gates.

	The number of layers $l$ to reach a given precision with respect to the ground state
	depends on the particular Hamiltonian and on the number of qubits \cite{PRXQuantum.1.020319,zhou2020quantum}.
	We fix every degree of freedom and determine the number of layers required to reach a difference
	$\Delta E = 10^{-4}$ to the ground-state energy. We set $N=6$ qubits, choose a coupling constant $J=0.1$  and
	randomly generate and fix the on-site energies as $\underline{\omega} =
	(-0.70983, -0.0517, 0.9065, -0.9265, 0.0950, -0.49597)$.
	We optimise the ansatz parameters
	$\underline{\beta}$ and $\underline{\gamma}$ by applying 1000 iterations of natural gradient evolution
	to a set of randomly chosen initial parameters in the vicinity of the parameters that approximate the
	adiabatic evolution.	
	We perform 5 independent optimisations and plot the average and the minimum of the distance $\Delta E$
	in Fig.~\ref{twirl_ground} (right). The average and the minimum follow the expected scaling \cite{PRXQuantum.1.020319,zhou2020quantum}
	and the minimum reaches $\Delta E = 10^{-4}$ at $l=20$ layers. Furthermore, regarding the ansatz depth we find that $l=20$ is comparable to
	results of refs.~\cite{niu2019optimizing,zhou2020quantum,PRXQuantum.1.020319}. It follows that we can implement
	the ansatz circuit with $3Nl = 360$ applications of the native entangling gate.
	Let us remark that even though one may be able to find more compact ans{\"a}tze \cite{grimsley2019adaptive}, the VHA
	has the strong benefit of being informed by
	the problem structure -- and it is guaranteed to find the ground state for an increasing depth due to its
	convergence to an adiabatic evolution \cite{farhi2014quantum}. Furthermore, even when using more compact ans{\"a}tze,
	it is generally expected that the depth of the computation needs to grow when increasing the scale of the computation
	as discussed in the main text. 

We assume the following noise model. Single-qubit gates are followed by dephasing noise with probability
$\epsilon$ and damping (relaxation) noise with a small probability $0.1\epsilon$. We also assume that
the experimentalist can amplify this noise by increasing the value of $\epsilon$. Furthermore, the qubits also
undergo a small depolarising noise with probability $0.07 \epsilon$ but we assume this noise cannot
be amplified by the experimentalist. The ratio of the non-extrapolatable noise to the total gate error rate
can be expressed via the probabilities
\begin{equation*}
	\frac{
		\mathrm{Prob}(\text{non-extrapolatable error})
	}{
		\mathrm{Prob}(\mathrm{error})
	}
= \frac{
	 0.07\epsilon
}{
	1-(1-\epsilon)(1-0.1\epsilon)(1-0.07\epsilon) 
}
\approx 0.06 + 0.009 \epsilon + \mathcal{O}(\epsilon^2).
\end{equation*}
We assume the same noise model in case of two-qubit gates but
with all probabilities magnified by a factor of $5$, i.e., $\epsilon \rightarrow 5 \epsilon$.

In selecting an error model, it was important to include key characteristics of real systems while retaining the ability to perform efficient simulations.  The model chosen exhibits the key elements of dephasing, damping and depolarising, and therefore captures the core characteristics of systems such as ion-traps, encompassing finite T2 relaxation (via dephasing), T1 relaxation (via damping) and imperfect control, heating etc. (via depolarisation). It is typical in ion traps, superconducting systems, and other platforms that single-qubit gate infidelities are significantly less severe than those of two qubit gates and therefore this characteristic was incorporated. Moreover, no real system can be expected to support perfect extrapolation as this implies flawless scaling of all error contributions; here the chosen model makes the assumption that the non-extrapolatable component is small; this is favourable to established extrapolation techniques and therefore provides a rigorous test for our new protocol.

While the resulting comparison is therefore physically plausible, it is worth noting that the ESD technique should also be expected to be robust over a wide variety of other noise models. Specifically, the theoretical error bounds in Result 1 depend only on the eigenvalue distribution of the density matrix; indeed, additional simulations were performed (not reported here) using a depolarising noise model that confirm these theoretical expectations.

\begin{figure*}[tb]
	\begin{centering}
		\includegraphics[width=0.95\textwidth]{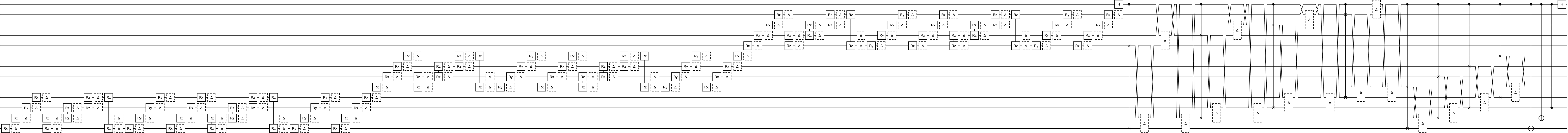}
		\caption{
			Circuit that we simulate in Fig.~\ref{extrapolation}.
			A controlled SWAP gate acting on qubits $k,l$ and $m$
			are followed by two qubit depolarisations between
			qubits $k,l$ and  $k,m$ and $l,m$. A variant in which
			damping errors follow the depolarisations effectively 
			resulted in the same error mitigation performance.			
			\label{figansatz2}
		}
	\end{centering}
\end{figure*}

In the present example we consider $N=6$ qubits and $n=2$ copies, i.e., we need to simulate the density matrix
of $13$ qubits which is equivalent to a $26$-qubit pure-state simulation.
We recompile the derangement circuit into hardware native gates as discussed in Appendix~\ref{appendix_recompile}:
we only need to recompile the elementary controlled-SWAP gates up to a local $SU(4)$ freedom (that acts on the swapped qubits).
When estimating the probability $\prob'$ we thus need to use type B circuits from the second column of Table~\ref{recompile},
see also second column in Fig~\ref{circuits1}. Implementing the corresponding derangement circuit thus requires overall $5N$ applications of the
entangling gate. When estimating the probability $\prob$ we either use circuits of type B or type C depending on whether the
observable acts on the particular qubit. For example, when estimating the expectation value of the observable $Z_1$,
we use type B circuits for all controlled-SWAP gates except the one that swaps the first qubits in both registers as
$\text{SWAP}_{1,1'}$. For the latter we use the type C circuit after rotating the basis of qubits $1$ and $1'$ so that
that the observable is effectively mapped $Z_1 \rightarrow X_1$, refer to third column in Table~\ref{recompile}
and to Fig~\ref{circuits3}. Thus estimating the probability $\prob$ for a single-qubit observable requires $5(N-1)+4 = 29$
applications of the native entangling gates while in case of two-qubit observables we need $5(N-2)+8 = 28$ entangling gates.
Note that estimating expectation values of non-local observables as $\sigma \in \{X,Y,Z\}^{\otimes N}$ would be more
noise-robust as we would only need $4N$ entangling gates to implement the derangement circuit.

Orange diamonds in Fig.~\ref{groundstate}(c) show the performance of zero noise extrapolation using a polynomial fitting:
this technique reduced close to
$85\%$ of errors in the region where the circuit error rate is not too large, i.e., $\xi < 1$. Linear and exponential fits were
also implemented, but polynomial fitting slightly outperformed exponential and linear fits.
Magenta dots in Fig.~\ref{groundstate}(c) show the performance of the noisy derangement circuit. Errors in the derangement circuit
were amplified and extrapolated using polynomial fitting, see black crosses in Fig.~\ref{groundstate}(c).
The extrapolation significantly reduces errors in the derangement circuit, thus closely approximating the performance of
noiseless derangement circuits (dashed blue line).
Note that even without extrapolation, the derangement circuit can reduce the errors by orders of magnitude, i.e., 
compare red squares with magenta dots. In all cases polynomial fitting was preformed via a least squares fitting
of a degree $3$ polynomial using $6$ estimated points in the interval between $\epsilon$ and $2\epsilon$.

	Let us finally illustrate how connectivity constraints may affect the ESD approach via the following simple
	example.
	Let us assume that qubits form a $2 \times (N+1)$ array and nearest neighbour interactions are possible.
	The qubits in positions $(0,1\rightarrow N)$ and $(1,1\rightarrow N)$ are assigned to two copies of the $N$-qubit state,
	while we assign the qubit $(0,0)$ to the ancilla. Once the main computation is
	done, one can implement the controlled-SWAP operator between qubits $(0,1)$ and $(1,1)$ controlled on the
	ancilla $(0,0)$ using only nearest neighbour interactions via Table~\ref{recompile}.
	We then move the ancilla to the next position by swapping qubits $(0,0)$ and $(0,1)$. This now allows
	us to apply the next controlled-SWAP operator between qubits $(0,2)$ and $(1,2)$ controlled on the
	ancilla $(0,1)$. Repeating this procedure for all qubits in the registers allows us to implement the
	derangement circuit with an overhead of only 1 extra two-qubit SWAP gate per controlled-SWAP operation.
	It is straightforward to generalise this idea to arbitrary numbers of copies or to ``parallelising''
	the process by distributing the ancilla among many qubits via a GHZ state. On the other hand, when
	implementing the ansatz circuit from Fig.~\ref{groundstate}, one needs to apply $2N$ SWAP gates to
	be able to entangle the first and last qubits in the register as required for the spin-ring Hamiltonian.
	This increases the number of entangling gates in the ansatz circuit from $3Nl$ to $9Nl$. Thus in such a
	scenario connectivity constraints would work in our favour.

\end{document}